\documentclass[a4paper, reqno, 11pt]{amsart}        

\usepackage{amssymb}

\usepackage{epsfig}  		
\usepackage{epic,eepic}       

\parskip=3pt

\textheight 24cm
\textwidth 16.5cm
\oddsidemargin 0pt
\evensidemargin 0pt
\topmargin -20pt

\setcounter{tocdepth}{3}

\let\oldtocsection=\tocsection
 
\let\oldtocsubsection=\tocsubsection
 
\let\oldtocsubsubsection=\tocsubsubsection
 
\renewcommand{\tocsection}[2]{\hspace{0em}\oldtocsection{#1}{#2}}
\renewcommand{\tocsubsection}[2]{\hspace{1em}\oldtocsubsection{#1}{#2}}
\renewcommand{\tocsubsubsection}[2]{\hspace{2em}\oldtocsubsubsection{#1}{#2}}

\usepackage{amsfonts}
\usepackage{amsthm}
\usepackage{amsmath}
\usepackage{amscd}
\usepackage[latin2]{inputenc}
\usepackage{t1enc}
\usepackage[mathscr]{eucal}
\usepackage{indentfirst}
\usepackage{graphicx}
\usepackage{graphics}
\usepackage{pict2e}
\usepackage{epic}
\numberwithin{equation}{section}
\usepackage[margin=2.9cm]{geometry}
\usepackage{hyperref}
\usepackage{dsfont}
\usepackage{csquotes}
\usepackage{stmaryrd}
\usepackage{blindtext}
\usepackage{tikz-cd}
\usepackage{mathrsfs}
\usepackage{cite}
\usepackage{mathtools}
\usepackage{extarrows}
\usepackage{epstopdf}
\usepackage[T1]{fontenc}
\usepackage{braket}

\theoremstyle{definition}
\newtheorem{definition}[equation]{Definition}
\newtheorem{example}[equation]{Example}
\newtheorem{proposition}[equation]{Proposition}
\newtheorem{theorem}[equation]{Theorem}
\newtheorem{remark}[equation]{Remark}

\newtheorem{corollary}[equation]{Corollary}

\numberwithin{equation}{section}

\newcommand{\midwedge}{\text{\Large$\wedge$}}
\newcommand{\midodot}{\text{\Large$\odot$}}

\newcommand{\be}{\begin{equation}}
\newcommand{\ee}{\end{equation}}
\def\beqa{\begin{eqnarray}}
\def\eeqa{\end{eqnarray}}
\def\bean{\begin{eqnarray*}}
\def\eean{\end{eqnarray*}}

\newcommand{\R}{\mathbb{R}}

\newcommand{\de}{\mathrm{d}}
\newcommand{\eqn}[1]{(\ref{#1})}
\newcommand{\del}{\partial}

\newcommand{\De}{\mathrm{D}}

\newcommand{\IZ}{\mathbb{Z}}
\newcommand{\IC}{\mathbb{C}}

\newcommand{\IR}{\mathbb{R}}

\newcommand{\IT}{\mathbb{T}}
\newcommand{\IS}{\mathbb{S}}
\newcommand{\IX}{\mathbb{X}}

\newcommand{\frg}{\mathfrak{g}}

\newcommand{\frd}{\mathfrak{d}}
\newcommand{\frm}{\mathfrak{m}}

\renewcommand{\Im}{\ensuremath{\mathfrak{Im}}}
\renewcommand{\Re}{\ensuremath{\mathfrak{Re}}}

\newcommand{\cW}{{\mathcal W}}
\newcommand{\cN}{{\mathcal N}}

\newcommand{\cS}{{\mathcal S}}

\newcommand{\cH}{{\mathcal H}}
\newcommand{\cA}{{\mathcal A}}
\newcommand{\cE}{{\mathcal E}}

\newcommand{\cQ}{{\mathcal Q}}

\newcommand{\cF}{{\mathcal F}}

\newcommand{\cK}{{\mathcal K}}

\newcommand{\cG}{{\mathcal G}}

\newcommand{\ccA}{{\mathscr A}}

\newcommand{\ccC}{{\mathscr C}}

\newcommand{\ccL}{{\mathscr L}}

\newcommand{\ccT}{{\mathscr T}}

\newcommand{\sfA}{{\mathsf{A}}}
\newcommand{\sfe}{{\mathsf{e}}}
\newcommand{\sfa}{{\mathsf{a}}}

\newcommand{\sfD}{{\mathsf{D}}}

\newcommand{\sfg}{{\mathsf{g}}}

\newcommand{\sfH}{{\mathsf{H}}}
\newcommand{\sfG}{{\mathsf{G}}}
\newcommand{\sfi}{{\mathsf{i}}}
\newcommand{\sfL}{{\mathsf{L}}}
\newcommand{\sfO}{{\mathsf{O}}}
\newcommand{\sfx}{{\mathsf{x}}}
\newcommand{\sfy}{{\mathsf{y}}}
\newcommand{\sfz}{{\mathsf{z}}}
\newcommand{\sfw}{{\mathsf{w}}}

\newcommand{\unit}{\mathds{1}}   			

\begin{document}

\title[Born Sigma-Models for Para-Hermitian Manifolds]{Born Sigma-Models for Para-Hermitian Manifolds \\[5pt] and Generalized T-Duality}

\author[V.~E.~ Marotta]{Vincenzo Emilio Marotta}
\address[Vincenzo Emilio Marotta]
{Department of Mathematics and Maxwell Institute for Mathematical
  Sciences\\ Heriot-Watt
  University\\ Edinburgh EH14 4AS\\ United Kingdom}
\email{vm34@hw.ac.uk}

\author[R.~J. Szabo]{Richard J.~Szabo}
  \address[Richard J.~Szabo]
  {Department of Mathematics, Maxwell Institute for Mathematical Sciences and The Higgs Centre for Theoretical Physics\\
  Heriot-Watt University\\
  Edinburgh EH14 4AS \\
  United Kingdom}
  \address{Dipartimento di Scienze e Innovazione
  Tecnologica\\ Universit\`a del Piemonte Orientale, Alessandria, Italy\\ and
Arnold--Regge Centre, Torino, Italy
  }
  \email{R.J.Szabo@hw.ac.uk}

\vfill

\begin{flushright}
\footnotesize
{\sf EMPG--19--21}
\normalsize
\end{flushright}

\vspace{1cm}

\begin{abstract}
We give a covariant realization of the doubled sigma-model formulation of
duality-symmetric string theory within the general framework of para-Hermitian geometry.
We define a notion of generalized metric on a para-Hermitian
manifold and discuss its relation to Born
geometry. We show that a Born
geometry uniquely defines a worldsheet
sigma-model with a
para-Hermitian target space, and we describe its Lie algebroid
gauging as a means of recovering the conventional sigma-model
description of a physical string
background as the leaf space of a foliated para-Hermitian
manifold. Applying the Kotov-Strobl gauging leads to a generalized notion of T-duality
when combined with transformations 
that act on Born geometries. We obtain a geometric interpretation of the
self-duality constraint that halves the degrees of freedom in doubled
sigma-models, and we give geometric characterizations of 
  non-geometric string backgrounds in this setting. We illustrate
  our formalism with detailed worldsheet descriptions
  of closed string phase spaces, of doubled groups where our notion of
  generalized T-duality includes non-abelian 
  T-duality, and of doubled nilmanifolds.
\end{abstract}

\maketitle

{\baselineskip=12pt
\tableofcontents
}

\bigskip

\section{Introduction}

Para-Hermitian geometry offers a simple yet effective mathematical
framework for the description of generalized flux compactifications of
string theory and the geometry underlying double field theory. Its modern
inspirations come from its complex analogue --- Hermitian geometry --- and
the differential geometry of exact Courant algebroids which are central to
generalized geometry and its applications to supergravity. Let us begin by recalling
some basic concepts surrounding Courant algebroids and their
counterparts in para-Hermitian geometry which will set the stage for
the investigation carried out in this paper.

\medskip

\subsection{Supergravity on Exact Courant Algebroids} ~\\[5pt]
Exact Courant algebroids \cite{Courant1990, Weinstein1997,gualtieri:tesi,
  Severa-letters} were originally introduced in \cite{Courant1990} to give a geometric interpretation of Dirac's theory of
constrained dynamical systems. The original example considered in
\cite{Courant1990} comprised the {standard Courant algebroid} or
{generalized tangent bundle} 
$$
\IT \cQ=T\cQ \oplus T^*\cQ
$$ 
over a smooth manifold $\cQ$ with the natural {Courant bracket}
$$
\llbracket X+ \xi, Y + \nu \rrbracket= [X,Y]+ \pounds_X \nu -\pounds_Y
\xi -\tfrac{1}{2}\,\de (\iota_X \nu -\iota_Y \xi)\ ,
$$ 
for all vector fields
$X, Y \in \mathsf{\Gamma}(T\cQ)$ and 1-forms $\xi, \nu \in \mathsf{\Gamma}(T^*\cQ)$;
here $[X,Y]$ denotes the usual Lie bracket of vector fields, $\pounds_X$ denotes the Lie derivative in the direction of $X$, and $\iota_X$ is the contraction with $X$.
The Courant bracket is skew-symmetric but violates the Jacobi identity
by the exterior derivative of the associated Nijenhuis tensor. 

The general definition of a
Courant algebroid is modeled on the properties of $\IT \cQ$~\cite{Weinstein1997}: A Courant
algebroid is a vector bundle $E\rightarrow \cQ$ of even rank
endowed with a fiberwise split signature metric $\eta$ together with a bracket $\llbracket\,\cdot\, , \, \cdot\, \rrbracket_E$ called a
{Dorfman bracket}, whose skew-symmetrization is called a Courant
bracket, which satisfies the Jacobi identity and is compatible with
the split signature metric. It is also equiped with an {anchor map} $\rho:
E \rightarrow T\cQ$ which is a bracket-preserving homomorphism between the Dorfman
bracket on $E$ and the Lie bracket on $T\cQ,$ such that the Dorfman bracket satisfies an anchored Leibniz rule.

Exact Courant algebroids are those Courant algebroids $E\to\cQ$ which also fit into a short exact sequence of vector bundles 
\be \label{shortcou}
0 \longrightarrow T^*\cQ \xlongrightarrow{} E \xlongrightarrow{} T\cQ \longrightarrow 0 \ .
\ee
Isotropic splittings of this short exact sequence are isomorphic to the standard Courant algebroid $\IT \cQ= T\cQ \oplus T^*\cQ$ with Dorfman bracket
$$
\llbracket X+ \xi, Y + \nu \rrbracket_E^H= [X,Y]+ \pounds_X \nu  -\iota_Y \,
\de \xi + H(X,Y) 
$$ 
which is `twisted' by a closed 3-form $H\in\Omega^3(\cQ)$ whose de~Rham
class $[H]\in \mathsf{H}^3(\cQ, \IR)$ is called the {{\v
    S}evera class} of $E$ \cite{Severa-letters}. Each distinct isotropic
splitting of \eqref{shortcou} is associated with a different 3-form
$H$ up to $B$-transformations by closed 2-forms. 
Generic $B$-transformations~\cite{gualtieri:tesi, Severa2002} of exact Courant
algebroids are generated by arbitrary 2-forms $b\in \Omega^2(\cQ)$ and
preserve the fiberwise metric. They also preserve the Dorfman bracket
if $b$ is a closed 2-form. When $b$ is not closed the corresponding
Courant algebroid with Dorfman bracket twisted by $H+\de b$ is
associated with a different splitting of the short exact sequence
\eqref{shortcou}, but with the same {\v S}evera class. In the language of string theory, an isotropic
splitting is a choice of Kalb-Ramond field on $\cQ$ for which
the string background carries NS--NS $H$-flux.

Another important structure that can be defined on any Courant
algebroid is a {generalized metric}. A generalized metric can be
regarded as a choice of a sub-bundle of $E$ which is positive-definite
in the fiberwise split signature metric on $E.$ This is equivalent to
defining a fiberwise Riemannian metric on
$E$~\cite{gualtieri:tesi}. For the particular case of an exact Courant
algebroid, any generalized metric corresponds to a pair $(g, b)$ of a
Riemannian metric $g$ on $\cQ$ and a 2-form $b \in \Omega^2(\cQ),$ which
are dynamical fields in the bosonic sector of the low-energy effective supergravity
theory on $\cQ$ underlying the string theory. Type~II supergravity was
formulated in terms of the generalized geometry of exact Courant
algebroids in~\cite{Grana2008,Coimbra:2011nw}.

Two further key facts about exact Courant
algebroids will play a role in this paper: Firstly, an isotropic splitting of
\eqref{shortcou} and a generalized metric on $E$ uniquely define a
two-dimensional sigma-model with target space $\cQ$~\cite{Severa2015}. Secondly, topological
T-duality can be implemented as an isomorphism between exact Courant
algebroids~\cite{Cavalcanti2010,Crilly:2019vmu}; however, factorized T-dualities are
not manifest symmetries of supergravity in this context. 

\medskip

\subsection{Double Field Theory on Para-Hermitian Manifolds} ~\\[5pt]
Double field
theory~\cite{Siegel:1993xq,Siegel:1993th,HullZw2009,Hull:2009zb, hohmhz,hullzw,Hohm:2010xe}
is a duality-covariant formulation of supergravity in which T-duality symmetry
is manifest, and it provides a geometric setting for the description of non-geometric backgrounds of string theory~\cite{Hellerman:2002ax,Dabholkar2002,Hull2005} (see~\cite{Wecht:2007wu,Berman2013,Plauschinn:2018wbo} for reviews
and further references). It was originally derived for (flat) toroidal
compactifications, and it can be extended to curved backgrounds which
are local doubled torus
fibrations; in these instances T-duality acts geometrically as a subgroup
of the group of large diffeomorphisms of the doubled space. The background independent formulation of double field
theory~\cite{hohmhz} suggests defining it on more general doubled
manifolds $M$. In such formulations one is faced with the conceptual
problem of the meaning of T-duality and the doubling of local coordinates, since a general spacetime manifold $\cQ$
need not admit an equal number of momentum and winding modes as the
latter are associated to the non-contractible 1-cycles of $\cQ$. 

A fully dynamical doubled geometry beyond the strong constraint
is rather poorly understood; it has been realized that the
correct global picture of a doubled spacetime $M$ in this framework is that of a foliated manifold
for which a maximally isotropic polarization (a solution to the strong constraint) selects
a conventional physical spacetime as a quotient $\cQ=M/\cF$ by the
equivalence relation induced by the
leaves of the
foliation $\cF$~\cite{Hull2009,Schulz:2011ye,Vaisman2012,Park:2013mpa,Lee:2015xga}, rather
than as a submanifold as in the case of flat spaces $M$, and the physical fields as foliated tensors. Global aspects of
doubled geometry were considered
by~\cite{Park:2013mpa,Hohm2013,Berman:2014jba,
  Papadopoulos:2014mxa,Hull:2014mxa,Howe:2016ggg} in a bottom-up approach whereby
flat open subsets are patched together using the physical symmetries
of double field theory as transition functions, which thereby become manifest geometric symmetries of the dynamical theory. One issue surrounding the global formulation of double field theory is whether the underlying split signature metric $\eta$ should be flat, which appears to severely restrict the possible doubled manifolds~\cite{Lee:2015xga}; non-constant metrics $\eta$ were considered in~\cite{Cederwall:2014kxa} where the most general consistent metrics were suggested to take on a pp-wave type form.

Para-Hermitian geometry first appeared in~\cite{Vaisman2012,
  Vaisman2013} as a top-down approach to the geometric description of
double field theory,\footnote{The origins of this approach can be traced back
  to the mathematical structures suggested by Hull~\cite{Hull2005} in
  the context of doubled torus fibrations, where the terminology
  `pseudo-Hermitian' was used instead of `para-Hermitian'.} 
and was further developed along these lines in~\cite{Freidel2017,
  Svoboda2018, Freidel2019,SzMar,Mori2019}. In this framework one is
faced with the problem of understanding the reductions of the theory to the usual flat
space doubled geometry and to generalized geometry. The perspective on para-Hermitian geometry which we adopt in this paper is that it allows for the
introduction of structures similar to exact Courant
algebroids directly on the tangent bundle $TM$ of a smooth manifold $M$ of
even dimension. We introduce a split signature metric $\eta$ and an
automorphism $K\in {\sf Aut}_\unit(TM)$ such that $K^2 = \unit,$
which defines a splitting 
$$
TM=L_+\oplus L_-
$$ 
where $L_\pm$ are the
$\pm\,1$-eigenbundles of $K$ which are maximally isotropic with respect to $\eta.$ 
These structures are {compatible} in the sense that they satisfy the condition
$$
\eta\big(K(X),K(Y)\big)=-\eta(X,Y) \ , 
$$ 
for all $ X, Y \in \mathsf{\Gamma}(TM).$ We call the triple $(M, K, \eta)$ an
{almost para-Hermitian manifold}; it encodes the kinematical content
of double field theory on $M$. This compatibility condition defines a
2-form $\omega$ on $M$ called the {fundamental 2-form} of the almost
para-Hermitian structure, which is almost symplectic by
construction. When $\omega$ is closed we call $(M,K,\eta)$ an almost
para-K\"ahler manifold, and this is the situation that most closely
resembles the original flat space formulation of double field theory.

An almost para-Hermitian manifold can be endowed with a
metric-compatible bracket satisfying the Leibniz rule and for which
$L_\pm$ are involutive. This gives the tangent bundle $TM$ the structure of a metric
algebroid~\cite{Vaisman2012,Freidel2017,Jonke2018,
  Svoboda2018, Freidel2019,SzMar,Mori2019,Chatzistavrakidis:2019huz}. The bracket is called a {D-bracket}, and it is neither
skew-symmetric nor satisfies the Jacobi identity. A different D-bracket is associated to
each para-Hermitian structure $(K,\eta)$ on the same manifold $M$.
When one
of the eigenbundles $L_\pm$ is Frobenius integrable and so gives a
foliation of $M,$ one can construct a standard Courant algebroid on
each leaf of the foliation. It is shown in~\cite{Freidel2017,
  Svoboda2018, Freidel2019} that the metric $\eta$ induces a
bracket-preserving isomorphism between the D-bracket on $TM$ and the
Dorfman bracket associated with the standard Courant algebroid on the
foliation. When both eigenbundles $L_\pm$ are integrable, then $L_\pm
= T\cF_\pm$ and locally the para-Hermitian manifold is of the form
$\cS_+\times\cS_-$, where $\cS_\pm\subset\cF_\pm$ are leaves of the
foliations $\cF_\pm$ with local coordinates $\IX^I=(x^i,\tilde x_i)$
adapted to $\cS_\pm$ that can be thought of as spacetime and winding coordinates, respectively; the D-bracket then recovers the D-bracket of double field theory.

As discussed in \cite{Freidel2019}, one can also consider almost
para-Hermitian manifolds that admit a Riemannian metric $\cH$ which is
compatible with the para-Hermitian structure in the sense that it satisfies the conditions
$$ 
\mathcal{H}^{-1} = \eta^{-1}\,\circ \, \mathcal{H} \,\circ\, \eta^{-1} = 
- \omega^{-1}\,\circ \, \mathcal{H} \,\circ \, \omega^{-1} \ .
$$
The quadruple $(M, K, \eta, \cH)$ is called a {Born manifold}; it
encodes the dynamical field content of double field theory on $M$. The metric $\cH$ is a special case of the generalized metrics introduced in~\cite{Vaisman2012} which are constrained only by the first equality. In the integrable case, a Born metric is equivalent to the introduction of a spacetime metric.

Para-Hermitian manifolds encode the mathematical structure of a
`doubled geometry', serving as the extended spacetime of double field theory. To provide a physical meaning to
this structure one needs to
clarify how to recover a conventional closed string background from a para-Hermitian
manifold. In this paper we will define a physical spacetime
from a worldsheet perspective by
introducing a non-linear sigma-model for a foliated para-Hermitian
manifold whose coupling to background fields on the target space is
uniquely determined
by a Born geometry. When permitted, the gauging of this sigma-model
using the techniques of~\cite{Kotov2014, Kotov2018} gives a worldsheet
description of the quotient space represented by the leaf space of the
foliation. We interpret this leaf space as the physical background,
and the gauging as a worldsheet description of a non-linear version of the strong constraint
of double field theory.

\medskip

\subsection{Doubled Sigma-Models} ~\\[5pt]
This paper is mostly concerned with the formulation and analysis of two-dimensional
non-linear sigma-models for para-Hermitian
manifolds, and how they provide a worldsheet formulation of string
theory in such doubled spaces. They form the basis for the
worldsheet approach to double field theory on para-Hermitian target
spaces, which provides a very general geometric realization of the
duality-symmetric formulations of string theory via doubled
sigma-models, see e.g.~\cite{Duff:1989tf,Tseytlin:1990nb,Siegel:1993xq,Siegel:1993th,Hull2005}.
We aim to understand target space dualities in these sigma-models as 
a consequence of vector bundle automorphisms which preserve the split signature metric
$\eta$ and map para-Hermitian structures into para-Hermitian
structures. Our approach is largely inspired by Hull's doubled 
formalism for local torus fibrations~\cite{Hull2005,Hull2007}, in which dual coordinates conjugate to 
winding modes of the closed string are introduced
alongside the torus fiber coordinates conjugate to
momentum modes. As particular instances of our construction, we will
obtain new perspectives on the doubled sigma-models for group
manifolds, twisted tori and nilmanifolds which were originally
developed in~\cite{Hull:2007jy,Hull2009,ReidEdwards2009}.

Quantum aspects of the sigma-model formulation for doubled torus fibrations
were developed by~\cite{Berman:2007xn} where the vanishing of the
1-loop beta-functions were found to give effective field equations reminescent of the equations of motion in double field
theory. This was extended by~\cite{Copland:2011wx} to a doubled sigma-model whose
effective spacetime field theory at 1-loop is double field
theory. Gaugings of doubled sigma-models were considered
by~\cite{Hull2007,Hull2009,Lee:2013hma,Bakas:2016nxt} which implement
the strong constraint of double field theory in the form of a
chirality constraint on the worldsheet fields: In these worldsheet formulations, the
choice of polarization is achieved by the gauging and the subsequent quotient yields a conventional
description of a physical string background.

Topologically twisted versions of doubled sigma-models are considered
in~\cite{Kokenyesi:2018xgj}, where they are related to extensions
of generalized complex geometry and used to describe covariant geometric
theories for other string dualities such as S-duality. Generalized
para-K\"ahler structures also appear in~\cite{Hu2019} as target space
geometries for doubled sigma-models with $\cN=(2,2)$ twisted
supersymmetry, similarly to the appearence of affine and projective
versions of the special para-K\"ahler geometry of rigid and local
$\cN=2$ vector multiplets in Euclidean
spacetimes~\cite{Cortes2004}. Born structures further appear in target
spaces for sigma-models with $\cN=(1,1)$ supersymmetry
in~\cite{Stojevic:2009ub,Hu2019}, where their non-integrability leads
to non-geometric string backgrounds in a similar fashion to what we
describe in the present paper, and as the target geometries of $\cN=2$
hypermultiplets in Euclidean
signature~\cite{Cortes:2005uq}. Supersymmetric extensions of our
non-linear sigma-models will not be discussed in this paper.

\medskip

\subsection{Overview of Results and Outline} ~\\[5pt]
In this paper we will follow and expand on the analogies between exact Courant algebroids
and almost para-Hermitian manifolds. For this, we define in
Section~\ref{intropara} a notion of
generalized metric on an almost para-Hermitian manifold $M$ and discuss its
properties. We show that generalized metrics are in one-to-one
correspondence with pairs of a fiberwise Riemannian metric $g_+$ on $L_+$
and a 2-form $b_+\in \mathsf{\Gamma}(\midwedge^2 L_+^*)$. We also show that compatible Riemannian metrics $\cH$ which define Born
geometries are a special class of generalized metrics; we refer to them
as {generalized metrics} which are compatible with the para-Hermitian
structure. 

In Section~\ref{ContTduality} we then turn to the characterization of transformations mapping Born
geometries into Born geometries, which are given by vector bundle
automorphisms of $TM$ preserving the split signature metric $\eta.$ We
denote the group of such transformations by ${\sf O}(d,d)(M),$ where
$\dim(M)=2d$. They are the crux of our interpretation of {generalized
  T-duality} in the framework of para-Hermitian geometry. As an
explicit example, we recover the $B$-transformations presented
in~\cite{Svoboda2018}. The natural group of discrete transformations
is
\begin{align} \label{eq:discreteTduality}
\sfO(d,d)(M)\, \cap\, {\sf Diff}(M;\IZ) \ , 
\end{align}
where ${\sf Diff}(M;\IZ)\subset{\sf Diff}(M)$ is the subgroup of large
diffeomorphisms of $M$; for example, if $M=\IT^{2d}$ is a torus, then
${\sf Diff}(\IT^{2d};\IZ)={\sf GL}(2d,\IZ)$,
$\sfO(d,d)(\IT^{2d})=\sfO(d,d)\subset{\sf GL}(2d,\IR)$ and
\eqref{eq:discreteTduality} is the T-duality group $\sfO(d,d;\IZ)$,
which is a symmetry of toroidally compactified string theory. The
issue in general is then which subgroup of \eqref{eq:discreteTduality}
yields proper T-duality symmetries of string theory that can be used
as transition functions in constructing candidate physical string
backgrounds from the Born geometry of $M$. However, we are also interested in the kind of
generalization of T-duality proposed by~\cite{Dabholkar2005}, which is
naturally encompassed by our general formalism, so we will not address the issue of which transformations define physically equivalent string backgrounds in the quantum theory any further in this paper. Physically inequivalent backgrounds from our perspective also offer the possibility of describing different quantum completions of the same classical theory, such as those which are related by non-abelian or Poisson-Lie T-duality.

To complete the analogy with exact Courant algebroids, in
Section~\ref{BSM} we show that
two-dimensional sigma-models for $M$ are naturally associated with a choice of a
Born structure on $M$ by using the compatible generalized metric $\cH$
and the fundamental 2-form $\omega.$ We call them {Born sigma-models},
and we propose them as a covariant realization of the
duality-symmetric formulation of string theory via doubled
sigma-models. Born sigma-models with target space $M$ are in
one-to-one correspondence with Born structures on $M$. We use them to
give a more precise meaning to our notion of {generalized T-duality}
by starting with the simple observation that T-dual Born sigma-models
are obtained via $\sfO (d,d)(M)$-transformations of Born structures. 

To explore the connections between this construction and the usual
worldsheet descriptions of conventional string backgrounds, we develop the gauging
 of Born sigma-models. We focus on
the cases of para-Hermitian manifolds $M$ which admit a maximally isotropic
foliation. A foliation is not generally induced by the action of a Lie group, nor will a generic Born manifold admit a Lie algebra of
Killing vectors for $\cH$ that is normally required to
gauge an isometry in the traditional approach to gauging
two-dimensional sigma-models. Following the approach of Kotov and Strobl~\cite{Kotov2014,
  Kotov2018}, we apply the Lie algebroid gauging
of sigma-models for foliated manifolds as a means of obtaining a
worldsheet description of the leaf space of the foliation, and we further elaborate on the characterization of gauged sigma-models on manifolds admitting a regular foliation. We show
that the generalized isometry condition allowing for the
gauging is equivalent to the existence of a bundle-like metric on $M$;
in this case the Lie algebroid connection required for the gauging
can be naturally chosen to be the
Bott connection defined by the foliation. The Lie algebroid gauging of the Born sigma-model also leads to a geometric interpretation of the usual self-duality constraint
imposed on doubled
sigma-models~\cite{Hull2005,Hull2007,Hull2009}, which eliminates half
of the $2d$ closed string degrees of freedom by restricting $d$ of
them to be right-moving and $d$ of them to be left-moving on the worldsheet.
The leaf space of a foliated para-Hermitian manifold $M$ defines the
physical spacetime which is recovered from the doubled geometry,
i.e.~from the para-Hermitian structure. The spacetime is endowed
with a metric and $B$-field descending from the Born metric on $M$ via a Riemannian submersion,
and in particular from its fiberwise component along the sub-bundle
$L_+$ of the tangent bundle $TM$ which defines a Riemannian foliation
of $M$.

The role of ${\sf O}(d,d)(M)$-transformations is crucial in our
interpretation of generalized T-duality, since they map Born
sigma-models into one another. This becomes particularly relevant when
discussing target space dualities between reduced sigma-models. Whenever two dually
related Born sigma-models have target spaces whose associated almost
para-Hermitian structures admit at least one integrable eigenbundle
for which the generalized isometry condition is satisfied, we obtain a pair of
conventional non-linear sigma-models for different leaf spaces; we say that the reduced
sigma-models are {T-dual} to one another. In a similar vein as
discussed above, in this paper we do not address conformal or modular
invariance of our Born sigma-models, nor which subgroup of
\eqref{eq:discreteTduality} would be an automorphism of the worldsheet
conformal field theory. The proper implementation of conformal
invariance, by possibly adding other sectors as necessary,
would lead to field equations for the Born geometry
$(\eta,\omega,\cH)$ which 
provide a global generalization of the equations of motion of double
field theory.

Our worldsheet theory also gives novel geometric characterizations of
non-geometric string
backgrounds. When the leaf space of a
reduced sigma-model is smooth, the sigma-model describes a
{geometric background}; this typically happens when the underlying almost para-Hermitian manifold is the total space of a fiber bundle. On the other hand, if the leaf space does not admit the structure
of a smooth manifold, the background is only locally geometric, and
following standard terminology~\cite{Hull2005} we call it a {T-fold}; in
particular, those T-folds that arise from foliations with compact
leaves and finite leaf holonomy group have the structure of an
orbifold. Finally, it may happen that the gauging condition for a Born
sigma-model holds with respect to a non-involutive eigenbundle of $K,$
so that the reduction through gauging cannot be performed since there
is no foliation and thus no conventional spacetime even locally; following the terminology
of~\cite{Hull:2019iuy}, we say that the Born sigma-model describes an essentially
doubled background.

In the final three sections we turn our attention
to special classes of examples that illustrate our general
formalism. We characterize the para-K\"ahler structures of cotangent
bundles, and describe the gauging and related generalized T-duality of
the associated Born sigma-models in Section~\ref{sigmacot}. This
provides a general extension and covariant realization of Tseytlin's doubled sigma-model
approach via the closed string phase space~\cite{Tseytlin:1990nb} (see
also~\cite{Freidel2014}), and moreover exhibits the main qualitative
features of the gauging of Born sigma-models in a simplified yet
explicit setting.

In Section~\ref{DoubledGroups} we show how to define left-invariant
almost para-Hermitian structures on Lie groups, particularly those
associated with Manin pairs and Manin triples, for which our notion of
generalized T-duality includes the non-abelian T-duality of~\cite{delaOssa:1992vci}
and some aspects of the Poisson-Lie T-duality of~\cite{Klimcik:1995ux}, as well as
their generalizations to generic doubled groups 
proposed in~\cite{Hull2009,ReidEdwards2009,edwards:nonpub,Osten:2019ayq}. We demonstrate that
the generalized isometry conditions imply that the leaf space
is a reductive homogeneous space with a bi-invariant metric and the surjective submersion from
the Born manifold is a principal bundle, and we consider various
examples of the corresponding Born sigma-models; this includes the examples of symmetric spaces as special cases, which have previously
appeared as natural and explicit solutions to the strong constraint of
double field theory on doubled groups in~\cite{Hassler:2016srl,Demulder:2018lmj}. 

Using the Born
structure of the Drinfel'd double $T^*\sfH$ of the three-dimensional
Heisenberg group $\sfH,$ in Section~\ref{sec:BornDTT} we obtain a Born
structure on the corresponding doubled
twisted torus, i.e. the compact manifold given by the quotient of
$T^*\sfH$ with respect to the left action of a discrete cocompact
subgroup. Starting from this Born manifold, we discuss how to obtain
T-dual sigma-models for the different leaf spaces which reproduces the
standard T-duality chain of geometric and non-geometric backgrounds starting from the torus $\IT^3$
with NS--NS $H$-flux~\cite{Kachru:2002sk,Shelton2005}; this gives a
somewhat more precise geometric approach to the worldsheet theory for the
doubled twisted torus formalism developed in~\cite{Hull2009} using
standard isometric gauging techniques.

\medskip

\subsection{Acknowledgments} ~\\[5pt]
We thank Chris Blair, Athanasios~Chatzistavrakidis, Lorenzo~Foscolo, Chris~Hull,
Alexei~Kotov, Thomas Mohaupt, Franco~Pezzella, Erik~Plauschinn,
Felix~Rudolph, Alexander~Schen{-}kel, Thomas~Strobl and Patrizia~Vitale for helpful discussions.
This work was supported in part by the Action MP1405 ``Quantum Structure of Spacetime'' funded by the European
Cooperation in Science and Technology (COST). The work of V.E.M. was
funded by the Doctoral Training Grant ST/R504774/1 from the UK Science and Technology
Facilities Council (STFC). The work of R.J.S. was supported by the STFC
Consolidated Grant ST/P000363/1 and the
Istituto Nazionale di Fisica Nucleare (INFN, Torino).

\section{Generalized Metrics in Para-Hermitian Geometry} \label{intropara}

In this section we shall introduce a notion of generalized
metric in para-Hermitian geometry which extends the more familiar
structure from generalized geometry~\cite{gualtieri:tesi,Hitchin2011},
and compare it to Born geometry.

\medskip

\subsection{Para-Hermitian Manifolds} ~\\[5pt]
We begin by introducing the basic concepts and constructions of
para-Hermitian geometry that we
shall need, following~\cite{Cortes2004,Boyer2007,SzMar} for the most
part. Throughout this paper all manifolds, fibrations and sections
of vector bundles are assumed to be smooth, while all vector bundles,
vector spaces, Lie groups and Lie algebras are taken to be real, unless otherwise
explicitly stated.

\theoremstyle{definition}
\begin{definition}
An \emph{almost product structure} on a manifold $M$ is an
automorphism $K \in {\sf Aut}_\unit(TM)$ covering the
identity\footnote{A more common nomenclature for ${\sf Aut}_\unit(TM)$
  is the `gauge subgroup' ${\sf Gau}(TM)$ of the automorphism group
  ${\sf Aut}(TM)$ of the tangent bundle $TM.$} such that $K^2=\mathds{1}$ and $K\neq\pm\,\mathds{1}$. The pair $(M,K)$ is an \emph{almost product manifold}.
\end{definition}
The automorphism $K$ induces a $(1,1)$-tensor field on $M,$ 
denoted $\underline{K} \in \mathsf{\Gamma}(TM \otimes T^* M).$
We immediately notice the analogy with almost \emph{complex}
manifolds, which are even-dimensional manifolds endowed with a $(1,1)$-tensor field $J$ such that $J^2=-\mathds{1}.$ 
This analogy is a useful guide to understanding the structures and the terminology introduced in the following, for further details see \cite{Cortes2004}.

\theoremstyle{definition}
\begin{definition} \label{paraco}
An \emph{almost para-complex manifold} is an almost product manifold
$(M,K)$ with $M$ of even dimension such that the two eigenbundles
$L_+$ and $L_-$ associated, respectively, with the eigenvalues $+1$
and $-1$ of $K$ have the same rank. A splitting of the tangent bundle
\be\label{eq:splittingTM}
TM=L_+\oplus L_-
\ee
of a manifold $M$ into the Whitney sum of two
sub-bundles $L_+$ and $L_-$ of the same fiber dimension is an
\emph{almost para-complex structure} on $M$. The splitting
\eqref{eq:splittingTM} is
a \emph{polarization} of the almost para-complex manifold $M$.
\end{definition}
Using the almost product structure, we can define two projection operators
\begin{align*}
\Pi_\pm=\tfrac{1}{2}\,(\mathds{1}\pm K): \mathsf{\Gamma}(TM) \longrightarrow \mathsf{\Gamma}(L_\pm) \ .
\end{align*}

\begin{remark}\label{rem:reduction}
A \emph{$\sf G$-structure} on a $2d$-dimensional manifold
$M$, for a subgroup ${\sf G} \subset {\sf GL}(2d,\mathbb{R})$, is a
$\sf G$-sub-bundle of the frame bundle $FM$, i.e. a reduction on the
frame bundle of the structure group ${\sf GL}(2d,\mathbb{R})$ to $\sf
G$. The definition of almost para-complex structure can therefore be recast by saying that it is a $\sf G$-structure on $M$ with structure group ${\sf G}={\sf GL}(d,\mathbb{R})\times {\sf GL}(d,\mathbb{R})$.  
These reductions are in one-to-one correspondence with sections of the
bundle associated with $FM$ whose typical fibers are the coset ${\sf GL}(2d,
\mathbb{R})/ {\sf GL}(d, \mathbb{R})\times {\sf GL}(d, \mathbb{R}).$
This also gives a one-to-one correspondence between ${\sf GL}(d,
\mathbb{R})\times {\sf GL}(d, \mathbb{R})$-reductions and
$(1,1)$-tensor fields induced by tangent bundle automorphisms $K \in {\sf Aut}_\unit(TM)$ as in Definition~\ref{paraco}. 
Furthermore, a ${\sf GL}(d, \mathbb{R})\times {\sf GL}(d,
\mathbb{R})$-reduction of $FM$ implies that $TM,$ the vector bundle associated with $FM,$ has structure group ${\sf GL}(d, \mathbb{R})\times {\sf GL}(d, \mathbb{R}).$ 
\end{remark}

We shall now study the integrability of the sub-bundles $L_+$ and $L_-.$ We start by characterizing the integrability of an almost para-complex structure.

\begin{definition}
An almost product structure $K$ is (\emph{Frobenius}) \emph{integrable} if
its eigenbundles $L_+$ and $L_-$ are both integrable:
$[\mathsf{\Gamma}(L_\pm),\mathsf{\Gamma}(L_\pm)]\subseteq\mathsf{\Gamma}(L_\pm)$. In this case
$K$ is a \emph{product structure}. A \emph{para-complex structure} is an integrable almost para-complex structure, i.e. a product structure with ${\rm rank}(L_+)={\rm rank}(L_-)$. 
\end{definition}
By Frobenius' Theorem, in this instance the manifold $M$ admits two
foliations $\mathcal{F}_+$ and $\mathcal{F}_-$, such that
$L_+=T\mathcal{F}_+$ and $L_-=T\mathcal{F}_-$. From the
  definition of para-complex structure, the distributions $L_\pm$ have
  constant rank, and hence the foliations $\cF_\pm$ are regular. 

Another way to characterize the integrability of an almost product structure is through the Nijenhuis tensor field, continuing the analogy with almost complex structures.

\begin{definition}
The \emph{Nijenhuis tensor field} of an almost product structure $K$ is the map $N_K: \mathsf{\Gamma}(TM)\times \mathsf{\Gamma}(TM) \rightarrow \mathsf{\Gamma}(TM)$ given by
\begin{equation*}
N_K(X,Y)=[X,Y]+[K(X),K(Y)] -K\big([K(X),Y] + [X,K(Y)]\big) \ ,
\end{equation*}
for all $X, Y \in \mathsf{\Gamma}(TM).$
\end{definition}
Then we can state the para-complex counterpart of the
Newlander-Nirenberg theorem as 
\begin{theorem}
An almost product structure $K$ on a manifold $M$ is integrable if and only if $N_K(X,Y)=0$ for all $X,Y \in \mathsf{\Gamma}(TM).$ 
\end{theorem}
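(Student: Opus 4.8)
The plan is to prove this purely by an algebraic reduction to Frobenius involutivity of the eigenbundles, since in the para-complex setting \emph{integrability} already means that $L_+$ and $L_-$ are involutive distributions; there is no need for the analytic construction of adapted coordinates that makes the genuine Newlander--Nirenberg theorem hard. The only structural facts I would use are that $K$ is a bundle endomorphism, hence $\R$-linear on sections and fiberwise defined, and that the Lie bracket is $\R$-bilinear; together these make $N_K$ an $\R$-bilinear map, and its antisymmetry $N_K(X,Y)=-N_K(Y,X)$ follows at once from the antisymmetry of $[\,\cdot\,,\,\cdot\,]$.

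First I would decompose arbitrary sections along the polarization \eqref{eq:splittingTM}. Writing $X=X_++X_-$ and $Y=Y_++Y_-$ with $X_\pm=\Pi_\pm X$ and $Y_\pm=\Pi_\pm Y$ in $\mathsf{\Gamma}(L_\pm)$, and using $\R$-bilinearity, I expand $N_K(X,Y)$ into the four terms $N_K(X_\pm,Y_\pm)$. It then suffices to evaluate $N_K$ on pairs of eigen-sections, on which $K(X_\pm)=\pm\,X_\pm$.

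The key computation is this evaluation. For $X,Y\in\mathsf{\Gamma}(L_+)$, substituting $K(X)=X$ and $K(Y)=Y$ collapses the defining expression to $N_K(X,Y)=2\big([X,Y]-K([X,Y])\big)=4\,\Pi_-([X,Y])$. Symmetrically, for $X,Y\in\mathsf{\Gamma}(L_-)$ one finds $N_K(X,Y)=2\big([X,Y]+K([X,Y])\big)=4\,\Pi_+([X,Y])$, while in the mixed case $X\in\mathsf{\Gamma}(L_+)$, $Y\in\mathsf{\Gamma}(L_-)$ the two inner brackets cancel and $N_K(X,Y)=0$. Hence on general sections $N_K(X,Y)=4\,\Pi_-([X_+,Y_+])+4\,\Pi_+([X_-,Y_-])$, a sum of an $L_-$-component and an $L_+$-component which therefore vanish independently.

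Finally I would read off both implications. If $N_K\equiv 0$, then restricting to $X,Y\in\mathsf{\Gamma}(L_+)$ forces $\Pi_-([X,Y])=0$, i.e.\ $[X,Y]\in\mathsf{\Gamma}(L_+)$, and likewise $L_-$ is involutive, so $K$ is integrable. Conversely, if both eigenbundles are involutive then $\Pi_-([X_+,Y_+])=0$ and $\Pi_+([X_-,Y_-])=0$, whence $N_K$ vanishes on all sections; Frobenius' theorem then yields the foliations $\cF_\pm$. There is no real obstacle beyond bookkeeping the signs in the $L_+$ versus $L_-$ computations; the one point worth flagging is that the reduction to eigen-sections only uses $\R$-bilinearity of $N_K$ rather than full $C^\infty(M)$-tensoriality, so the argument is clean already at the level of sections and before invoking Frobenius integrability.
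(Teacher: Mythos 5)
Your proof is correct and is essentially the argument the paper itself encodes in the decomposition \eqref{eq:NKdecomp}--\eqref{parnij}: evaluating $N_K$ on eigen-sections shows it splits into a piece valued in $L_-$ and a piece valued in $L_+$ which are precisely the obstructions to involutivity of $L_+$ and $L_-$, and since these live in complementary sub-bundles they vanish independently. The only cosmetic difference is that your computation retains the overall factor of $4$ in front of $\Pi_\mp([\Pi_\pm(X),\Pi_\pm(Y)])$, which the paper's displayed decomposition omits; this has no bearing on the equivalence.
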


Using the projection tensors $\Pi_\pm$, together with $K=\Pi_+-\Pi_-$, we can decompose the Nijenhuis tensor as 
\begin{equation}\label{eq:NKdecomp}
N_K(X,Y)=N_{\Pi_+}(X,Y)+N_{\Pi_-}(X,Y) \ ,
\end{equation}
where 
\begin{equation}
N_{\Pi_\pm}(X,Y)=\Pi_\mp\big([\Pi_\pm(X),\Pi_\pm(Y)]\big) \ .
\label{parnij}
\end{equation}
From \eqref{parnij} it follows that $N_{\Pi_\pm}(X,Y)\in \mathsf{\Gamma}(L_\mp)$. Hence the two components of the Nijenhuis tensor obstruct the closure of the Lie bracket of vector fields restricted to $L_+$ and $L_-$, respectively. 
In particular, $N_{\Pi_+}$ and $N_{\Pi_-}$ are independent of each
other. Thus one of them may vanish while the other one may not. In
this case, $M$ admits only one foliation and the para-complex
structure is only partially integrable in the sense that $N_K(X,Y)$ is
still non-vanishing, but it is controlled by only one of its
components introduced in the decomposition~\eqref{eq:NKdecomp}.

Following the analogy with complex geometry, we will now
introduce a compatible metric on almost para-complex manifolds, as in
the case of almost Hermitian manifolds. For this, let us return to the
description from Remark~\ref{rem:reduction} of an almost para-complex
structure on a $2d$-dimensional manifold $M$ in terms of a reduction
of the structure group of the frame bundle $FM.$ As explained in~\cite{Boyer2007},
when the frame bundle $FM$ admits a reduction of
the structure group to ${\sf GL}(d,\mathbb{R}) \times {\sf
  GL}(d,\mathbb{R}),$ it also admits a reduction to ${\sf O}(d,d),$
since these two subgroups are homotopy equivalent. In fact, in both cases the maximal compact subgroup is ${\sf O}(d) \times {\sf O}(d),$ which is also allowed as a reduction of the structure group when an almost para-complex structure can be defined.

\begin{example} \label{fiberexample}
Let 
$$
\pi: M \longrightarrow \cQ
$$ 
be a fibered manifold of even dimension $2d$ with ${\rm dim}(\cQ)= d.$ The
surjective submersion $\pi$ induces a short exact sequence of vector bundles on $M$ given by
\be \label{bunshort}
0 \longrightarrow L_{\tt v}(M)\xlongrightarrow{i} TM \xlongrightarrow{\hat{\pi}} \pi^*(T\cQ)\longrightarrow 0 
\ee
where $L_{\tt v}(M)= {\rm Ker}(\pi_*)$ is the vertical sub-bundle of
$TM,$ $i:L_{\tt v}(M) \xhookrightarrow{} TM$ is the inclusion map  and
$\pi^*(T\cQ)$ is the pullback of the tangent bundle of $\cQ$ to $M$ by the projection $\pi.$ The surjective map $\hat{\pi}: TM \rightarrow \pi^*(T\cQ)$ is induced by the differential of the projection $\pi_*: TM \rightarrow T\cQ.$ 
A splitting of the short exact sequence \eqref{bunshort} is given by the choice of a right inverse to $\hat{\pi},$ called an \emph{Ehresmann connection} 
$$
s: \pi^* (T\cQ) \longrightarrow TM \qquad \mbox{with} \quad \hat{\pi}
\circ s = \unit_{\pi^*(T\cQ)} \ .
$$
Then there is a decomposition
$$
TM=  {\rm Im}(s) \oplus L_{\tt v}(M) 
$$ 
which is associated with an almost para-complex structure, since
Whitney sums of vector bundles are in one-to-one correspondence with
almost product structures. In other words, there is an automorphism
$K_s \in {\sf Aut}_\unit (TM)$ such that $K_s^2=\unit_{TM}$ which is given by 
$$
K_s\big(s(X)+X_{\tt v}\big)= s(X)-X_{\tt v} \ , 
$$  
with $X \in \mathsf{\Gamma}(T\cQ)$ and $X_{\tt v} \in \mathsf{\Gamma}(L_{\tt v}(M))$; thus
${\rm Im}(s) $ is the $+1$-eigenbundle of $K_s$ and $L_{\tt v}(M)$ is the
$-1$-eigenbundle. The distribution $L_{\tt v}(M)$ is always involutive
and its integral manifolds are the fibers of $M,$ while ${\rm Im}(s)$
is generally non-integrable. Hence the choice of a splitting $s$ is
equivalent to a ${\sf GL}(d,\mathbb{R}) \times {\sf
  GL}(d,\mathbb{R})$-reduction of the structure group of the frame
bundle of $M.$ Since a splitting of the exact sequence
\eqref{bunshort} can always be found, an almost para-complex structure
on $M$ can always be defined. This also implies that a 
metric $\eta$ of signature $(d,d)$, or equivalently an ${\sf O}(d,d)$-reduction of the structure group, always exists on $TM.$ However, $K_s$ and $\eta$ do not necessarily satisfy any kind of compatibility condition.
\end{example}

Motivated by Example~\ref{fiberexample} and the usual construction of
almost Hermitian manifolds, we now introduce structures in which $K$ and $\eta$ satisfy a compatibility condition.

\begin{definition} \label{paraher}
An \emph{almost para-Hermitian manifold} $(M,K,\eta)$ is an almost
para-complex manifold $(M,K)$ together with a metric $\eta$ of  signature $(d,d)$ which is compatible with the automorphism $K$ in the sense that 
$$
\eta\big(K(X),K(Y)\big)=-\eta(X,Y) \ , 
$$ 
or equivalently 
\begin{equation}
\eta\big(K(X),Y\big)+\eta\big(X,K(Y)\big)=0 \ , 
\label{compcon}
\end{equation}
for all $X, Y \in \mathsf{\Gamma}(TM).$
\end{definition}
The condition \eqref{compcon} implies that the distributions $L_+$ and $L_-$ are maximally isotropic with respect to $\eta,$ so that they define para-Hermitian versions of Dirac structures. From \eqref{compcon} we also deduce the existence of a non-degenerate 2-form field $\omega$ on $M$ given by
$$
\omega(X,Y)=\eta\big(K(X),Y\big) \ , 
$$ 
for all $X,Y \in \mathsf{\Gamma}(TM)$, called the \emph{fundamental $2$-form};
it defines an {almost symplectic structure}, since it is
generally not closed. From this definition it follows that
\begin{equation}
\omega(X_+,Y_+)=0 \ , 
\label{ome1}
\end{equation}
for all $X_+,Y_+ \in \mathsf{\Gamma}(L_+)$, and 
\begin{equation}
\omega(X_-,Y_-)=0 \ , 
\label{ome2}
\end{equation}
for all $X_-, Y_- \in \mathsf{\Gamma}(L_-)$; in other words, the sub-bundles
$L_\pm$ are also maximally isotropic with respect to $\omega.$ If the
fundamental 2-form $\omega$ is {symplectic}, i.e. ${\rm
  d}\omega=0$, then $(M,K,\eta)$ is called an \emph{almost
  para-K\"{a}hler manifold}. In this case, the conditions \eqref{ome1}
and \eqref{ome2} imply that $L_+$ and $L_-$ are {Lagrangian
sub-bundles} of the tangent bundle $TM$.

An {almost para-Hermitian structure} $(K,\eta)$ on a manifold $M$ can be regarded as a $\sf G$-structure on $M$ given by a reduction of the structure group of the frame bundle $FM$ from ${\sf GL}(2d,\mathbb{R})$ to the subgroup which preserves both $\eta$ and $\omega$:
$$
{\sf G}={\sf O}(d,d)\cap {\sf Sp}(2d,\mathbb{R})={\sf
  GL}(d,\mathbb{R}) \ . 
$$
Integrability of an almost para-Hermitian structure can be described as well. If the eigenbundles $L_+$ and $L_-$ of $K$, such that $TM=L_+\oplus L_-,$ are both integrable then the triple $(M,K,\eta)$
is called a \emph{para-Hermitian manifold}. If in addition the
fundamental 2-form $\omega$ is closed, then $(M,K,\eta)$ is said to be a \emph{para-K\"{a}hler manifold}, in which case it has two transverse Lagrangian foliations with respect to the symplectic structure $\omega.$  

\begin{remark}
The splitting \eqref{eq:splittingTM} of the tangent bundle $TM$ gives rise to a decomposition
of tensors analogous to the type decomposition in complex geometry. In
particular, there is a decomposition for differential forms. We denote $\midwedge^{(+p, -0)}T^* M= \midwedge^p L_+^*$ and $\midwedge^{(+0, -p)}T^* M= \midwedge^p L_-^*,$ so that any $p$-form on $M$ is decomposed according to the splitting
$$ 
\midwedge^p \, T^*M= \bigoplus_{m+n=p} \
\midwedge^{(+m,-n)}\, T^*M \ .
$$
The fundamental 2-form $\omega$ of an almost para-Hermitian manifold is a $(+1,-1)$-form with respect to the almost para-Hermitian structure $(K,\eta),$ since both $L_+$ and $L_-$ are Lagrangian with respect to $\omega,$ i.e. $\omega \in \mathsf{\Gamma}(L_+^* \wedge L_-^*).$
\end{remark}

\medskip

\subsection{Para-Hermitian Vector Bundles} \label{sec:paravector} ~\\[5pt]
The definition of almost para-Hermitian manifold is the special case of
a para-Hermitian structure on the vector bundle $TM.$ This notion can
be generalized in the following way. 
\begin{definition} \label{parahermvector}
Let $E\rightarrow \cQ$ be a real vector bundle with ${\rm rank}(E)=2d.$ A
\emph{para-complex structure} on $E$ is a vector bundle automorphism
$K \in {\sf Aut}_\unit(E)$ covering the identity such that
$K^2=\unit$ and $K\neq\pm\,\mathds{1}$, and the $\pm\,1$-eigenbundles of $K$ have equal rank; the pair $(E,K)$ is a \emph{para-complex vector bundle}. If $E$ admits a fiberwise  metric\footnote{By
  $\odot$ we denote the symmetric tensor product.} $\eta\in \mathsf{\Gamma}(\midodot^2 E^*)$ of signature $(d,d)$ such that 
$$
\eta\big(K(Z),K(W)\big)=-\eta(Z,W) \ , 
$$
for all $Z, W \in \mathsf{\Gamma}(E),$ then the pair $(K, \eta)$ is
a \emph{para-Hermitian structure} on $E$ and the triple $(E,K,\eta)$ is a \emph{para-Hermitian vector bundle}. 
\end{definition}
In this case $K$ admits two eigenbundles $L_\pm$ with
eigenvalues $\pm\,1,$ so that 
$$
E= L_+ \oplus L_- \ ,
$$ 
which are maximally
isotropic with respect to the fiberwise metric $\eta.$ Conversely,
given a vector bundle $E\to\cQ$ of rank $2d$ endowed with a split
signature metric $\eta$, a choice of maximally isotropic sub-bundle
$L_-$ of $E$ determines a short exact sequence of vector bundles
\be
0\longrightarrow L_-\longrightarrow E\longrightarrow
E/L_-\longrightarrow 0 \ ,
\label{eq:EL-}\ee
and a choice of maximally isotropic splitting of this exact sequence
gives a para-Hermitian structure on $E$.
The case $E=TM$ for an
(almost) para-Hermitian manifold $M$ is particularly relevant because it
allows one to formulate conditions for the integrability of the
eigenbundles, and hence on the possibility that $M$ is a foliated
manifold. This will be especially important in our discussions of
sigma-models for para-Hermitian manifolds later on. 

It is straightforward to see that the compatibility condition between $\eta$ and $K$ in Definition~\ref{parahermvector} is equivalent to 
$$
\eta\big(K(Z), W\big)= -\eta\big(Z, K(W)\big) \ , 
$$ 
for all $Z,\, W \in \mathsf{\Gamma}(E).$ The para-Hermitian vector
bundle $E$ is therefore endowed with a skew-symmetric non-degenerate \emph{fundamental $(0,2)$-tensor} $\omega$ given by
$$
\omega(Z,W)= \eta\big(K(Z), W\big) \ , 
$$ 
for all $Z, W \in \mathsf{\Gamma}(E),$ i.e. $\omega\in \mathsf{\Gamma}(\midwedge^2 E^*).$ The eigenbundles $L_\pm \subset E$ are maximally isotropic with respect to $\omega.$

\begin{example} \label{ex:gentanbun}
Let $E=\mathbb{T}\cQ$ be the generalized tangent bundle 
$$
\mathbb{T}\cQ= T\cQ \oplus T^*\cQ
$$ 
over a manifold $\cQ.$ It is naturally endowed with a fiberwise split signature metric 
$$
\eta(X+ \xi, Y+ \nu)= \iota_X \nu + \iota_Y \xi \ , 
$$
for all $X+\xi,Y+ \nu \in \mathsf{\Gamma}(\mathbb{T}\cQ).$ The natural para-complex structure $K$ of $\mathbb{T}\cQ$ is given by
$$
K(X+\xi)= X- \xi \ ,
$$ 
for all $X+\xi \in \mathsf{\Gamma}(\mathbb{T}\cQ),$ so that $T\cQ$ and
$T^*\cQ$ are the respective $\pm\,1$-eigenbundles. Clearly $\eta$ and
$K$ are compatible in the sense of Definition~\ref{parahermvector},
and the bundles $T\cQ$ and $T^*\cQ$ are maximally isotropic with respect to $\eta.$ Thus we obtain a fundamental $(0,2)$-tensor
$$
\omega(X+ \xi, Y+ \nu)= \iota_X \nu - \iota_Y \xi \ , 
$$ 
for all $X+\xi , Y+ \nu \in \mathsf{\Gamma}(\mathbb{T}\cQ),$ which is the additional natural non-degenerate pairing that can be defined in this case \cite{gualtieri:tesi}. 
\end{example}

\begin{example}
A natural extension of Example~\ref{ex:gentanbun} is given by an exact
Courant algebroid $E$ on $\cQ$ specified by an exact sequence
\be\label{eq:CAexact}
0 \longrightarrow T^*\cQ \xlongrightarrow{\rho^*} E
\xlongrightarrow{\rho} T\cQ \longrightarrow 0 \ ,
\ee
with fiberwise metric $\eta,$ Dorfman bracket $\llbracket\,\cdot\, , \, \cdot\, \rrbracket_E$, and anchor map
$\rho: E \rightarrow T\cQ.$ The map $\rho^*:T^*\cQ\to E$ is defined
by\footnote{Here and in the following the superscript ${}^\sharp$ denotes
the bundle isomorphism $E^*\to E$ induced by a non-degenerate
$(2,0)$-tensor in $\mathsf{\Gamma}(E\otimes E)$. For the inverse $(0,2)$-tensor in $\mathsf{\Gamma}(E^*\otimes E^*)$ we
will use the superscript ${}^\flat$ for the induced bundle isomorphism
$E\to E^*$. Conversely, the tensor associated
to a vector bundle isomorphism $T$ will be 
underlined as $\underline{T}\,$.} $\rho^*= \eta^{-1}{}^\sharp\, \circ\, \rho^{\rm t}.$
From the definition of $\rho^*$ and the exactness of the 
sequence \eqref{eq:CAexact}, it follows that the sub-bundle
$\mathrm{Im}(\rho^*) \subset E,$ which is isomorphic to $T^*\cQ,$ is maximally isotropic with respect to $\eta.$
The para-Hermitian structure of $E$ is given by the choice of an
isotropic splitting of \eqref{eq:CAexact}:
$$
s: T\cQ \longrightarrow E \qquad \mbox{with} \quad \rho \circ s=
\unit_{T\cQ} \ . 
$$
It follows that 
$$
E= \mathrm{Im}(s) \oplus \mathrm{Im}(\rho^*)
$$ 
with associated para-complex structure defined by
$$
K_s\big(s(X)+ \rho^*(\xi)\big)=s(X)- \rho^*(\xi) \ , 
$$ 
for all $ X \in \mathsf{\Gamma}(T\cQ)$ and $\xi \in
\mathsf{\Gamma}(T^*\cQ).$  The para-complex structure $K_s$ is
compatible with the metric $\eta,$ and thus $E$ is endowed with a
para-Hermitian structure. This para-Hermitian structure of an exact
Courant algebroid is isomorphic to the para-Hermitian structure of the
generalized tangent bundle $\mathbb{T}\cQ$ from
Example~\ref{ex:gentanbun}. The Dorfman bracket $\llbracket\,\cdot\, ,
\, \cdot\, \rrbracket_E$ maps to the Dorfman bracket on $\mathbb{T}\cQ$ twisted by a representative of the {\v S}evera class in $\mathsf{H}^3(\cQ, \IR)$~\cite{gualtieri:tesi, Severa-letters}.
\end{example}

\medskip

\subsection{Generalized Metrics} \label{genmesubs} ~\\[5pt]
The similarity between exact Courant algebroids and para-Hermitian
geometry suggests the introduction of a suitable notion of {generalized metrics} on almost para-Hermitian manifolds.  
In the following we will closely follow \cite{Jurco2016, gualtieri:tesi, Severa2015, Freidel2019} to introduce a generalized metric compatible with the almost para-Hermitian structure.
\begin{definition} \label{genme}
Let $E\rightarrow \cQ$ be a vector bundle endowed with a fiberwise
metric $\eta$ of signature $(n,m).$ A \emph{generalized 
  metric} on $E$ is an automorphism $I \in {\sf Aut}_\unit(E)$ with
$I^2=\unit$ and $I\neq\pm\,\mathds{1}$ which defines a fiberwise Riemannian metric
\be \nonumber
\cH(Z,W)=\eta\big(I(Z),W\big) \ ,
\ee  
for all $Z, W \in \mathsf{\Gamma}(E).$
\end{definition}
This definition has an equivalent
formulation when the base space $\cQ$ is connected, see e.g.~\cite{Jurco2016}.
\begin{definition} 
Let $E\rightarrow \cQ$ be a vector bundle endowed with a fiberwise
metric $\eta$ of signature $(n,m).$ A \emph{generalized metric} on $E$
is a sub-bundle $V_+ \subset E$ which is maximally positive-definite with respect to $\eta.$
\end{definition}
In this equivalence the eigenbundles of $I$ are $V_+$ associated to
the eigenvalue $+1$ and $V_-,$ the orthogonal complement of $V_+$ with
respect to the metric $\eta,$ associated to the eigenvalue $-1.$ A
generalized metric determines a decomposition 
$$
E=V_+\oplus V_- \ , 
$$
and the restriction of $\eta$ to $V_-$ is negative-definite, so that 
$$
\cH=\eta|_{V_+}-\eta|_{V_-}
$$ 
is indeed a Riemannian metric on $E$.

\begin{remark}
Definition \ref{genme} can also be recast in a different form.
Any generalized metric induces a vector bundle isomorphism $\cH^\flat \in {\sf Hom}(E, E^*)$ which satisfies the condition
$$
\eta(Z,W)= \eta^{-1}\big(\cH^\flat(Z), \cH^\flat(W)\big) \ .
$$ 
Using the induced vector bundle isomorphisms $\eta^{-1}{}^\sharp, \cH^{-1}{}^\sharp\in {\sf Hom}(E^*, E),$ this can be recast in the form 
\be \label{condgenme}
\eta^{-1}{}^\sharp\big(\cH^\flat(Z)\big)= \cH^{-1}{}^\sharp\big(\eta^\flat(Z)\big) \ ,
\ee
for all $Z \in \mathsf{\Gamma}(E),$ so that
$\eta^{-1}{}^\sharp \circ \cH^\flat \in {\sf End}(E)$. In Definition \ref{genme}
this is nothing but $I= \eta^{-1}{}^\sharp \circ \cH^\flat \in {\sf Aut}_\unit(E),$
and \eqref{condgenme} implies that $\eta^{-1}{}^\sharp\circ \cH^\flat$ squares to the
identity map in ${\sf End}(E).$ The tensor induced by this map can
be regarded as a section $\underline{I} \in \mathsf{\Gamma}(E^* \otimes E).$
\end{remark}

\begin{remark}
Almost para-Hermitian manifolds can yield vector bundles which admit a generalized metric. Let $(M, K, \eta)$ be
an almost para-Hermitian manifold. A generalized metric on the
underlying vector bundle $E=TM$ is defined by
$$
\cH(X,Y):=\eta\big(I(X),Y\big) \ , 
$$
for all $X, Y \in \mathsf{\Gamma}(TM),$ 
where $I \in {\sf Aut}_\unit(TM)$ with $I^2=\unit$ and $I\neq\pm\,\mathds{1}$. It satisfies 
$$
\eta^{-1}{}^\sharp\big(\cH^\flat(X)\big)= \cH^{-1}{}^\sharp\big(\eta^\flat(X)\big) \ ,
$$
for all $X \in \mathsf{\Gamma}(TM).$ Then\footnote{The automorphism $I
  \in {\sf Aut}_\unit(TM),$ together with the split signature metric
  $\eta,$ is called a `chiral structure' in~\cite{Freidel2019} where it is denoted by $J$. Thus a chiral structure defines a generalized metric on an almost para-Hermitian manifold.} $I(X)= \eta^{-1}{}^\sharp(\cH^\flat(X)),$
for all $X \in \mathsf{\Gamma}(TM)$.
\end{remark}

Following \cite{Jurco2016, gualtieri:tesi}, we will use the splitting
of the tangent bundle of an almost para-Hermitian manifold to
demonstrate some similarities with the differential geometry of exact
Courant algebroids. To explore these analogies, we establish
\begin{proposition} \label{gbparaherm}
Let $(M,K, \eta)$ be an almost para-Hermitian
manifold. A generalized metric $V_+\subset TM$ defines a unique pair $(g_+, b_+)$
of a fiberwise
Riemannian metric $g_+ \in \mathsf{\Gamma}(\midodot^2 L_+^*)$ on the
sub-bundle $L_+\subset TM$ and a 2-form $b_+\in \mathsf{\Gamma}(\midwedge^2 L_+^*).$
Conversely, any such pair $(g_+,b_+)$ uniquely defines a generalized metric.
\begin{proof}
Since $L_+$ and $L_-$ are both maximally isotropic with respect to
$\eta$, and $V_+$ is maximally positive-definite, it follows that $L_+ \cap V_+= L_- \cap V_+ = 0.$ 
The orthogonal complement $V_-$ is maximally negative-definite with
respect to $\eta,$ so also $L_+ \cap V_-= L_- \cap V_- = 0.$
Thus given any vector bundle isomorphism $\gamma \in {\sf Hom}(L_+,
L_-),$ we can regard $V_+$ as the bundle
$$
V_+=\big\{X_V=X_+ + \gamma(X_+) \ \big| \  X_+ \in L_+ \big\} \ .
$$
Positive-definiteness of $V_+$ also implies
$$
\eta(X_V,X_V)=\eta\big(X_+ + \gamma(X_+),X_+ + \gamma(X_+)\big)= 2\,
\eta\big(\gamma(X_+), X_+ \big) \geq 0 \ .
$$

Since $\gamma \in {\sf
  Hom}(L_+,L_-)$ is a vector bundle isomorphism, let us consider the
associated tensor $\underline{\gamma} \in \mathsf{\Gamma}(L_+^*\otimes L_-)$ and
decompose it into a symmetric part and a skew-symmetric part: $\underline{\gamma}= \underline{\gamma_g} + \underline{\gamma_b},$ where $\underline{\gamma_g}
, \underline{\gamma_b} \in \mathsf{\Gamma}(L_+^*\otimes L_-)$ induce 
vector bundle morphisms $\gamma_g,\gamma_b$ such that
\be \label{decgam}
\eta\big(\gamma_g(X_+),Y_+\big)= \eta(X_+, \gamma_g(Y_+)\big) \qquad
\mbox{and} \qquad \eta\big(\gamma_b(X_+),Y_+\big)= -\eta\big(X_+,
\gamma_b(Y_+) \big) \ ,
\ee
for all $X_+,Y_+ \in \mathsf{\Gamma}(L_+).$ 
Then
$$
\eta(X_V,X_V)=2\,\eta \big(\gamma_g (X_+), X_+\big) \geq 0 \ ,
$$ 
and $\underline{\gamma_g}$ is non-degenerate.
Thus the symmetric part of $\underline{\gamma} \in
\mathsf{\Gamma}(L_+^* \otimes L_-)$ defines a fiberwise Riemannian
metric on $L_+,$ which we denote by $g_+ \in \mathsf{\Gamma}(\midodot^2 L_+^*)$,
such that
$$
g_+(X_+, Y_+)= \eta\big(\gamma_g(X_+),Y_+\big) \ ,
$$ 
for all $X_+, Y_+ \in \mathsf{\Gamma}(L_+).$ 
Similarly, the inverse map $\gamma_g^{-1}: L_- \rightarrow L_+$
induces a fiberwise metric on $L_-$ which we denote by $g_-\in \mathsf{\Gamma}(\midodot^2 L_-^*).$ The
fiberwise metrics $g_+$ and $g_-$ are not independent, since 
\be \label{eq:g-}
g_-(X_-,Y_-)= g^{-1}_+\big(\eta^\flat(X_-),\eta^\flat(Y_-)\big) \ ,
\ee
for all $X_-,Y_-\in\mathsf{\Gamma}(L_-)$.
The skew-symmetric part of $\underline{\gamma}$ defines a 2-form
$b_+ \in
\mathsf{\Gamma}(\midwedge^2 L_+^*)$, such that
\be\nonumber
b_+(X_+,Y_+) = \eta\big(\gamma_b(X_+),Y_+\big)
\ee
for all $X_+, Y_+ \in \mathsf{\Gamma}(L_+).$ 

We can now introduce an automorphism $I \in {\sf Aut}_\unit(TM)$ by 
\be
I=
\begin{pmatrix}
- \gamma_g^{-1} \circ \gamma_b & \gamma_g^{-1} \\
\gamma_g - \gamma_b \circ \gamma_g^{-1} \circ \gamma_b & \gamma_b \circ \gamma_g^{-1}
\end{pmatrix}
\ , \nonumber
\ee
in the splitting \eqref{eq:splittingTM} defined by $K.$ It is
straightforward to show that $I^2=\unit$, and that the eigenbundles of
$I$ are $V_+$ and its orthogonal complement $V_-$ with respect to
$\eta$.\footnote{The eigenbundle $V_-$ can be regarded as 
$$
V_-=\big\{X_+
  +(-\gamma_g+ \gamma_b)(X_+) \ \big| \ X_+ \in L_+\big\} \ .
$$} We finally obtain the corresponding Riemannian metric $\cH,$ as in Definition~\ref{genme}, given by 
\begin{align*}
\cH(X,Y)&=  \eta\big(I(X),Y\big) \\[4pt]
&= \eta\big(\gamma_g(X_+),Y_+\big)
  -\eta\big(\gamma_b(\gamma_g^{-1}(\gamma_b(X_+))), Y_+
  \big)-\eta\big(\gamma_g^{-1}(\gamma_b(X_+)),Y_- \big) \\
& \qquad + \eta\big(\gamma_b(\gamma_g^{-1}(X_-)), Y_+\big)+
  \eta\big(\gamma_g^{-1}(X_-),Y_- \big) \\[4pt]
&= g_+(X_+,Y_+) +g_-\big(\gamma_b(X_+),
  \gamma_b(Y_+)\big)-g_-\big(\gamma_b(X_+),Y_- \big) \\
& \qquad - g_-\big(X_-, \gamma_b(Y_+)\big)+ g_-(X_-,Y_-)
\end{align*}
for all $X, Y \in \mathsf{\Gamma}(TM),$ 
where we used the skew-symmetry of $\gamma_b$ from \eqref{decgam}. 
In matrix form, by fixing the splitting \eqref{eq:splittingTM} of $TM$ associated with $K,$ the generalized metric reads
\be \label{gemeju}
\cH= \bigg( \begin{matrix}
g_+ + \underline{\gamma_b^{\rm t}}\ g_-\ \underline{\gamma_b} & -\underline{\gamma_b^{\rm t}}\ g_- \\ -g_-\ \underline{\gamma_b} & g_-
\end{matrix} \bigg) \ ,
\ee
where $\gamma_b^{\rm t}: L_-^* \rightarrow L_+^*$ is the transpose
map.

Conversely, starting with a pair $(g_+, b_+),$ we can write the
generalized metric $\cH$ in \eqref{gemeju} and then identify the
sub-bundle $V_+$ by using the inverse of the metric $\eta.$  
\end{proof} 
\end{proposition}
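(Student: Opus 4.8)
The plan is to use the fact that $L_+$ and $L_-$ are maximally isotropic while $V_+$ is maximally positive-definite to realize $V_+$ as the graph of a bundle morphism over $L_+$. First I would record the intersection properties: if $X\in L_\pm\cap V_+$ then isotropy of $L_\pm$ gives $\eta(X,X)=0$, while positive-definiteness of $V_+$ gives $\eta(X,X)>0$ unless $X=0$, so $L_\pm\cap V_+=0$; the same argument applied to the maximally negative-definite complement $V_-$ yields $L_\pm\cap V_-=0$. Since ${\rm rank}(V_+)={\rm rank}(L_+)=d$ and $V_+\cap L_-=0$, the projection $\Pi_+|_{V_+}\colon V_+\to L_+$ is a fiberwise isomorphism; inverting it lets me write each element of $V_+$ uniquely as $X_+ + \gamma(X_+)$ for a unique $\gamma\in{\sf Hom}(L_+,L_-)$, which is moreover injective (hence bijective, by equal rank) because $V_+\cap L_+=0$.

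Next I would extract $(g_+,b_+)$ by decomposing the associated tensor $\underline{\gamma}\in\mathsf{\Gamma}(L_+^*\otimes L_-)$ into its symmetric and skew-symmetric parts $\underline{\gamma}=\underline{\gamma_g}+\underline{\gamma_b}$ as characterized by \eqref{decgam}, and setting $g_+(X_+,Y_+):=\eta(\gamma_g(X_+),Y_+)$ and $b_+(X_+,Y_+):=\eta(\gamma_b(X_+),Y_+)$. The decisive point is that $g_+$ is genuinely Riemannian: evaluating $\eta$ on a general $X_V=X_+ + \gamma(X_+)\in V_+$ and using isotropy of $L_\pm$ collapses the expression to $\eta(X_V,X_V)=2\,\eta(\gamma(X_+),X_+)$, and since the skew part drops out on the diagonal this equals $2\,g_+(X_+,X_+)$, which positive-definiteness of $V_+$ forces to be strictly positive for $X_+\neq 0$. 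Non-degeneracy of the pairing $\eta$ between $L_+$ and $L_-$ (a consequence of their maximal isotropy in the non-degenerate $\eta$) then shows that $\gamma_g$ is a fiberwise isomorphism, justifying the inverse $\gamma_g^{-1}$ appearing in the matrix form \eqref{gemeju}, while $b_+$ is skew-symmetric by construction so that $b_+\in\mathsf{\Gamma}(\midwedge^2 L_+^*)$. Uniqueness of $(g_+,b_+)$ is automatic, as $\gamma$ is determined by $V_+$ as a graph and its symmetric/skew splitting is canonical.

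For the converse I would reverse these steps: given $(g_+,b_+)$, the non-degenerate pairing $\eta$ determines unique morphisms $\gamma_g,\gamma_b\colon L_+\to L_-$ representing $g_+$ and $b_+$ through \eqref{decgam}, and I set $V_+$ to be the graph of $\gamma=\gamma_g+\gamma_b$; the same diagonal computation shows $V_+$ is positive-definite, and a positive-definite rank-$d$ sub-bundle in signature $(d,d)$ is automatically maximal. Equivalently, and closer to Definition~\ref{genme}, one assembles the automorphism $I\in{\sf Aut}_\unit(TM)$ directly from $\gamma_g$ and $\gamma_b$ as in the displayed matrix and verifies $I^2=\unit$ by a block computation, reading off $V_+$ as its $+1$-eigenbundle. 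I expect the main obstacle to be the positive-definiteness claim for $g_+$: one must check that it is precisely the symmetric part $\gamma_g$ that governs $\eta(X_V,X_V)$, and that positivity of the latter transfers to positivity of $g_+$ and to non-degeneracy of $\gamma_g$, all while tracking carefully which arguments lie in $L_+$ and which in $L_-$ under the pairing $\eta$. The remaining checks --- skew-symmetry of $b_+$, the identity $I^2=\unit$, and the relation \eqref{eq:g-} between $g_+$ and $g_-$ --- are routine once the graph description and the symmetric/skew decomposition are in hand.
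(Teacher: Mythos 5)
Your proposal is correct and follows essentially the same route as the paper's proof: realizing $V_+$ as the graph of $\gamma\in{\sf Hom}(L_+,L_-)$ via the intersection properties, splitting $\underline{\gamma}$ into symmetric and skew parts to extract $(g_+,b_+)$, deducing positivity of $g_+$ from the diagonal evaluation $\eta(X_V,X_V)=2\,\eta(\gamma_g(X_+),X_+)$, and reversing the construction for the converse. Your added justifications (the rank argument for why the projection $\Pi_+|_{V_+}$ is an isomorphism, and the strict positivity needed for non-degeneracy of $\gamma_g$) are sound refinements of steps the paper treats more briefly.
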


\begin{example}\label{ex:sl2c}
Let $M={\sf SL}(2,\mathbb{C})$ regarded as a six-dimensional
real Lie group. As a complex Lie group, it has a non-degenerate
Cartan-Killing form 
$$
\mathrm{Tr}: {\sf SL}(2, \mathbb{C}) \longrightarrow \mathbb{C} \ .
$$ 
As a real Lie group, ${\sf SL}(2, \mathbb{C})$ inherits two
distinct non-degenerate real pairings $2\,\Im\circ \mathrm{Tr}$ and
$2\,\Re\circ\mathrm{Tr}.$ The former has split signature and defines
the Manin triple polarization 
$$
{\sf SL}(2,\IC)={\sf SU}(2)\Join{\sf SB}(2,\IC) \ ,
$$
where ${\sf SB}(2,\IC)$ is the Borel subgroup of $2\times2$ upper
triangular complex matrices, while the
latter defines a generalized metric on the tangent bundle $ T{\sf SL}(2, \mathbb{C}).$

For this, we recall that, in a suitable basis of the Lie algebra $\mathfrak{sl}(2,\IC),$ the generators satisfy the commutation relations
\be\nonumber
[T_i,T_j] = \tfrac12\,\varepsilon_{ij}{}^k\, T_k \ , \quad [ T_i,\tilde T^j] = \tfrac12\,
\varepsilon_{ki}{}^j \,\tilde T^k-\tfrac12\,
\varepsilon^{kjl}\,\varepsilon_{l3i}\, T_k\quad \mbox{and} \quad [\tilde T^i,\tilde
T^j]=\tfrac12\, \varepsilon^{ijl}\, \varepsilon_{l3k}\,\tilde T^k
\ .
\ee
The splitting of the Lie algebra 
$$
\mathfrak{sl}(2,\IC)=
\mathfrak{su}(2)\oplus {\mathfrak{sb}}(2,\IC)
$$ 
as a vector space
induces a left-invariant para-complex structure on ${\sf SL}(2,
\mathbb{C}).$  The ${\sf O}(d,d)$-invariant metric $\eta$ compatible
with the para-complex structure is obtained from the Cartan-Killing
form as $\langle a,b\rangle = 2\,\Im \big(\mathrm{Tr}(a\,b)\big),$ for
$a,b\in\mathfrak{sl}(2,\mathbb{C}),$ which gives the duality pairing
between the Lie subalgebras $\mathfrak{su}(2)$ and
$\mathfrak{sb}(2,\mathbb{C})$, with respective generators $\{T_i\}$
and $\{\tilde T^i\}$, and hence realizes ${\sf SU}(2)$ and
${\sf SB}(2,\IC)$ as dual Lie subgroups of the Drinfel'd double ${\sf
  SL}(2,\mathbb{C})$.\footnote{See \cite{SzMar} for further details
  regarding para-Hermitian structures on Drinfel'd double Lie groups.}
Writing 
$$
e_i^\pm=\tfrac1{\sqrt2}\,\big(T_i\pm(\delta_{ij}\pm\varepsilon_{ij3}\,\tilde
T^j)\big) \ ,
$$
from the isotropy of $\mathfrak{su}(2)$ and
$\mathfrak{sb}(2, \mathbb{C})$ it follows that 
$$
\langle
e_i^+,e_j^+\rangle=\delta_{ij}=-\langle e_i^-,e_j^-\rangle \qquad
\mbox{and} \qquad
\langle e_i^+,e_j^-\rangle=0 \ .
$$

On the other hand, we also see that
$\langle e_i^\pm,e_j^\pm\rangle=\pm\, 2\,\Re\bigl( {\rm Tr}(e^\pm_i \,
e^\pm_j)\bigr).$ The generalized metric $\mathcal{H}$ is therefore
obtained from the other natural inner product $(a,b) = 2\,\Re\big(\mathrm{Tr}(a\,b)\big)$ (which does not define a Manin triple polarization), for which one writes
\be\label{eq:sl2Cmetric}
\cH = \delta^{ij}\, \big(e^{*\,+}_i\otimes e^{*\,+}_j+ e^{*\,-}_i\otimes e^{*\,-}_j\big) \ .
\ee
The scalar product $2\,\Re\big(\mathrm{Tr}(\, \cdot
\,)\big)$ thus identifies a generalized metric: the sub-bundle $V_+
\subset T {\sf SL}(2, \mathbb{C})$ spanned by $e_i^+$ which is defined
via the map $\gamma: \mathfrak{su}(2) \rightarrow \mathfrak{sb}(2,
\mathbb{C})$ given by 
$$
\underline{\gamma} = (\delta_{ij}+
\varepsilon_{ij3})\,{T^*}^i \otimes \tilde{T}^j \ .
$$
Expanding this out
with respect to the splitting $\mathfrak{sl}(2,\IC)=
\mathfrak{su}(2)\oplus{\mathfrak{sb}}(2,\IC)$, and comparing with
\eqref{gemeju}, then identifies the metric $(g_+)_{ij}=\delta_{ij}$ as the
Cartan-Killing metric and the 2-form $(b_+)_{ij}=\varepsilon_{ij3}$ on
$\mathfrak{su}(2)$, which lead to the standard round metric and
Kalb-Ramond field (whose $H$-flux is the volume form) on the 3-sphere
${\sf SU}(2)=\IS^3$; see~\cite{mpv, mpv2} for further details. This
example will be generalized to generic Drinfel'd double Lie groups in
Section~\ref{sec:Manin}.
\end{example}

The statement of Proposition \ref{gbparaherm} has a counterpart for
any para-Hermitian vector bundle, with exactly the same proof,
resulting in

\begin{proposition} 
Let $(E,K, \eta)$ be a para-Hermitian vector bundle over a manifold $\cQ$. A
generalized metric $V_+\subset E$ defines a unique pair $(g_+, b_+),$
where $g_+ \in \mathsf{\Gamma}(\midodot^2 L_+^*)$ is a fiberwise
Riemannian metric on the sub-bundle $L_+\subset E$ and $b_+\in
\mathsf{\Gamma}(\midwedge^2 L_+^*)$ is a 2-form on $L_+$.
Conversely, any such pair $(g_+,b_+)$ uniquely defines a generalized metric on $(E,K,\eta)$.
\end{proposition}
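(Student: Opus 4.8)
The plan is to observe that the proof of Proposition~\ref{gbparaherm} is entirely fiberwise linear algebra performed over each point of the base, and that it uses only the data of a para-Hermitian vector bundle in the sense of Definition~\ref{parahermvector}: the splitting $E = L_+ \oplus L_-$ into maximally isotropic eigenbundles of $K$, together with the split signature metric $\eta$. Since none of the steps invoke the tangent-bundle structure of $TM$ (there is no use of the Lie bracket, the exterior derivative, or any identification of forms with sections of $\midwedge^\bullet T^*M$ beyond the purely algebraic object $\midwedge^2 L_+^*$), the argument carries over verbatim. Concretely, I would first record that maximal isotropy of $L_\pm$ together with maximal positive- and negative-definiteness of $V_\pm$ forces $L_\pm \cap V_\pm = 0$, which lets me represent the generalized metric as the graph $V_+ = \{X_+ + \gamma(X_+) \mid X_+ \in L_+\}$ of a vector bundle isomorphism $\gamma \in {\sf Hom}(L_+, L_-)$.

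Next I would extract the pair $(g_+, b_+)$. The key structural point --- the only place where the para-Hermitian axioms really enter --- is that isotropy of $L_+$ makes $\eta(\gamma(X_+), Y_+)$ a well-defined fiberwise bilinear form on $L_+$, since the pairing is forced into the $L_+$--$L_-$ block of $\eta$. Decomposing the associated tensor $\underline{\gamma} = \underline{\gamma_g} + \underline{\gamma_b}$ into its symmetric and skew parts as in~\eqref{decgam}, positive-definiteness of $V_+$ shows that $g_+(X_+, Y_+) = \eta(\gamma_g(X_+), Y_+)$ is a fiberwise Riemannian metric (in particular $\gamma_g$ is non-degenerate), while $b_+(X_+, Y_+) = \eta(\gamma_b(X_+), Y_+)$ defines the 2-form $b_+\in \mathsf{\Gamma}(\midwedge^2 L_+^*)$. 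Assembling $I \in {\sf Aut}_\unit(E)$ from $\gamma_g, \gamma_b$ exactly as in the proof of Proposition~\ref{gbparaherm}, I would verify $I^2 = \unit$ by the same block computation and confirm that its $\pm\,1$-eigenbundles are $V_+$ and $V_-$, yielding the generalized metric $\cH(Z,W) = \eta(I(Z), W)$ in the block form~\eqref{gemeju}.

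For the converse I would run the construction backwards: given any $(g_+, b_+)$ with $g_+$ fiberwise Riemannian and $b_+\in \mathsf{\Gamma}(\midwedge^2 L_+^*)$, define $\gamma_g, \gamma_b$ via $\eta$, write down $\cH$ through~\eqref{gemeju}, and recover $V_+$ as the maximally positive-definite eigenbundle using $\eta^{-1}{}^\sharp$. I do not expect any genuine obstacle here, since the entire argument is algebraic; the only thing that warrants a careful check is that each step depends solely on the para-Hermitian vector bundle data of Definition~\ref{parahermvector} and on the fiberwise nature of the eigenbundles $L_\pm \subset E$, so that all constructions produce smooth bundle morphisms over $\cQ$ rather than merely pointwise-defined maps.
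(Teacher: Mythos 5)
Your proposal is correct and matches the paper exactly: the paper simply notes that the statement "has a counterpart for any para-Hermitian vector bundle, with exactly the same proof" as Proposition~\ref{gbparaherm}, which is precisely the observation you make and then justify by checking that every step of that proof is fiberwise linear algebra depending only on the data $(E,K,\eta)$. Your additional care in verifying that no tangent-bundle-specific structure (Lie bracket, exterior derivative) is used is the right sanity check, though the paper leaves it implicit.
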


\begin{example}
Let $E=\mathbb{T}\cQ=T\cQ\oplus
T^*\cQ$ be the generalized tangent bundle over a manifold $\cQ$. A generalized metric $V_+\subset \mathbb{T}\cQ$ is equivalent to
a Riemannian metric $g_+$ and a 2-form $b_+ $ on $\cQ$. In this case the
bundle maps $\gamma_g$ and $\gamma_b$ appearing in the proof of
Proposition~\ref{gbparaherm} correspond to $g_+$ and $b_+$ themselves. See \cite{Jurco2016, gualtieri:tesi} for further details.
\end{example}

\begin{remark}
A generalized metric on an almost para-Hermitian manifold can also be related to a generalized metric on a generalized tangent bundle. 
 For this, we assume that the eigenbundle $L_-$ of the almost
 para-Hermitian manifold is involutive, i.e. it admits integral
 manifolds given by the leaves of a regular foliation $\cF_-.$ We can
 construct a generalized tangent bundle $\mathbb{T}\cS_-= T\cS_-
 \oplus T^*\cS_-$ on a leaf $\cS_-$ of the foliation. There is a morphism from
 $\mathbb{T}\cS_-$ to $TM$ covering the inclusion
 $\cS_-\hookrightarrow M$, which is induced at the level of sections by the split signature metric $\eta$ through
$$
{\sf p}_-:\mathsf{\Gamma}(\mathbb{T}\mathcal{S}_-) \longrightarrow \mathsf{\Gamma}(TM) 
\ , \quad X+\xi \longmapsto {\sf
  p}_-(X+\xi)=X+\eta^{-1}{}^\sharp(\xi) \ . 
$$
This pulls back a generalized metric on a foliated almost
para-Hermitian manifold, with the foliation associated with the almost
para-complex structure, to a generalized metric on the generalized
tangent bundle $\mathbb{T}\cS_-$ constructed on a leaf space $\cS_-$
of the foliation $\cF_-$. This description differs from that of~\cite{Freidel2017,Freidel2019,
   Svoboda2018, SzMar} where the union of the leaf spaces $\cF_-$ was used
 instead of a single leaf space $\cS_-$.\footnote{See also~\cite{Vysoky:2019cra} for a similar approach to
 this relation in the setting of exact Courant algebroids.}
\end{remark}

\medskip

\subsection{Born Geometry} \label{sec:Born} ~\\[5pt]
We will now connect with the formalism of \cite{Freidel2019}, starting
with the following notion.
\begin{definition} \label{compagenmetr}
A \emph{compatible generalized metric} on an almost para-Hermitian manifold $(M,K,\eta)$ is a generalized metric $\mathcal{H}$ on $M$ which is compatible with the fundamental 2-form $\omega$ in the sense that
\be\nonumber
\omega^{-1}{}^\sharp\big(\mathcal{H}^\flat(X)\big) =
-\mathcal{H}^{-1}{}^\sharp\big(\omega^\flat(X)\big) \ ,  
\ee
or equivalently
$$ 
\omega^{-1}\big(\cH^\flat(X), \cH^\flat(Y)\big)=-\omega(X,Y)\ ,
$$
for all $ X, Y \in \mathsf{\Gamma}(TM).$ 
The triple $(\eta,\omega,\mathcal{H})$ is a \emph{Born geometry} on
$M$ and $(M,\eta,\omega,\mathcal{H})$ is a \emph{Born manifold}.
\end{definition}

Definition \ref{compagenmetr} is equivalent to the original definition
of \cite{Freidel2019} where the compatibility conditions are written as
\be\nonumber
\eta^{-1}{}^\sharp\,\circ \, \mathcal{H}^\flat = \mathcal{H}^{-1}{}^\sharp\,\circ\, \eta^\flat \qquad \mbox{and} \qquad \omega^{-1}{}^\sharp\,\circ \, \mathcal{H}^\flat = -\mathcal{H}^{-1}{}^\sharp\,\circ \, \omega^\flat \ .
\ee
A Born geometry can be regarded as a $\sf G$-structure on $M$ with 
$$
{\sf G}={\sf O}(d,d)\cap{\sf Sp}(2d,\mathbb{R})\cap {\sf O}(2d) = {\sf
  O}(d) \ . 
$$ 
A Born geometry is also a special type of generalized metric, as we show in
\begin{proposition}\label{prop:cHg+}
A Born structure on an almost para-Hermitian manifold $(M,K,\eta)$ is a generalized metric $\cH$ specified solely by a fiberwise metric $g_+$ on the eigenbundle $L_+.$ 
\begin{proof}
A generalized metric on $(M, K, \eta)$ satisfies $\eta^{-1}{}^\sharp\,\circ
\,\mathcal{H}^\flat = \mathcal{H}^{-1}{}^\sharp\,\circ \, \eta^\flat$ by definition. Using
\eqref{gemeju} it is then easy to see that the condition $\omega^{-1}{}^\sharp\,\circ \,\mathcal{H}^\flat = -\mathcal{H}^{-1}{}^\sharp\,\circ \,\omega^\flat$ holds if and only if $\gamma_b=0,$ i.e. $b_+=0.$
\end{proof}
\end{proposition}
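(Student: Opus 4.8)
The first of the two Born compatibility conditions, $\eta^{-1}{}^\sharp\circ\cH^\flat = \cH^{-1}{}^\sharp\circ\eta^\flat$, is nothing but the statement that $I:=\eta^{-1}{}^\sharp\circ\cH^\flat$ squares to the identity, which holds for \emph{any} generalized metric by Definition~\ref{genme}; hence it imposes no constraint on the pair $(g_+,b_+)$. The content of the proposition is therefore entirely carried by the second condition $\omega^{-1}{}^\sharp\circ\cH^\flat = -\cH^{-1}{}^\sharp\circ\omega^\flat$, and the plan is to recast this as an algebraic relation between $I$ and the para-complex structure $K=\eta^{-1}{}^\sharp\circ\omega^\flat$, and then read it off from the explicit matrix form of $I$ obtained in the proof of Proposition~\ref{gbparaherm}.

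First I would rewrite the $\omega$-compatibility condition purely in terms of $I$ and $K$. Since $\omega^\flat=\eta^\flat\circ K$ and $K^2=\unit$, one has $\omega^{-1}{}^\sharp=K\circ\eta^{-1}{}^\sharp$, so that $\omega^{-1}{}^\sharp\circ\cH^\flat=K\circ I$. On the other hand, using $\cH^{-1}{}^\sharp\circ\eta^\flat=I^{-1}=I$ (which is again the first compatibility condition) together with $\omega^\flat=\eta^\flat\circ K$ gives $\cH^{-1}{}^\sharp\circ\omega^\flat=I\circ K$. The second Born condition thus becomes $K\circ I=-I\circ K$, i.e. $I$ and $K$ anticommute; equivalently, $\omega^{-1}{}^\sharp\circ\cH^\flat$ squares to $-\unit$. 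The one delicate point here is the bookkeeping of the musical isomorphisms $\sharp\colon E^*\to E$ and $\flat\colon E\to E^*$, and in particular the identity $\omega^{-1}{}^\sharp=K\circ\eta^{-1}{}^\sharp$; this is where I would be most careful, since everything that follows is a routine computation.

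Finally I would evaluate the anticommutator $I\circ K+K\circ I$ in the splitting $TM=L_+\oplus L_-$ determined by $K$, where $K=\mathrm{diag}(\unit_{L_+},-\unit_{L_-})$ and $I$ is given by the matrix displayed in the proof of Proposition~\ref{gbparaherm}. Multiplying $K$ through $I$ on the left and on the right, the diagonal blocks double while the off-diagonal blocks cancel, yielding a block-diagonal endomorphism whose entries are proportional to $\gamma_g^{-1}\circ\gamma_b$ and $\gamma_b\circ\gamma_g^{-1}$. Since $\gamma_g\colon L_+\to L_-$ is an isomorphism, the vanishing of these blocks is equivalent to $\gamma_b=0$, that is, to $b_+=0$. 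Hence the $\omega$-compatibility that promotes a generalized metric to a Born structure holds precisely when the $2$-form $b_+$ vanishes, leaving $\cH$ determined solely by the fiberwise metric $g_+$ on $L_+$, as claimed. (Alternatively, one could substitute \eqref{gemeju} directly into the second compatibility condition and compute, but passing through the anticommutator $\{I,K\}=0$ keeps the algebra transparent.)
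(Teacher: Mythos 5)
Your proof is correct and follows essentially the same route as the paper's: the first compatibility condition is automatic for any generalized metric, and the second reduces, via the explicit block form from the proof of Proposition~\ref{gbparaherm}, to the vanishing of $\gamma_b$. Your reformulation of the $\omega$-compatibility as the anticommutation relation $K\circ I+I\circ K=0$ is just a transparent way of organizing the block computation that the paper leaves implicit in the phrase ``it is then easy to see,'' and your bookkeeping of the musical isomorphisms (in particular $\omega^{-1}{}^\sharp=K\circ\eta^{-1}{}^\sharp$ and $\cH^{-1}{}^\sharp\circ\eta^\flat=I$) checks out.
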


In other words, the compatible Riemannian metric $\cH$ can be regarded
as a choice of a metric on the sub-bundle $L_+$ in the polarization
\eqref{eq:splittingTM} associated with $K.$ In this polarization, any vector field decomposes as 
$$
X= X_+ + X_-  \ \in \ \mathsf{\Gamma}(TM)\ , 
$$ 
where $X_+ \in \mathsf{\Gamma}(L_+)$ and $X_- \in
\mathsf{\Gamma}(L_-),$ and thus we write
\be \label{eq:diaghermmetric}
\cH(X,Y)= g_+(X_+ , Y_+)+g_-(X_-, Y_-)\ ,
\ee
where $g_+$ is a fiberwise metric on the sub-bundle $L_+$ and $g_-$ is
given by \eqref{eq:g-}. In matrix notation, the compatible generalized metric reads
\be \nonumber
\mathcal{H} = \bigg( \begin{matrix}
g_+ & 0 \\ 0 & g_-
\end{matrix} \bigg) \ ,
\ee
in the splitting given by $K.$

\begin{example} \label{fiberborn}
Let $\pi: M \rightarrow \cQ$ be a fibered manifold and define a splitting
of $TM= {\rm Im}(s) \oplus L_{\tt v}(M)$ as in
Example~\ref{fiberexample}. This defines a split signature metric $\eta$ on
$TM.$ Choose an isotropic splitting of the short exact sequence
\eqref{bunshort} with respect to $\eta$. Hence we choose an Ehresmann
connection $s: \pi^*(T\cQ) \rightarrow TM$ for which the almost para-complex structure $K_s$ induced by  the splitting and the split signature metric $\eta$ on $TM$ are compatible in the sense of Definition~\ref{paraher}, i.e. $TM$ carries an almost para-Hermitian structure $(K_s,\eta).$ Assume that the base $\cQ$ is a Riemannian manifold with metric $g.$ The horizontal lift of $g,$ defined by 
$$
g_+\big(s(X), s(Y)\big)= g(X,Y)\ , 
$$ 
for all $X, Y \in \mathsf{\Gamma}(T\cQ),$ gives a fiberwise Riemannian
metric on  ${\rm Im}(s)$. This then defines a Born geometry on $M$ with compatible generalized metric $\cH$ given by 
$$
\cH\big(s(X) + X_{\tt v}, s(Y)+Y_{\tt v}\big)=g_+\big(s(X),
s(Y)\big)+g_-(X_{\tt v}, Y_{\tt v})\ ,
$$
for all $X, Y \in \mathsf{\Gamma}(T\cQ)$ and $X_{\tt v}, Y_{\tt v} \in
\mathsf{\Gamma}(L_{\tt v}(M))$. In other words, the pullback $g_+= \pi^*\, g$ defines a compatible
generalized metric on $TM.$ This is the standard example of a
bundle-like metric obtained from the lift of structures\footnote{See \cite{Kappos2002} for the lifts to
tangent bundles as Sasaki metrics.} from the base
$\cQ$ to the total space $M$; such a metric is characterized by
the property of having a horizontal component which is constant along
the fibers, and these will play a prominent role in our discussions of
gauged sigma-models later on. Since any manifold $\cQ$ admits a Riemannian metric, we can
always define a Born structure of this type for any almost
para-Hermitian structure on $TM,$ where $M\to \cQ$ is a fibered manifold.
\end{example}

\begin{remark}
A similar definition of a Born geometry can be given for a
para-Hermitian vector bundle $E\to \cQ$. Then an analogous description to that
above follows by simply
replacing $TM$ with $E$ everywhere.
\end{remark}

\medskip

\subsection{Generalized Flux Formulation}  \label{sec:fluxes} ~\\[5pt]
The generalized flux picture of double field theory~\cite{Siegel:1993xq,Siegel:1993th,Geissbuhler2013,Aldazabal2013} can be described in the framework of para-Hermitian
geometry by appealing to a local characterization of the eigenbundles $L_\pm$ underlying an almost para-Hermitian manifold $(M, K, \eta).$
Specifying two complementary sub-bundles of $TM$ is equivalent to fixing a local frame on $\mathsf{\Gamma}(TM),$ i.e. a set of local vector fields $\{Z_I\} \subset \mathsf{\Gamma}(TM)$ that are linearly independent over $C^{\infty}(M)$, and which splits into two sets $\{Z_i\}$ and $\{ \tilde{Z}^i\}$ respectively spanning $\mathsf{\Gamma}(L_+)$ and $\mathsf{\Gamma}(L_-)$ locally. The basis $\{Z_I\}$ closes a Lie algebra 
\be \label{structfun}
[Z_I,Z_J]= {C_{IJ}}^K\, Z_K \ ,
\ee
where ${C_{IJ}}^K \in C^{\infty}(M),$ which can be written in the form
of a Roytenberg algebra
\begin{align}
[Z_m, Z_n]&={f_{mn}}^k\, Z_k + H_{mnk}\, \tilde{Z}^k \ , \nonumber \\[4pt] [ Z_m,\tilde{Z}^n]&= {f_{km}}^n \, \tilde{Z}^k + Q_m{}^{nk}\, Z_k \ , \label{fluxes}\\[4pt] [\tilde{Z}^m, \tilde{Z}^n]&=Q_k{}^{mn}\, \tilde{Z}^k + R^{mnk}\, Z_k \ , \nonumber
\end{align}
where the structure functions are called \emph{generalized fluxes} associated with the chosen frame. The Jacobi identity for the Lie brackets \eqref{structfun} then yields the Bianchi identities for the generalized fluxes. Here we did not
assume that either of the sub-bundles $L_\pm$ are integrable, and in general neither $\{Z_i\}$ nor $\{\tilde{Z}^i\}$ close a Lie subalgebra. 

Analogously, we may consider the dual local coframe given by 1-forms $\{\Theta^I\}\subset \mathsf{\Gamma}(T^*M)$, that split into a set $\{\Theta^i\}$ which spans $\mathsf{\Gamma}(L_+^*)$ and a set $\{\tilde{\Theta}_i\}$ which spans $\mathsf{\Gamma}(L_-^*).$
Since $\{ \Theta^I\}$ is a coframe, it satisfies the Maurer-Cartan equations
$$ 
\de \Theta^K = -\tfrac{1}{2}\, {C_{IJ}}^K\, \Theta^I \wedge \Theta^J \ ,
$$ 
which can also be read as
\begin{align}
\de \Theta^p &= -\tfrac{1}{2}\,\bigl({f_{mn}}^p\, \Theta^m \wedge
               \Theta^n +R^{mnp}\, \tilde{\Theta}_m \wedge
               \tilde{\Theta}_n \bigr) - Q_n{}^{mp}\, \Theta^n \wedge
               \tilde{\Theta}_m \ , \nonumber \\[4pt]
\de \tilde{\Theta}_p &= {f_{pm}}^n\, \tilde{\Theta}_n \wedge
                       \Theta^m-\tfrac{1}{2}\, \big(Q_p{}^{mn}\,
                       \tilde{\Theta}_m \wedge \tilde{\Theta}_n +
                       H_{pmn}\, \Theta^m \wedge
                       \Theta^n\big) \label{eq:MC} \ .
\end{align}

If the chosen frame $\{Z_I\}$ diagonalizes the almost para-complex
structure $K,$ then using the compatibility conditions the almost para-Hermitian structure $(K, \eta, \omega)$ can be written as
\be \label{diagpara}
\underline{K} = \Theta^i \otimes Z_i - \tilde{\Theta}_i \otimes \tilde{Z}^i\ , \quad \eta= \Theta^i \otimes \tilde{\Theta}_i+\tilde{\Theta}_i \otimes \Theta^i \qquad \mbox{and} \qquad \omega=\Theta^i \wedge \tilde{\Theta}_i \ .
\ee 
Since an almost para-Hermitian structure $(K, \eta, \omega)$ is
a ${\sf GL}(d, \IR)$-structure, this means that in the local
description \eqref{diagpara}, $(K, \eta, \omega)$ retain the same form under transformations given by
\be \label{gldstruct}
\underline{\cA}= 
\begin{pmatrix}
\cA & 0 \\
0 & \cA^{\rm t}
\end{pmatrix}
\qquad \mbox{with} \quad \cA \in {\sf GL}(d, \IR)\ ,
\ee
in the polarization $TM=L_+\oplus L_-.$ The effect of these
transformations on the frame is to change the local structure
functions describing the Lie algebra \eqref{structfun}.

In addition to obstructing integrability of the eigenbundles $L_\pm$,
the generalized fluxes also present obstructions to the closure of the
fundamental 2-form $\omega$, i.e. to $(M,K,\eta)$ being an almost
para-K\"ahler manifold. This can be seen by introducing the field strength 
$$
\cK=\de\omega \ ,
$$
and using \eqref{diagpara} together with the Maurer-Cartan equations
\eqref{eq:MC} to write it in the coframe $\{\Theta^I\}$ as
\begin{align} 
\cK =
            \tfrac12\,\big( & H_{ijk}\,\Theta^i\wedge\Theta^j\wedge\Theta^k
            + f_{ij}{}^k\,\Theta^i\wedge\Theta^j\wedge\tilde\Theta_k
\nonumber \\ &  -
       Q_i{}^{jk}\,\Theta^i\wedge\tilde\Theta_j\wedge\tilde\Theta_k +
       R^{ijk}\,
       \tilde\Theta_i\wedge\tilde\Theta_j\wedge\tilde\Theta_k \big) \ .
\label{eq:cK}\end{align}
The Bianchi identity $\de\cK=0$ is equivalent to the Jacobi identity
for the Lie brackets \eqref{structfun}, and so also yields the Bianchi identities for the generalized fluxes.

We conclude this section by describing the local form of a Born geometry.
By Proposition~\ref{prop:cHg+}, a compatible generalized metric $\cH$ on an almost para-Hermitian manifold is given by a fiberwise Riemannian metric $g_+$ on the sub-bundle $L_+.$ In particular, we can always write it in a given coframe $\{ \Theta^I\}$ for $T^*M=L^*_+\oplus L^*_-$ as 
\be  \nonumber
\cH= (g_+)_{ij}\, \Theta^i \otimes \Theta^j + (g_-)^{ij}\,\tilde{\Theta}_i \otimes \tilde{\Theta}_j \ ,
\ee
where the metric $g_-$ on the sub-bundle $L_-$ is given by \eqref{eq:g-}.
A Born structure is an ${\sf O}(d)$-structure and there always exists an element $\cA \in {\sf O}(d)$ inducing a transformation $\underline{\cA}$ of the coframe for which the compatible generalized metric is locally flat:
\be \label{locgm}
\cH= \delta_{ij}\, \Theta_\cA^i \otimes \Theta_\cA^j + \delta^{ij}\,\tilde{\Theta}^\cA_i \otimes \tilde{\Theta}^\cA_j\ , 
\ee
where $\{ \Theta_\cA^I \}=\{ \Theta_\cA^i\, , \tilde{\Theta}^\cA_i \}$ is the coframe obtained from $\{ \Theta^I \}$ by applying the ${\sf O}(d)$-transform{-}ation $\cA$ in the form \eqref{gldstruct}. Such a transformation leaves the almost para-Hermitian structure $(K, \eta, \omega)$ in the same form \eqref{diagpara}, since ${\sf O}(d)\subset{\sf GL}(d, \IR)$~\cite{Freidel2019}.

\section{Generalized T-Duality} \label{ContTduality}

In this section we shall discuss how all of the structures introduced
in Section~\ref{intropara} transform under the action of a special
group generating what we will call generalized T-duality transformations.

\medskip

\subsection{${\sf O}(d,d)(M)$-Transformations} \label{int:Odd} ~\\[5pt]
Given a para-Hermitian manifold $M$, we shall characterize ${\sf
  O}(d,d)(M)$-transformations as a subgroup of the group of
fiber-preserving automorphisms of the tangent bundle ${\sf Aut}(TM).$
We start in a more general setting.

\begin{definition}
An \emph{automorphism} of
a vector bundle $\pi:E\to \cQ$ is a pair $\vartheta=(f, \bar{f}\,),$
where $f: \cQ \rightarrow \cQ$ is a diffeomorphism and $\bar{f}: E
\rightarrow E$ is a vector bundle isomorphism for which the diagram
\begin{center} 
\begin{tikzcd}
E \arrow[r, "\bar{f}"] \arrow[d, swap, " \pi "]
& E \arrow[d, "\pi "] \\ 
\cQ \arrow[r, swap, "f"]
& \cQ
\end{tikzcd}
\end{center}
commutes, i.e. $\pi \circ \bar{f}= f \circ \pi $. The map $\bar{f}$ is
a \emph{covering} of $f.$ The set of automorphisms of $E$ forms a
group under composition of diffeomorphisms of $\cQ$ and bundle
isomorphisms of $E$, which we denote by ${\sf Aut}(E)$.
\end{definition}
The action of an element $\vartheta=(f, \bar{f}\,) \in {\sf Aut}(E)$ on
sections of $E$ is denoted by $\bar{f}(Z) \in \mathsf{\Gamma}(E),$ for
all $Z \in \mathsf{\Gamma}(E).$ An important subgroup of
${\sf Aut}(E)$ is given by the automorphisms of $E$ covering the
identity, which as before will be denoted by ${\sf Aut}_\unit (E).$
Denoting by ${\sf Diff}(\cQ)$ the group of diffeomorphisms of the
manifold $\cQ$, these fit into the exact sequence of groups
$$ 
1 \longrightarrow{} {\sf Aut}_\unit (E) \longrightarrow{} {\sf Aut}(E)
\longrightarrow{} {\sf Diff}(\cQ)\ .
$$ 

\theoremstyle{definition} 
\begin{definition}
Let $E \rightarrow \cQ$ be a vector bundle with ${\rm rank}(E)=2d$ which
is endowed with a fiberwise  metric $\eta$ of signature $(d,d)$. Let ${\sf
  O}(d,d)(E)$ be the subgroup of the automorphism group ${\sf
  Aut}(E)$ which preserves $\eta,$ i.e. $\vartheta=(f,\bar{f}\,) \in
{\sf Aut}(E)$ is an element of the subgroup ${\sf O}(d,d)(E) \subset
{\sf Aut}(E)$ if and only if
\be \label{oddcon}
(\bar f{}^*\eta)(Z,W)=\eta\big(\bar{f} (Z),\bar{f} (W)\big)=\eta(Z,W) \ ,
\ee
for all $Z, W \in\mathsf{\Gamma}(E).$
\end{definition}

This section will be mainly dedicated to the case $E=TM,$ which arises
in considerations of almost para-Hermitian manifolds. 
 Then the subgroup\footnote{Here we denote this subgroup by ${\sf O}(d,d)(M)$ instead of ${\sf O}(d,d)(TM)$ for brevity.} ${\sf O}(d,d)(M) \subset {\sf Aut}(TM)$ is the natural group of isometries of the almost para-Hermitian
manifold $(M,K,\eta),$ which we call the
\emph{generalized T-duality group}. The elements of this subgroup are
also called \emph{changes of polarization} for reasons that will
become apparent later on.

\begin{example}\label{ex:autdiff}
A particularly relevant class of elements in ${\sf O}(d,d)(M)$ arise
from diffeomorphisms of $M$. Let $f \in {\sf Diff}(M)$ be a diffeomorphism of
the base space $M$ whose pullback $f^*$ preserves the metric $\eta,$ i.e. $f^*\eta=\eta.$ Then $f$ induces an element $\vartheta\in{\sf O}(d,d)(M)$ given by $\vartheta=(f, f_*),$ where $f_*: TM \rightarrow TM$ is the pushforward by $f.$ In the following we will discuss other classes of elements belonging to ${\sf O}(d,d)(M),$ particularly $B$-transformations which are examples of automorphisms covering the identity.
\end{example}

We are also particularly interested in the action induced by ${\sf
  Aut}(TM)$ on ${\sf End}(TM),$ i.e.~on smooth $(1,1)$-tensor
fields. Let $S \in {\sf End}(TM)$ and $\vartheta=(f,\bar{f}\,)\in {\sf
  Aut}(TM).$ Then the \emph{pullback} $S_\vartheta \in {\sf End}(TM)$
is defined by the commutative diagram
\begin{center} 
\begin{tikzcd}
TM \arrow[r, "\bar{f}"] \arrow[d, swap, " S_\vartheta "]
& TM \arrow[d, "S "] \\ 
TM \arrow[r, swap, "\bar f"]
& TM
\end{tikzcd}
\end{center}
which implies 
$$
S_\vartheta= \bar{f}^{-1}\circ S\circ \bar{f} \ .
$$ 
Similarly, the \emph{pushforward} $S^\vartheta\in {\sf End}(TM)$ is
defined by the commutative diagram
\begin{center} 
\begin{tikzcd}
TM \arrow[r, "\bar{f}"] \arrow[d, swap, " S "]
& TM \arrow[d, "S^\vartheta "] \\ 
TM \arrow[r, swap, "\bar f"]
& TM
\end{tikzcd}
\end{center}
so that 
$$
S^\vartheta=\bar{f}\circ S\circ \bar{f}^{-1} \ .
$$ 
It then follows that 
$$
S^\vartheta=S_{\vartheta^{-1}} \qquad \mbox{and} \qquad
S_{\vartheta}=S^{\vartheta^{-1}} \ , 
$$ 
for all $ \vartheta=(f, \bar{f}\,)\in {\sf Aut}(TM).$

We can now apply these considerations to almost para-Hermitian
manifolds to get

\begin{proposition} \label{push}
Let $(M, K, \eta)$ be an almost para-Hermitian manifold with
fundamental 2-form $\omega,$ and let $\vartheta=(f,\bar{f}\,)\in {\sf
  O}(d,d)(M).$ Then the {pullback} of $K$ by $\vartheta,$
$K_\vartheta= \bar{f}^{-1}\circ K\circ \bar{f},$ and $\eta$ form an almost para-Hermitian structure $(K_\vartheta, \eta)$ on $M$ whose fundamental 2-form is $\omega_\vartheta=\bar{f}^* \omega.$  

\begin{proof}
We first show that $K_\vartheta$ is an almost para-complex
structure. Since $K\in {\sf End}(TM)$ and $\vartheta=(f,\bar{f}\,) \in
{\sf Aut}(TM)$, it follows that $\bar{f}^{-1}\circ K \circ \bar{f} \in
{\sf End}(TM),$ and therefore $K_\vartheta \in {\sf End}(TM).$ Then
$$
K_\vartheta^2= \bar{f}^{-1}\circ K\circ \bar{f}\circ\bar{f}^{-1}\circ K\circ \bar{f}=
\bar{f}^{-1}\circ K^2\circ \bar{f}= \unit\ ,
$$ 
where we used $K^2=\unit.$ In this way ${\sf Aut}(TM)$ maps an almost
para-complex structure into an almost para-complex structure, and so $K_\vartheta$ is an almost para-complex structure.

We now prove that $(K_\vartheta, \eta)$ satisfies the compatibility condition \eqref{compcon}:
\begin{align*}
\eta\big(K_\vartheta(X), K_\vartheta(Y)\big) &
                                               =\eta\big(\bar{f}^{-1}\,K(\bar{f}(X)),
                                               \bar{f}^{-1}\,K(\bar{f}(Y))
                                               \big) \\[4pt]
&= \big((\bar{f}^{-1})^*\eta\big)\big(K(\bar{f}(X)),K(\bar{f}(Y))\big) \\[4pt]
&= \eta\big(K(\bar{f}(X)),K(\bar{f}(Y))\big) \\[4pt] 
&= - \eta\big(\bar{f}(X), \bar{f}(Y)\big) \\[4pt]
&= - (\bar{f}^*\eta)(X,Y) \\[4pt]
&= -\eta(X,Y)\ ,
\end{align*}
for all $X, Y \in \mathsf{\Gamma}(TM),$ where we used the
compatibility condition \eqref{compcon} for $(K,\eta)$ in the fourth
equality, and the isometry conditions $(\bar{f}^{-1})^*\eta=\eta$ and
$\bar{f}^* \eta= \eta$ in the third and sixth equalities respectively. This shows that $(M, K_\vartheta, \eta)$ is an almost para-Hermitian manifold.

We finally show that $\omega_\vartheta= \bar{f}^* \omega.$ The
fundamental 2-form of $(K_\vartheta, \eta)$ is given by
$\omega_\vartheta(X,Y)=\eta(K_\vartheta(X),Y),$ for all $X,Y \in \mathsf{\Gamma}(TM).$ Then
\begin{align*}
(\bar{f}^*\omega)(X,Y)&= \omega\big(\bar{f}(X), \bar{f}(Y)\big) \\[4pt]
&= \eta\big(K(\bar{f}(X)), \bar{f}(Y)\big) \\[4pt]
&=\big((\bar{f}^{-1})^* \eta\big)\big(K(\bar{f}(X)), \bar{f}(Y)\big) \\[4pt]
&= \eta\big(\bar{f}^{-1}\,K(\bar{f}(X)),Y\big) \\[4pt]
&= \eta\big(K_\vartheta(X),Y\big) \\[4pt] 
&= \omega_\vartheta(X,Y)\ ,
\end{align*}
for all $X, Y \in \mathsf{\Gamma}(TM),$ where in the third equality we
used $\eta=(\bar{f}^{-1})^*\eta.$
\end{proof}
\end{proposition}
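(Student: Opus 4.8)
The plan is to establish the three assertions of the statement in turn: that $K_\vartheta$ is an almost para-complex structure, that $(K_\vartheta,\eta)$ satisfies the compatibility condition \eqref{compcon}, and that the resulting fundamental 2-form equals $\bar f^*\omega$. The single observation that drives all three steps is that the isometry condition \eqref{oddcon}, namely $\bar f^*\eta=\eta$, automatically forces $(\bar f^{-1})^*\eta=\eta$ as well, since $\bar f$ is a vector bundle isomorphism and pullback by $\bar f^{-1}$ inverts pullback by $\bar f$ on the symmetric tensor $\eta$. I would record this equivalent pair of identities at the outset, as both are needed in order to move the covering map $\bar f$ across the slots of $\eta$.

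For the first assertion, I would note that conjugation by the bundle isomorphism $\bar f$ preserves the polynomial identity defining an almost product structure: since $K\in{\sf End}(TM)$, we have $K_\vartheta=\bar f^{-1}\circ K\circ\bar f\in{\sf End}(TM)$, and $K_\vartheta^2=\bar f^{-1}\circ K^2\circ\bar f=\unit$ by $K^2=\unit$. Moreover $K_\vartheta\neq\pm\,\unit$, since otherwise conjugating back would force $K=\pm\,\unit$. The eigenbundles of $K_\vartheta$ are the images $\bar f^{-1}(L_\pm)$ of the eigenbundles $L_\pm$ of $K$, which therefore inherit equal fiber rank $d$; hence $K_\vartheta$ is an almost para-complex structure in the sense of Definition~\ref{paraco}.

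For the compatibility condition, I would compute $\eta\big(K_\vartheta(X),K_\vartheta(Y)\big)$ directly. Substituting $K_\vartheta=\bar f^{-1}\circ K\circ\bar f$ and using $(\bar f^{-1})^*\eta=\eta$ to strip the two outer copies of $\bar f^{-1}$ reduces the expression to $\eta\big(K(\bar f(X)),K(\bar f(Y))\big)$; applying the compatibility \eqref{compcon} of the original pair $(K,\eta)$ turns this into $-\eta\big(\bar f(X),\bar f(Y)\big)$, and a final use of $\bar f^*\eta=\eta$ yields $-\eta(X,Y)$. This shows $(M,K_\vartheta,\eta)$ is almost para-Hermitian. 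For the identification of the fundamental 2-form, I would start from $(\bar f^*\omega)(X,Y)=\omega\big(\bar f(X),\bar f(Y)\big)=\eta\big(K(\bar f(X)),\bar f(Y)\big)$ and then re-absorb a factor of $\bar f^{-1}$ onto both arguments through $(\bar f^{-1})^*\eta=\eta$, landing on $\eta\big(K_\vartheta(X),Y\big)$, which is by definition $\omega_\vartheta(X,Y)$.

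None of the individual manipulations is difficult; the computation is entirely a matter of bookkeeping with pullbacks and the isometry identities. The only point requiring care --- and the place I would expect a reader to stumble --- is the interplay between the two notions of ``pullback'' in play: the conjugation $K_\vartheta=\bar f^{-1}\circ K\circ\bar f$ defining the transformed $(1,1)$-tensor, versus the ordinary pullback $\bar f^*$ acting on the $(0,2)$-tensors $\eta$ and $\omega$. I would therefore be explicit that $\bar f$ covers the diffeomorphism $f$, so that moving $\bar f$ between the slots of $\eta$ (via \eqref{oddcon} and its inverse) is legitimate and consistently reproduces both the compatibility relation and the formula $\omega_\vartheta=\bar f^*\omega$.
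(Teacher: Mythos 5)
Your proposal is correct and follows essentially the same route as the paper's own proof: conjugation preserves $K^2=\unit$, the isometry identities $\bar f^*\eta=\eta$ and $(\bar f^{-1})^*\eta=\eta$ are used to move $\bar f$ across the slots of $\eta$ in exactly the same chain of equalities, and the fundamental 2-form is identified by the same computation starting from $(\bar f^*\omega)(X,Y)$. Your added remarks that $K_\vartheta\neq\pm\,\unit$ and that the eigenbundles $\bar f^{-1}(L_\pm)$ have equal rank are small checks the paper leaves implicit, but they do not change the argument.
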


\begin{corollary}
The projectors $\Pi_{\pm}=\frac12\,(\unit \pm K)$ associated with $K$
transform under $\vartheta=(f, \bar{f}\,)\in {\sf Aut}(TM)$ to 
$$
(\Pi_\vartheta)_\pm=\bar{f}^{-1}\circ \Pi_\pm\circ \bar{f} \ .
$$ 
\end{corollary}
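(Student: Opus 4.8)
The plan is to proceed by direct computation, expanding the definition of the projectors associated with the pulled-back almost para-complex structure $K_\vartheta$ and exploiting the fact that conjugation by the bundle isomorphism $\bar f$ is a linear ring homomorphism on $\mathsf{End}(TM)$. By Proposition~\ref{push}, for $\vartheta=(f,\bar f\,)\in{\sf O}(d,d)(M)$ the pair $(K_\vartheta,\eta)$ is again an almost para-Hermitian structure with $K_\vartheta=\bar f^{-1}\circ K\circ\bar f$. Its associated projectors are therefore defined, exactly as in the original structure, by $(\Pi_\vartheta)_\pm=\tfrac12\,(\unit\pm K_\vartheta)$, and the task is simply to rewrite this expression.

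The key step I would carry out is to substitute the formula for $K_\vartheta$ and then insert $\unit=\bar f^{-1}\circ\unit\circ\bar f$, which is legitimate since $\bar f$ is a vector bundle isomorphism. This gives
\begin{align*}
(\Pi_\vartheta)_\pm
&=\tfrac12\,\big(\unit\pm K_\vartheta\big)
=\tfrac12\,\big(\bar f^{-1}\circ\unit\circ\bar f\pm\bar f^{-1}\circ K\circ\bar f\big) \\[4pt]
&=\bar f^{-1}\circ\tfrac12\,(\unit\pm K)\circ\bar f
=\bar f^{-1}\circ\Pi_\pm\circ\bar f \ ,
\end{align*}
where the penultimate equality uses only the bilinearity of composition of bundle maps together with the associativity that lets one factor $\bar f^{-1}$ out on the left and $\bar f$ out on the right.

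There is essentially no obstacle here: the statement is a formal consequence of Proposition~\ref{push} and the observation that conjugation $S\mapsto\bar f^{-1}\circ S\circ\bar f$ is linear and preserves the identity. The only point warranting a line of justification is that $\bar f^{-1}\circ\unit\circ\bar f=\unit$, which holds precisely because $\vartheta$ is an automorphism and hence $\bar f$ is invertible; all the $\eta$-compatibility content has already been absorbed into the fact that $K_\vartheta$ is a legitimate almost para-complex structure, so no further use of the isometry condition~\eqref{oddcon} is needed for this corollary.
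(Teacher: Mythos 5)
Your proof is correct and is precisely the immediate computation the paper intends for this corollary (which it states without proof): substitute $K_\vartheta=\bar f^{-1}\circ K\circ\bar f$ into $(\Pi_\vartheta)_\pm=\tfrac12(\unit\pm K_\vartheta)$ and factor out the conjugation. Your closing remark is also accurate — the statement holds for any $\vartheta\in{\sf Aut}(TM)$, not just isometries, since only invertibility of $\bar f$ is used.
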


\begin{corollary}
An element $\vartheta=(f, f_*)\in {\sf O}(d,d)(M)$ preserves the
exterior derivative of the fundamental 2-form, and hence it maps an almost
para-K\"ahler structure $(K, \eta, \omega)$ into another almost
para-K\"ahler structure $(K_\vartheta, \eta, \omega_\vartheta)$ with $\omega_\vartheta= f^*\omega$.
\end{corollary}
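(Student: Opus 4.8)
The plan is to reduce the statement to Proposition~\ref{push} combined with the naturality of the exterior derivative, exploiting that for $\vartheta=(f,f_*)$ the covering $\bar f=f_*$ is the differential of a genuine diffeomorphism (as in Example~\ref{ex:autdiff}).

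First I would apply Proposition~\ref{push} to $\vartheta=(f,f_*)\in{\sf O}(d,d)(M)$: this already gives that $(K_\vartheta,\eta)$ is an almost para-Hermitian structure on $M$ with fundamental 2-form $\omega_\vartheta=\bar f^*\omega$, where the pullback is the fiberwise one appearing in \eqref{oddcon}, i.e. $(\bar f^*\omega)(X,Y)=\omega\big(\bar f(X),\bar f(Y)\big)$. The substantive step is then to identify this fiberwise pullback with the ordinary pullback of differential forms along $f$. Since $\bar f=f_*$ is the differential of $f$, which maps $T_pM$ to $T_{f(p)}M$, evaluation at a point $p$ gives $(\bar f^*\omega)_p(v,w)=\omega_{f(p)}\big(\mathrm{d}f_p(v),\mathrm{d}f_p(w)\big)=(f^*\omega)_p(v,w)$ for all $v,w\in T_pM$. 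Hence $\omega_\vartheta=f^*\omega$ as differential forms, and not merely as a fiberwise expression. This is precisely where the hypothesis $\vartheta=(f,f_*)$ is used, as opposed to a general element of ${\sf O}(d,d)(M)$.

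Then I would invoke the naturality of the exterior derivative, $\de\circ f^*=f^*\circ\de$, valid for any pullback along a smooth map, to obtain $\de\omega_\vartheta=\de(f^*\omega)=f^*(\de\omega)$. This is exactly the assertion that $\vartheta$ preserves the exterior derivative of the fundamental 2-form, in the sense that the field strength $\cK=\de\omega$ transforms covariantly as $\cK_\vartheta=f^*\cK$. In particular, if $(K,\eta,\omega)$ is almost para-K\"ahler, so that $\de\omega=0$, then $\de\omega_\vartheta=f^*(0)=0$, and therefore $(K_\vartheta,\eta,\omega_\vartheta)$ is again almost para-K\"ahler.

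The main obstacle, or rather the one point that must be handled with care, is the identification $\bar f^*\omega=f^*\omega$ together with the observation that naturality of $\de$ applies only because the covering is the pushforward of a diffeomorphism. For a generic automorphism in ${\sf O}(d,d)(M)$ whose covering is not induced by a diffeomorphism --- for instance one covering the identity, such as a $B$-transformation --- the fiberwise pullback $\bar f^*$ need not commute with $\de$, and closedness of $\omega$ would not in general be preserved; this is exactly why the corollary is stated only for elements of the form $(f,f_*)$.
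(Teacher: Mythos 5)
Your proof is correct and follows exactly the route the paper intends: the corollary is stated without proof immediately after Proposition~\ref{push}, and the intended argument is precisely to specialize that proposition to $\bar f=f_*$, identify the fiberwise pullback with the ordinary pullback of forms along $f$, and invoke $\de\circ f^*=f^*\circ\de$. Your closing remark about why the statement fails for coverings not induced by diffeomorphisms (e.g.\ $B$-transformations) is also consistent with the paper's later discussion in Section~\ref{sec:Btransformations}.
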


A similar statement holds for the pushforward of an almost para-Hermitian
structure, and we have
\begin{proposition}
Let $(M, K, \eta)$ be an almost para-Hermitian manifold with
fundamental 2-form $\omega,$ and let $\vartheta=(f,\bar{f}\,)\in {\sf O}(d,d)(M).$ Then the {pushforward} of $K$ by $\vartheta,$ 
$K^{\vartheta}= \bar{f}\circ K\circ \bar{f}^{-1},$ and $\eta$ form an almost para-Hermitian structure $(K^{\vartheta}, \eta)$ on $M$ with fundamental 2-form $\omega^\vartheta=(\bar{f}^{-1})^* \omega.$
\end{proposition}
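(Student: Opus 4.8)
The plan is to deduce this statement from Proposition~\ref{push} rather than to recompute anything from scratch, exploiting the elementary identity $S^\vartheta = S_{\vartheta^{-1}}$ recorded above for every $(1,1)$-tensor field $S$ and every $\vartheta \in {\sf Aut}(TM)$. First I would observe that ${\sf O}(d,d)(M)$ is, by definition, a subgroup of ${\sf Aut}(TM)$, so the inverse $\vartheta^{-1}=(f^{-1},\bar{f}^{-1})$ again lies in ${\sf O}(d,d)(M)$; in particular $\bar{f}^{-1}$ satisfies the isometry condition \eqref{oddcon}. Applying the cited identity with $S=K$ gives $K^\vartheta = \bar{f}\circ K\circ \bar{f}^{-1} = K_{\vartheta^{-1}}$, so the pushforward of $K$ by $\vartheta$ is literally the pullback of $K$ by $\vartheta^{-1}$.

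I would then simply invoke Proposition~\ref{push} with $\vartheta$ replaced by $\vartheta^{-1}$. That proposition asserts that $(K_{\vartheta^{-1}},\eta)$ is an almost para-Hermitian structure on $M$, whence $(K^\vartheta,\eta)$ is one as well: the three required properties, namely $(K^\vartheta)^2=\unit$, the sign-reversing compatibility $\eta(K^\vartheta(X),K^\vartheta(Y))=-\eta(X,Y)$, and non-degeneracy, are all inherited directly. For the fundamental 2-form, Proposition~\ref{push} applied to the element $\vartheta^{-1}$, whose covering map is $\bar{f}^{-1}$, yields $\omega_{\vartheta^{-1}} = (\bar{f}^{-1})^*\omega$, so that $\omega^\vartheta = \omega_{\vartheta^{-1}} = (\bar{f}^{-1})^*\omega$, which is exactly the claimed expression.

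If a self-contained argument is preferred, the direct route mirrors the proof of Proposition~\ref{push} line by line. One has $(K^\vartheta)^2 = \bar{f}\circ K^2\circ \bar{f}^{-1} = \unit$; the compatibility follows from $\eta(\bar{f} K\bar{f}^{-1}(X),\bar{f} K\bar{f}^{-1}(Y)) = \eta(K\bar{f}^{-1}(X),K\bar{f}^{-1}(Y)) = -\eta(\bar{f}^{-1}(X),\bar{f}^{-1}(Y)) = -\eta(X,Y)$, using $\bar{f}^*\eta=\eta$ in the first step, the compatibility of $(K,\eta)$ in the second, and $(\bar{f}^{-1})^*\eta=\eta$ in the third; and the fundamental 2-form is computed by $\eta(K^\vartheta(X),Y) = \eta(\bar{f} K\bar{f}^{-1}(X),Y) = \eta(K\bar{f}^{-1}(X),\bar{f}^{-1}(Y)) = \omega(\bar{f}^{-1}(X),\bar{f}^{-1}(Y)) = ((\bar{f}^{-1})^*\omega)(X,Y)$, where the second equality rewrites \eqref{oddcon} in the form $\eta(\bar{f}(Z),W)=\eta(Z,\bar{f}^{-1}(W))$.

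There is no genuine obstacle here: the result is the dual of Proposition~\ref{push}, and the two are interchanged simply by passing to $\vartheta^{-1}$. The only points demanding a moment of care are confirming that the inverse automorphism still preserves $\eta$, which is guaranteed by the group structure of ${\sf O}(d,d)(M)$, and correctly tracking which pullback governs the fundamental 2-form, since it is $(\bar{f}^{-1})^*$ rather than $\bar{f}^*$ that appears in $\omega^\vartheta$.
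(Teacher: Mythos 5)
Your proposal is correct and follows exactly the paper's own argument: the paper's proof likewise consists of replacing $\vartheta$ by $\vartheta^{-1}$ in Proposition~\ref{push} and using the identity $K^\vartheta = K_{\vartheta^{-1}}$. Your additional remarks, that $\vartheta^{-1}\in{\sf O}(d,d)(M)$ by the subgroup property and that the covering map of $\vartheta^{-1}$ is $\bar f^{-1}$ so the fundamental 2-form is $(\bar f^{-1})^*\omega$, make explicit the small points the paper leaves implicit.
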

\begin{proof}
Replace $\vartheta$ with $\vartheta^{-1}$ in Proposition~\ref{push},
and use $K^\vartheta=K_{\vartheta^{-1}}$.
\end{proof}

An automorphism $\vartheta=(f, \bar{f}\,) \in {\sf Aut}(TM)$ does not
necessarily preserve the splitting $TM= L_+ \oplus L_-$ induced by the
almost para-complex structure $K.$ Thus $K_\vartheta$ can have
different eigenbundles from $K.$ Furthermore, if $K$ is a
(Frobenius integrable) para-complex structure, so that $N_K=0,$ then an arbitrary element $\vartheta\in {\sf Aut}(TM)$ need not preserve the integrability, i.e. $N_{K_\vartheta}\neq 0.$
This also means that such transformations neither generally preserve the (Frobenius)
integrability of the eigenbundles, nor the closure of the
fundamental 2-form. In this sense, the choice of polarization contains
all the information about the background fluxes and the physical 
spacetime: in~\cite{Svoboda2018,SzMar} it is shown that the
generalized fluxes
appear as obstructions to weak integrability with respect to a
reference para-K\"ahler structure. We will return to this point in
Section~\ref{sec:Btransformations}.

Nevertheless, there is a simple case in which we can say something
concrete about the integrability of the transformed (almost)
para-complex structure, as asserted through
\begin{proposition}
Let $(M, K)$ be an almost para-complex manifold and $f \in {\sf
  Diff}(M).$ Then the tangent bundle automorphism induced by the
differential of $f,$ $\vartheta=(f, f_*) \in {\sf Aut}(TM),$ maps the
Nijenhuis tensor $N_K$ of $K$ to the Nijenhuis tensor
$N_{K_\vartheta}=f_* N_K$ of the pulled back almost para-complex structure $K_\vartheta.$
\begin{proof}
The crux of the proof is the naturality of the Lie bracket of vector fields,
i.e.~$f_*[X,Y]=[f_*(X),f_*(Y)],$ for all $ f \in {\sf Diff}(M)$ and $
X, Y \in \mathsf{\Gamma}(TM).$ It is also easy to show that the only
Lie bracket-preserving tangent bundle automorphisms are given
by $(f, f_*),$ with $f \in{ \sf Diff}(M)$ (see e.g.~\cite{gualtieri:tesi}).

Since $K_\vartheta= f_*^{-1}\circ K\circ f_*, $ the Nijenhuis tensor of $K_\vartheta$ reads
\begin{align*}
N_{K_\vartheta}(X,Y)&=[X,Y]+\big[f_*^{-1}\,K\big(f_*(X)\big),f_*^{-1}\,K\big(f_*(Y)\big)\big] \\
&\qquad -f_*^{-1}\,K\,f_*\big(\big[f_*^{-1}\,K(f_*(X)),Y\big] + \big[X,f_*^{-1\,}K(f_*(Y))\big]\big) \\[4pt]
&= f_*^{-1} \big( f_*[X,Y]+\big[K(f_*(X)),K(f_*(Y))\big] \\ 
&\qquad -K([K(f_*(X)),f_*(Y)] + [f_*(X),K(f_*(Y))])\big) \\[4pt]
&= f_*^{-1}\big(N_K(f_*(X), f_*(Y))\big)\ ,
\end{align*}
for all $ X, Y \in \mathsf{\Gamma}(TM),$ where in each step we used the naturality of the Lie bracket.
\end{proof} 
\end{proposition}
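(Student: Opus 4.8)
The plan is to substitute the pullback structure $K_\vartheta=f_*^{-1}\circ K\circ f_*$ directly into the defining formula for the Nijenhuis tensor and then repeatedly invoke the naturality of the Lie bracket of vector fields under pushforward by a diffeomorphism in order to factor out an overall $f_*^{-1}$.

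First I would record the crucial input. Since $f\in{\sf Diff}(M)$ is a diffeomorphism, the pushforward $f_*$ sends vector fields to vector fields and is a homomorphism of the Lie bracket, i.e. $f_*[X,Y]=[f_*(X),f_*(Y)]$ for all $X,Y\in\mathsf{\Gamma}(TM)$, equivalently $f_*^{-1}[U,V]=[f_*^{-1}(U),f_*^{-1}(V)]$. This is precisely where the diffeomorphism hypothesis is essential: for a general smooth map the pushforward of a vector field is not even well-defined, so the statement genuinely requires $f$ to be invertible. It is also this naturality that singles out $(f,f_*)$ among all coverings of $f$ as the unique Lie bracket-preserving automorphism, which is why we may take $\bar f=f_*$.

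Second, I would expand $N_{K_\vartheta}(X,Y)$ using $K_\vartheta=f_*^{-1}\circ K\circ f_*$. Every Lie bracket that appears then has both of its entries of the form $f_*^{-1}(\,\cdot\,)$, so applying naturality converts each bracket $[f_*^{-1}(U),f_*^{-1}(V)]$ into $f_*^{-1}[U,V]$ and pulls the common $f_*^{-1}$ to the front. Collecting terms and reading off the bracket structure in the arguments $f_*(X)$ and $f_*(Y)$ reproduces exactly the Nijenhuis combination, yielding $N_{K_\vartheta}(X,Y)=f_*^{-1}\big(N_K(f_*(X),f_*(Y))\big)$ for all $X,Y\in\mathsf{\Gamma}(TM)$; this is the claimed identity $N_{K_\vartheta}=f_*\,N_K$, understood as the pullback of the tensor $N_K$ along $f$.

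The computation is entirely routine, so I do not expect a genuine obstacle; the only care required is bookkeeping, keeping track of where $K$ acts versus where the conjugated $f_*^{-1}\circ K\circ f_*$ acts, and applying naturality at each of the three bracket terms so that the inner and outer occurrences of $K$ recombine correctly. The one conceptual point worth flagging is the essential use of the invertibility of $f$ noted above, without which neither $K_\vartheta$ nor the pushforward identity would make sense.
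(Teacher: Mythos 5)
Your proposal is correct and follows essentially the same route as the paper: substitute $K_\vartheta=f_*^{-1}\circ K\circ f_*$ into the definition of the Nijenhuis tensor and use the naturality of the Lie bracket under pushforward by a diffeomorphism to factor out $f_*^{-1}$, arriving at $N_{K_\vartheta}(X,Y)=f_*^{-1}\big(N_K(f_*(X),f_*(Y))\big)$. Your added emphasis on the essential role of invertibility of $f$ matches the remark the paper makes immediately after the proof.
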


\begin{corollary}
Let $(M, K)$ be a para-complex manifold, i.e. $N_K=0.$ Then a tangent bundle automorphism $\vartheta=(f,f_*)$ maps $K$ into another para-complex structure $K_\vartheta$ with $N_{K_\vartheta}=0.$  
\end{corollary}
This proof relies on the naturality of the Lie bracket
of vector fields under the pushforward by any diffeomorphism of $M.$
This property holds only for pushforwards and not for generic elements
$\vartheta \in {\sf Aut}(TM),$ so it is not possible to find any
general relation between the Nijenhuis tensors of an almost
para-complex structure $K$, and the associated pullback $K_\vartheta$ and
pushforward $K^\vartheta$ under $\vartheta.$ Hence the (lack of)
integrability of an almost para-complex structure is not generally
preserved by an automorphism of the tangent bundle $TM.$  

\begin{remark} \label{rem:SOdd}
We denote by ${\sf SO}(d,d)(M)$ the Lie subgroup of ${\sf O}(d,d)(M)$
which also preserves the canonical orientation of $M$ provided by its
fundamental 2-form $\omega$; its Lie algebra $\mathfrak{so}(d,d)(M)$
consists of tangent bundle endomorphisms $\tau\in{\sf End}(TM)$ for
which 
$$
\eta\big(\tau (X),Y\big)=-\eta\big(X,\tau (Y)\big)
$$ 
for all $X,Y\in\mathsf{\Gamma}(TM)$. Any element $\tau\in\mathfrak{so}(d,d)(M)$ can be decomposed with respect to the splitting $TM=L_+\oplus L_-$ as
\be\nonumber
\tau = \bigg(
\begin{matrix}
A & B_- \\
B_+ & -A^{\rm t}
\end{matrix} \bigg) \ ,
\ee
where $A\in{\sf End}(L_+)$ with transpose $A^{\rm t}\in{\sf End}(L_-)$
defined via $\eta(A(X),Y)=\eta(X,A^{\rm t}(Y))$, while
$B_+:\mathsf{\Gamma}(L_+)\to \mathsf{\Gamma}(L_-)$ and
$B_-:\mathsf{\Gamma}(L_-)\to \mathsf{\Gamma}(L_+)$ are skew morphisms
in the sense that $\eta(B_\pm(X),Y)=-\eta(X,B_\pm(Y))$. By identifying
$L_-$ with $L_+^*$ using the split signature metric $\eta$, we can regard
$B_+$ as a 2-form in $\mathsf{\Gamma}(\midwedge^2 L_+^*)$ and $B_-$ as
a bivector in $ \mathsf{\Gamma}(\midwedge^2 L_+)$, so that as a vector space
\be\nonumber
\mathfrak{so}(d,d)(M) = {\sf End}(L_+) \oplus \, \mathsf{\Gamma}(\midwedge^2 L_+^*) \oplus \, \mathsf{\Gamma}(\midwedge^2 L_+) \ .
\ee
In Section~\ref{sec:Btransformations} we shall discuss the important
class of ${\sf O}(d,d)(M)$-transformations generated by the last two
summands, which are called $B$-transformations.
\end{remark}

\medskip

\subsection{${\sf O}(d,d)(M)$-Transformations of Born Geometry}\label{sec:Borntransf} ~\\[5pt]
Applying the transformations of Section~\ref{int:Odd} to Born geometry can be described by starting from the pullback of a Born structure by an automorphism $\vartheta=(f,\bar{f}\,)\in {\sf O}(d,d)(M).$
First we show that an ${\sf O}(d,d)(M)$-transformation of a
generalized metric $V_+\subset TM$ gives another generalized metric,
as asserted by
\begin{proposition}
Let $V_+\subset TM$ be a generalized metric on an almost para-Hermitian manifold
$(M,K, \eta)$, and let $\vartheta=(f, \bar{f}\,)\in {\sf O}(d,d)(M).$
Then the pullback of $V_+$ given by
$$
{V_\vartheta}_+=\bar{f}(V_+)= \big\{ X'=\bar{f}(X)\ \big|\ X\in V_+
\big\}
$$
is a generalized metric on $(M, K, \eta).$
\begin{proof}
The proof is straightforward:
$$
\eta(X',Y')=\eta \big(\bar{f}(X), \bar{f}(Y)\big)=\eta(X,Y)\geq 0 \ ,
$$
for all $ X' , Y'\in {V_\vartheta}_+,$ since $X, Y\in V_+.$ 
\end{proof} 
\end{proposition}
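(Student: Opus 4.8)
The plan is to work from the sub-bundle characterization of a generalized metric, namely a sub-bundle $V_+\subset TM$ that is maximally positive-definite with respect to $\eta$, since the automorphism $\vartheta=(f,\bar{f}\,)\in{\sf O}(d,d)(M)$ is precisely tailored to preserve $\eta$ via the isometry condition \eqref{oddcon}. Concretely there are three things to verify: that ${V_\vartheta}_+=\bar{f}(V_+)$ is again a smooth sub-bundle of $TM$, that the restriction of $\eta$ to it is positive-definite, and that it has the maximal admissible rank.

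First I would observe that ${V_\vartheta}_+$ is a sub-bundle: since $\bar{f}:TM\to TM$ is a vector bundle isomorphism covering the diffeomorphism $f$, it restricts to fiberwise linear isomorphisms, so it carries the sub-bundle $V_+$ to a sub-bundle of the same rank. In particular ${\rm rank}({V_\vartheta}_+)={\rm rank}(V_+)=d$, which will be used in the final step.

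The core of the argument is positive-definiteness, and this is where \eqref{oddcon} enters. For any $X',Y'\in\mathsf{\Gamma}({V_\vartheta}_+)$ I would write $X'=\bar{f}(X)$ and $Y'=\bar{f}(Y)$ with $X,Y\in\mathsf{\Gamma}(V_+)$, and compute $\eta(X',Y')=\eta(\bar{f}(X),\bar{f}(Y))=\eta(X,Y)$. Setting $Y'=X'$ gives $\eta(X',X')=\eta(X,X)\geq 0$, with equality precisely when $X=0$, hence when $X'=0$, because $\bar{f}$ is fiberwise injective. Thus $\eta$ restricts to a positive-definite form on ${V_\vartheta}_+$.

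Finally, maximality follows by combining the rank count with positive-definiteness: since $\eta$ has split signature $(d,d)$, any positive-definite sub-bundle has rank at most $d$, and ${V_\vartheta}_+$ attains this bound. I do not anticipate a genuine obstacle, as the computation is a one-line consequence of \eqref{oddcon}; the only point deserving care is not to stop at positive-definiteness but to also record that maximality is inherited through the rank-preservation of the isomorphism $\bar{f}$, so that ${V_\vartheta}_+$ satisfies the full definition of a generalized metric rather than merely being a positive sub-bundle.
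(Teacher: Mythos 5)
Your proposal is correct and follows essentially the same route as the paper: the key step in both is the one-line transfer of positive-definiteness through the isometry condition \eqref{oddcon}. Your version is slightly more careful in spelling out the equality case $\eta(X',X')=0\Rightarrow X'=0$ and the rank/maximality count, which the paper leaves implicit.
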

The same argument also applies to any vector bundle $E\to \cQ$ endowed
with a  metric $\eta$ of signature $(d,d)$, and any automorphism in ${\sf
  O}(d,d)(E)\subset {\sf Aut}(E)$ which preserves~$\eta$.

We can now characterize the generalized T-duality transformations of a
Born structure through
\begin{proposition}
Let $(K, \eta, \cH)$ be a Born geometry on a manifold $M$ with
fundamental 2-form $\omega$, and let $\vartheta=(f, \bar{f}\,)\in {\sf
  O}(d,d)(M).$ Then $(K_\vartheta, \eta , \cH_\vartheta)=(
\bar{f}\circ K\circ\bar f{}^{-1}, \eta,  \bar{f}^* \cH)$ is a Born geometry on $M$ with fundamental 2-form $\omega_\vartheta= \bar{f}^* \omega.$
\begin{proof}
We have already shown that $(K_\vartheta, \eta)$ is an almost
para-Hermitian structure on $M$ with fundamental 2-form
$\omega_\vartheta$ in Proposition \ref{push}. It remains to prove that $\cH_\vartheta=\bar{f}^*\cH$ satisfies the compatibility conditions 
\be\nonumber
\eta^{-1}{}^\sharp\, \circ \,\cH_\vartheta{}^\flat = \cH_\vartheta^{-1}{}^\sharp\,\circ\, \eta^\flat \qquad \mbox{and} \qquad \omega_\vartheta^{-1}{}^\sharp\,\circ \,\mathcal{H}_\vartheta{}^\flat = -\mathcal{H}_\vartheta^{-1}{}^\sharp\,\circ \, \omega_\vartheta{}^\flat \ .
\ee
We first check that the inverse of $\cH_\vartheta$ is given by 
$$
\cH^{-1}_\vartheta=\bar{f}_* \cH^{-1} \ ,
$$  
where 
$$
\bar{f}_* \cH^{-1}(\nu, \xi)=
\cH^{-1}\big((\bar{f}^{-1})^*(\nu),(\bar{f}^{-1})^*(\xi) \big)\  , 
$$
for all $ \nu, \xi \in \mathsf{\Gamma}(T^*M).$ We also need the expression
$$
\cH_\vartheta{}^\flat(X)(Y)=\cH_\vartheta(X,Y)= \cH\big(\bar{f}(X),\bar{f}(Y)\big)=\big(\bar{f}^*
\cH^\flat(\bar{f}(X))\big)(Y)\ ,
$$  
for all $X, Y \in \mathsf{\Gamma}(TM),$ so that $
\cH_\vartheta{}^\flat=\bar{f}^* \circ\cH^\flat\circ\bar f.$
Then
\begin{align*}
\big(\bar{f}_* \cH^{-1}{}^\sharp \circ \cH_\vartheta{}^\flat(X)\big)(\nu)&= \bar{f}_* \cH^{-1}\big(\cH_\vartheta{}^\flat(X), \nu\big) \\[4pt]
&= \bar{f}_* \cH^{-1}\big(\bar{f}^* \cH{}^\flat(\bar{f}(X)), \nu\big) \\[4pt]
&= \cH^{-1}\big(\cH^\flat(\bar{f}(X)), (\bar{f}^{-1})^*\nu\big) \\[4pt]
&= \bar{f}(X)\big((\bar{f}^{-1})^*\nu\big) \\[4pt] 
&= X(\nu)
\end{align*}
for all $ X \in \mathsf{\Gamma}(TM)$ and $\nu \in
\mathsf{\Gamma}(T^*M),$ which shows $\cH^{-1}_\vartheta=\bar{f}_*
\cH^{-1}.$  

We are now ready to prove the first compatibility condition
between $\eta$ and $\cH_\vartheta.$ Since $\eta^{-1}{}^\sharp \circ
\cH_\vartheta{}^\flat \in {\sf End}(TM),$ we compute
\begin{align*}
\big(\eta^{-1}{}^\sharp \circ \cH_\vartheta{}^\flat(X)\big)(\nu)&=\eta^{-1}\big(\cH_\vartheta{}^\flat(X), \nu\big) \\[4pt]
&= \eta^{-1}\big(\bar{f}^* \cH^\flat(\bar{f}(X)), \nu\big) \\[4pt]
&= \bar{f}^{-1}_* \eta^{-1} \big(\cH^\flat(\bar{f}(X)),
  (\bar{f}^{-1})^*\nu \big) \\[4pt]
&= \eta^{-1} \big(\cH^\flat(\bar{f}(X)), (\bar{f}^{-1})^*\nu \big)\\[4pt]
&= \cH^{-1} \big(\eta^\flat(\bar{f}(X)), (\bar{f}^{-1})^*\nu \big)\\[4pt]
&= \big(\bar{f}_* \cH^{-1}\big)\big(\eta^\flat(X), \nu\big)\\[4pt]
&= \big(\cH_\vartheta^{-1}{}^\sharp \circ \eta^\flat(X)\big)(\nu)
\end{align*}
for all $X \in \mathsf{\Gamma}(TM)$ and $\nu \in
\mathsf{\Gamma}(T^*M),$ where in the fifth equality we used the
compatibility condition $\eta^{-1}{}^\sharp\, \circ\,\cH^\flat=
\cH^{-1}{}^\sharp\,\circ\,\eta^\flat$ for the original Born geometry. This is
basically a more complicated way of proving that a generalized metric is
mapped into a generalized metric under $\vartheta \in {\sf
  O}(d,d)(M).$ It is useful to also prove it in this way, which is more
in the spirit of the original definition given in~\cite{Freidel2019}, because checking the second compatibility
condition between $\omega_\vartheta$ and $\cH_\vartheta$ is then straightforward: The required relations are 
$$
\omega_\vartheta^{-1}= \bar{f}_* \omega^{-1} \qquad \mbox{and} \qquad
\omega_\vartheta{}^\flat= \bar{f}^* \circ \omega^\flat \circ \bar{f} \ .
$$ 
The proof then follows
exactly the same steps taken for the first compatibility condition. 
\end{proof}
\end{proposition}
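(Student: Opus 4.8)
The plan is to split the definition of a Born geometry into its constituent requirements and dispatch them in turn, the whole argument being an exercise in the $\bar{f}$-equivariance of the musical isomorphisms. First I would observe that the almost para-Hermitian part needs no new work: Proposition~\ref{push} already establishes that $(K_\vartheta,\eta)$ is an almost para-Hermitian structure on $M$ with fundamental $2$-form $\omega_\vartheta=\bar{f}^*\omega$. So the entire content that remains is to check the two compatibility conditions that promote the generalized metric $\cH_\vartheta=\bar{f}^*\cH$ to a Born metric, namely $\eta^{-1}{}^\sharp\circ\cH_\vartheta{}^\flat=\cH_\vartheta^{-1}{}^\sharp\circ\eta^\flat$ and $\omega_\vartheta^{-1}{}^\sharp\circ\cH_\vartheta{}^\flat=-\cH_\vartheta^{-1}{}^\sharp\circ\omega_\vartheta{}^\flat$.

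The technical heart is a short dictionary recording how the musical maps transform under $\vartheta$. Starting from $\cH_\vartheta(X,Y)=\cH\big(\bar{f}(X),\bar{f}(Y)\big)$ one reads off $\cH_\vartheta{}^\flat=\bar{f}^*\circ\cH^\flat\circ\bar{f}$, and a direct computation with dual pairings then verifies that its inverse is $\cH_\vartheta^{-1}=\bar{f}_*\,\cH^{-1}$, where $\bar{f}_*\cH^{-1}(\nu,\xi)=\cH^{-1}\big((\bar{f}^{-1})^*\nu,(\bar{f}^{-1})^*\xi\big)$. The identical manipulation applied to the non-degenerate $2$-forms gives $\omega_\vartheta{}^\flat=\bar{f}^*\circ\omega^\flat\circ\bar{f}$ and $\omega_\vartheta^{-1}=\bar{f}_*\,\omega^{-1}$. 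The crucial input is the isometry hypothesis $\bar{f}^*\eta=\eta$, which is exactly the statement that $\eta$ obeys these same transformation rules but trivially, i.e. $\eta^\flat=\bar{f}^*\circ\eta^\flat\circ\bar{f}$ and $\eta^{-1}=\bar{f}_*\,\eta^{-1}$.

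With this dictionary the two conditions follow by conjugation. Substituting the rules into $\eta^{-1}{}^\sharp\circ\cH_\vartheta{}^\flat$ and using $\bar{f}^*\eta=\eta$ to cancel the outer $\bar{f}$ and $\bar{f}^*$ factors collapses the left-hand side to the $\bar{f}$-conjugate of $\eta^{-1}{}^\sharp\circ\cH^\flat$; the same treatment of the right-hand side yields the $\bar{f}$-conjugate of $\cH^{-1}{}^\sharp\circ\eta^\flat$, so the original first Born condition equates the two. Because the second condition has precisely the same shape, with $\omega_\vartheta$ in place of $\eta$ and an overall minus sign, and because $\omega_\vartheta$ transforms by exactly the same rules as $\eta$, the identical conjugation reduces it to the original $\omega$-$\cH$ compatibility of the given Born geometry, finishing the proof.

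I expect the only genuine obstacle to be bookkeeping: correctly tracking which arrow is $\bar{f}$ and which is $\bar{f}^*$ (equivalently $(\bar{f}^{-1})^*$) inside each musical map, and pinning down the inverse formula $\cH_\vartheta^{-1}=\bar{f}_*\,\cH^{-1}$ by an honest pairing computation rather than asserting it. Once that formula is secured, every remaining step is a formal cancellation; there is no analytic or integrability subtlety, since the statement is really just the $\bar{f}$-equivariance of the whole package $(\eta,\omega,\cH)$ under the metric-preserving automorphism $\vartheta$.
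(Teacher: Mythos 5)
Your proposal is correct and follows essentially the same route as the paper: the same transformation dictionary $\cH_\vartheta{}^\flat=\bar{f}^*\circ\cH^\flat\circ\bar{f}$, $\cH_\vartheta^{-1}=\bar{f}_*\,\cH^{-1}$ (and likewise for $\omega_\vartheta$), combined with the isometry hypothesis on $\eta$, reduces both compatibility conditions to those of the original Born geometry. The only cosmetic difference is that you phrase the final step as a conjugation of endomorphisms by $\bar{f}$, whereas the paper writes out the equivalent chain of pairing evaluations explicitly.
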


\begin{remark}
Except for the issues concerning integrability, all of our discussion
thus far concerning generalized T-duality transformations carries through as well for arbitrary
para-Hermitian vector bundles $(E, K, \eta)$ on a manifold $\cQ$. 
\end{remark}

\medskip

\subsection{$B$-Transformations}\label{sec:Btransformations} ~\\[5pt]
We shall now define special isometries relating two different almost
para-Hermitian structures on the same manifold $M$. We will focus on a
specific class of transformations that cover the identity. Such
transformations give a nice example in which the general description
introduced in Section~\ref{sec:Borntransf} can be explicitly worked
out. In general, automorphisms covering the identity play a
fundamental role in the description of automorphisms of principal
bundles, since they form the subgroup of gauge transformations, and a
similar notion can be introduced for Poisson structures. In this description we will see similarities with the transformations ocurring in the context of exact Courant algebroids \cite{Severa2002, gualtieri:tesi}. It was shown in \cite{Svoboda2018, SzMar} that geometric and non-geometric fluxes appear in this discussion as obstructions to integrability with respect to the D-bracket.

We first introduce the notion of a $B$-transformation for an almost
para-Hermitian manifold $(M,K,\eta)$. Let us fix the splitting $TM=L_+ \oplus L_-$ induced by $K.$
In this polarization, a vector field $X\in\mathsf{\Gamma}(TM)$ decomposes as 
$$
X= X_+ + X_- \qquad \mbox{with} \quad X_+ \in \mathsf{\Gamma}(L_+) \ ,
\ X_-\in \mathsf{\Gamma}(L_-)\ .
$$
\theoremstyle{definition}
 \begin{definition}\label{def:B+}
Let $(M, K, \eta)$ be an almost para-Hermitian manifold. A
$B_+$-\emph{transforma{-}tion} is an isometry $e^{B_+}: TM \rightarrow
TM$ of $\eta$ covering the identity which is given by
$$
e^{B_+}(X)= X_+ + B_+(X_+)+ X_- \ \in \ \mathsf{\Gamma}(TM)\ , 
$$
for all $X \in \mathsf{\Gamma}(TM)$, or in matrix notation
\be \label{btra}
e^{B_+}= 
\bigg(\begin{matrix}
\mathds{1} & 0 \\
B_+ & \mathds{1}
\end{matrix}\bigg)
: TM \longrightarrow TM \qquad \mbox{with} \quad \big(\unit, e^{B_+}\big) \in {\sf O}(d,d)(M) \ ,
\ee
in the chosen splitting induced by $K$, where $B_+ : L_+
\rightarrow L_-$ is a smooth skew map in the sense
that it satisfies 
$$
\eta\big(B_+(X),Y\big)=- \eta\big(X, B_+(Y)\big) \ ,
$$ 
for all $ X, Y \in \mathsf{\Gamma}(TM).$ 
\end{definition}
The endomorphism $B_+$ defines both a 2-form $b_+$ and a bivector $\beta_-$ by 
$$
\eta\big(B_+(X),Y\big)=b_+(X,Y)= \beta_-\big(\eta(X), \eta(Y)\big) \ .
$$
The 2-form $b_+$ is of type $(+2,-0)$ while the bivector $\beta_-$ is of type $(+0,-2)$ with respect to $K.$ This will be relevant to understanding how the fundamental 2-form $\omega$ changes under a $B_+$-transformation.

The inverse map is given by $e^{-B_+}:TM \rightarrow TM.$ The induced map on 1-forms $\nu \in \mathsf{\Gamma}(T^*M)$ is given by 
$$
\big((e^{B_+})^*\nu\big) (X)= \nu \big(e^{B_+}(X)\big)\ ,
$$ 
for all $ X \in \mathsf{\Gamma}(TM).$
The splitting of the tangent bundle $TM=L_+ \oplus L_-$ induces a
splitting of the cotangent bundle $T^*M=L_+^* \oplus L_-^*,$ thus a
1-form $\nu \in \mathsf{\Gamma}(T^*M)$ decomposes as
$$
\nu= \nu_+ + \nu_- \qquad \mbox{with} \quad \nu_+ \in
\mathsf{\Gamma}(L_+^*) \ , \ \nu_- \in \mathsf{\Gamma}(L_-^*)\ .
$$
Then the induced $B_+$-transformation on 1-forms reads
$$
\big((e^{B_+})^*\nu\big) (X)= \nu_+(X_+)+ \nu_-\big(B_+(X_+)\big)+
\nu_-(X_-) \ .
$$
Since $B_+$ is a map from $L_+$ to
$L_-,$ its transpose is a map $B^{\rm t}_+:
L_-^* \rightarrow L_+^*.$ This means
that the function $\nu_- (B_+(X_+))$ can also be written as $\nu_-
(B_+(X_+))= (B^{\rm t}_+(\nu_-))(X_+),$ thus the $B_+$-transformation of a
1-form $\nu$ can be written as
$$
\big((e^{B_+})^*\nu\big) (X)= \big(\nu_+ + B^{\rm t}_+(\nu_-)\big)(X_+) +
\nu_-(X_-) \ .
$$
This implies that $(e^{B_+})^*$ takes the same matrix form \eqref{btra} as $(e^{B_+})^{\rm t}$ in the chosen polarization.

A $B_+$-transformation induces two almost
para-complex structures from the almost para-Hermitian manifold
$(M,K,\eta),$ as we saw in Section~\ref{sec:Borntransf}.  Let us
choose the splitting $TM=L_+\oplus L_-$ induced by $K,$ with the
corresponding decompositions $X=X_+ + X_- \in \mathsf{\Gamma}(TM)$ and
$\nu=\nu_+ + \nu_- \in \mathsf{\Gamma}(T^*M).$ Recall that $K\in {\sf
  Aut}_\unit(TM),$ in this polarization, can be written as $K=
\unit_{L_+}- \unit_{L_-}.$ Then the {pullback} of $K$  by a $B_+$-transformation is
\begin{align*}
K_{B_+}(X)( \nu) &= K\big(e^{B_+}(X)\big)\big( (e^{-B_+})^*(\nu)\big) \\[4pt]
&= K\big(X_+ + B_+(X_+)+ X_- \big)\big(\nu_+ - B_+(\nu_-)+ \nu_-\big) \\[4pt]
&= X_+(\nu_+)- X_-(\nu_-)- 2\, B_+(X_+)(\nu_-) \\[4pt]
&=(K-2\,B_+)(X)( \nu)\ . 
\end{align*}
Hence $K_{B_+}=K- 2\,B_+$, which can be cast in the form 
\be \nonumber
K_{B_+}=e^{-B_+}\circ K\circ e^{B_+}=
\begin{pmatrix}
\unit & 0 \\
-2\,B_+ & -\unit
\end{pmatrix} \ .
\ee
We then have $K_{B_+}^2= \unit,$ since $B_+(K(X))=-K(B_+(X))$ and
$B_+(B_+(X))=0,$ for all $X\in \mathsf{\Gamma}(TM),$ and $K_{B_+}$ satisfies the compatibility condition  $\eta(K_{B_+}(X),K_{B_+}(Y))=-\eta(X,Y)$ with $\eta$  because of the skew-symmetry property of $B_+.$
Thus $(K_{B_+}, \eta)$ is an almost para-Hermitian structure, as expected.   
   
The fundamental 2-form $\omega_{B_+}$ of $(K_{B_+}, \eta)$ is given by  $\omega_{B_+}= (e^{B_+})^*\omega,$ as shown in Section~\ref{int:Odd}. In this case we obtain
\begin{align*}
\omega_{B_+}(X,Y)&=\omega\big(e^{B_+}(X), e^{B_+}(Y)\big) \\[4pt]
&= \omega(X_-, Y_+)+ \omega(X_+,Y_-)+ \omega\big(B_+(X_+), Y_+\big) +
  \omega\big(X_+, B_+(Y_+)\big) \\[4pt]
&= \omega(X_-, Y_+)+ \omega(X_+,Y_-)- \eta\big(B_+(X_+), Y_+\big)+
  \eta\big(X_+, B_+(Y_+)\big) \\[4pt]
&= \omega(X_-, Y_+)+ \omega(X_+,Y_-)- 2\,\eta\big(B_+(X_+), Y_+\big) \\[4pt]
&= (\omega- 2\,b_+)(X,Y) \ ,
\end{align*}
for all $X, Y \in \mathsf{\Gamma}(TM),$ where in the second equality
we used the isotropy of $L_+$ and $L_-$ with respect to $\omega$, and
in the fourth equality we used the skew-symmetry property of $B_+.$
This shows that $\omega_{B_+}= \omega - 2\,b_+$; the same result can
also be obtained by computing $\omega_{B_+}(X,Y)= \eta(K_{B_+}(X),Y).$
As a consequence, a $B_+$-transformation does not generally preserve the closure of the fundamental 2-form.
     
In a similar fashion, the {pushforward} of the almost para-complex
structure $K$ is given by 
$$
K^{B_+}(X)( \nu)=K\big(e^{-B_+}(X)\big)\big( (e^{B_+})^*(\nu)\big)\ ,
$$ 
for all $ X \in \mathsf{\Gamma}(TM)$ and $\nu \in \mathsf{\Gamma}(T^*M).$ Then with a similar computation to the case of the pullback we obtain 
$$ 
K^{B_+}= e^{B_+}\circ K\circ e^{-B_+}= K+ 2\,B_+\ ,
$$
or in matrix notation 
\be
K^{B_+}=
\bigg(\begin{matrix}
\mathds{1} & 0 \\
2\,B_+ & -\mathds{1}
\end{matrix}\bigg)
\ , \label{matrK}
\ee
with respect to the splitting $TM=L_+ \oplus L_-$. One easily has $(K^{B_+})^2=\mathds{1}$, while the skew-symmetry
property of $B_+$ implies the compatibility condition
$\eta(K^{B_+}(X),K^{B_+}(Y))=-\eta(X,Y)$. In the conventions
of~\cite{Svoboda2018}, the definition of a $B_+$-transformation is
given by the pushforward of $K$ by $e^{B_+}$; we will adhere to this convention unless otherwise stated. 
The fundamental 2-form is then given by
$$
\omega^{B_+}= \big(e^{-B_+}\big)^*\omega = \omega +
2\,b_+\ ,
$$
so that such transformations also may not preserve the closure of the
fundamental 2-form. A completely analogous discussion can be carried
out for a \emph{$B_-$-transformation}, defined by a skew-symmetric map
$B_-: L_- \rightarrow L_+$ in the
sense described in Definition~\ref{def:B+}, by interchanging the roles of the eigenbundles $L_+$ and $L_-$.

The main effect of a $B_+$-transformation is that the splitting
$TM=L_+\oplus L_-$ changes, i.e.~$e^{B_+}$ maps the polarization
$L_+\oplus L_-$ to a new polarization $L^{B_+}_+\oplus L^{B_+}_-,$ which implies that the potential Frobenius integrability of the original splitting may not be preserved in its image under $e^{B_+}$. The transformed projections are given by
\be
\Pi_+^{B_+}= \tfrac{1}{2}\,\big(\mathds{1}+K^{B_+}\big)=
\bigg(\begin{matrix}
\mathds{1} & 0 \\
B_+ & 0
\end{matrix}\bigg)
\qquad \mbox{and} \qquad
\Pi_-^{B_+}= \tfrac{1}{2}\,\big(\mathds{1}-K^{B_+}\big)=
\bigg(\begin{matrix}
0 & 0 \\
-B_+ & \mathds{1}
\end{matrix}\bigg)
\ . \nonumber
\ee
Hence decomposing any vector field with respect to the splitting associated with $K$ as $X=X_++X_- \in  \mathsf{\Gamma}(TM),$ where $X_\pm \in \mathsf{\Gamma}(L_\pm)$, the new distributions are obtained by using the transformed projections to get
$$
\Pi^{B_+}_+(X)= X_+ + B_+(X_+) \qquad \mbox{and} \qquad
\Pi^{B_+}_-(X)= X_- - B_+(X_+) \ ,
$$
where $\Pi^{B_+}_-(X) \in \mathsf{\Gamma}(L_-)$ since $B_+$ maps
$L_+$ to $L_-$, thus $L^{B_+}_-=L_-.$ On the other hand, the same reasoning applied to $\Pi^{B_+}_+(X)$ shows that it is not an element of $\mathsf{\Gamma}(L_+),$ thus $L^{B_+}_+\neq L_+.$
Therefore only the $-1$-eigenbundle is preserved by a
$B_+$-transformation, while the $+1$-eigenbundle changes; in
particular, if $L_+$ is integrable then $L^{B_+}_+$ is generally
non-integrable.

\begin{remark}\label{rem:B+}
More generally, if $(E,\eta,L_-)$ is an even rank vector bundle
endowed with a split signature metric and a choice of maximally
isotropic sub-bundle, as in Section~\ref{sec:paravector},  then the
maximally isotropic splittings of the short exact sequence
\eqref{eq:EL-} are precisely (up to isomorphism) the
$B_+$-transformations, which preserve $L_-$.
\end{remark}

To compare two different almost para-Hermitian structures on
the same manifold $M$, a weaker notion of integrability can be introduced. The main difference from the usual notion of Frobenius integrability is the replacement of the Lie bracket of vector fields with the D-bracket. This is discussed in \cite{SzMar, Svoboda2018, Freidel2017, Mori2019,Hu2019}. Changes of polarization generally induce flux
deformations of the almost para-Hermitian structure given by Lie
algebroid 3-forms, and hence may
spoil integrability of the eigenbundles (either Frobenius or with
respect to the D-bracket associated to the original para-Hermitian structure).

We conclude this section by discussing the $B$-transformations of a
compatible generalized metric of a Born geometry. A compatible
generalized metric $\cH$ of the almost para-Hermitian structure $(K,
\eta)$ transforms under a $B_+$-transformation to the compatible
generalized metric $\cH_{B_+}$ of the pullback almost
para-Hermitian structure $(K_{B_+}, \eta)$ on $M.$ Recalling that
$\cH$ takes the diagonal form \eqref{eq:diaghermmetric}, we then have 
\begin{align*}
\cH_{B_+}(X,Y)&=\big(e^{B_+}\big)^*\cH(X,Y) \\[4pt]
&= \cH\big(e^{B_+}(X), e^{B_+}(Y)\big) \\[4pt]
&= g_+(X_+,Y_+)+g_-\big(B_+(X_+),B_+(Y_+)\big)+
  g_-\big(B_+(X_+),Y_-\big) \\
& \qquad + g_-\big(X_-, B_+(Y_+)\big)+ g_-(X_-,Y_-) \ ,
\end{align*}
for all $X, Y \in \mathsf{\Gamma}(TM).$ Similarly, the
$B_+$-transformation of a compatible generalized metric with respect
to the pushforward of an almost para-Hermitian structure $(K^{B_+}, \eta)$ takes the form
\begin{align*}
\cH^{B_+}(X,Y)&=\big(e^{-B_+}\big)^*\cH(X,Y) \\[4pt]
&=\cH\big(e^{-B_+}(X), e^{-B_+}(Y)\big) \\[4pt]
&=g_+(X_+,Y_+)+g_-\big(B_+(X_+),B_+(Y_+)\big)-
  g_-\big(B_+(X_+),Y_-\big) \\
& \qquad - g_-\big(X_-, B_+(Y_+)\big)+ g_-(X_-,Y_-) \ ,
\end{align*}
for all $ X, Y \in \mathsf{\Gamma}(TM).$ This is exactly the same
expression as \eqref{gemeju} from the proof of
Proposition~\ref{gbparaherm} upon identifying $\gamma_b= B_+,$ and
we have thereby shown
\begin{proposition}
A generalized metric $V_+ \subset TM$ on an almost para-Hermitian
manifold $(M, K, \eta)$ corresponds to the choice of a Born geometry
$(K,\eta, \cH)$ and a $B_+$-transformation.
\end{proposition}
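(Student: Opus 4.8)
The plan is to recognize this proposition as essentially a clean repackaging of the computation of $\cH^{B_+}$ carried out immediately above, combined with Propositions~\ref{gbparaherm} and~\ref{prop:cHg+}. First I would invoke Proposition~\ref{gbparaherm}, which establishes that the generalized metric $V_+\subset TM$ is equivalent to a unique pair $(g_+,b_+)$ of a fiberwise Riemannian metric $g_+\in\mathsf{\Gamma}(\midodot^2 L_+^*)$ on $L_+$ and a 2-form $b_+\in\mathsf{\Gamma}(\midwedge^2 L_+^*)$, with $V_+$ presented in the explicit matrix form~\eqref{gemeju} upon identifying the skew map $\gamma_b$ with the endomorphism determined by $b_+$. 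The uniqueness of this pair comes from the decomposition of $\underline{\gamma}\in\mathsf{\Gamma}(L_+^*\otimes L_-)$ into its symmetric and skew parts.

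Next I would separate these two pieces of data and attach geometric meaning to each. The symmetric datum $g_+$ defines, by Proposition~\ref{prop:cHg+}, a compatible generalized metric $\cH$ on $(M,K,\eta)$ --- that is, a Born geometry $(K,\eta,\cH)$ --- which in the splitting~\eqref{eq:splittingTM} takes the diagonal form~\eqref{eq:diaghermmetric}. The skew datum $b_+$ defines a $B_+$-transformation in the sense of Definition~\ref{def:B+}: since $\eta$ is non-degenerate and pairs $L_+$ with $L_-$, the assignment $\eta\big(B_+(X),Y\big)=b_+(X,Y)$ sets up a bijection between 2-forms $b_+$ on $L_+$ and skew maps $B_+:L_+\to L_-$, which is precisely the data of a $B_+$-transformation $e^{B_+}$. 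This $b_+$ is the same one appearing as the skew part $\gamma_b$ of $\underline{\gamma}$ in Proposition~\ref{gbparaherm}.

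The heart of the argument is then the computation of $\cH^{B_+}$ performed just before the statement, which shows that the pushforward $\cH^{B_+}=(e^{-B_+})^*\cH$ of the Born-compatible metric has matrix form coinciding with~\eqref{gemeju} under the identification $\gamma_b=B_+$. Hence $\cH^{B_+}$ is exactly the generalized metric attached to $V_+$, so every generalized metric is obtained from a Born geometry by a $B_+$-transformation. For the converse, any choice of Born geometry (determined by $g_+$) together with a $B_+$-transformation (determined by a skew $B_+$, equivalently a 2-form $b_+$) reproduces through the same computation a metric of the form~\eqref{gemeju}, which defines a generalized metric $V_+$ via the inverse of $\eta$ as in the concluding step of the proof of Proposition~\ref{gbparaherm}.

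To make this a genuine one-to-one correspondence rather than merely a surjection, I would finally remark that the decomposition $(g_+,b_+)$ of $V_+$ is unique by Proposition~\ref{gbparaherm}, so the Born part and the $B_+$-transformation part are each uniquely recovered from $V_+$. The only point requiring care --- and the main place where bookkeeping could go astray --- is the matching of conventions: one must use the pushforward $B_+$-transformation so that the signs in $\cH^{B_+}$ agree with those in~\eqref{gemeju}, and one must confirm that the 2-form defining $e^{B_+}$ is identified with the skew part of $\underline{\gamma}$. No deeper obstruction arises, since integrability plays no role here and the entire statement is pointwise on the fibers.
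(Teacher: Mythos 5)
Your proposal is correct and follows essentially the same route as the paper, which also deduces the proposition by matching the computation of $\cH^{B_+}=(e^{-B_+})^*\cH$ against the matrix form \eqref{gemeju} from Proposition~\ref{gbparaherm} under the identification $\gamma_b=B_+$. Your additional remarks on the uniqueness of the pair $(g_+,b_+)$ and on the pushforward sign convention simply make explicit what the paper leaves implicit.
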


\section{Worldsheet Theory for Para-Hermitian Manifolds} \label{BSM}

In this section we will relate $\sfO(d,d)(M)$-transformations to (generalized)
T-duality from the perspective of closed string sigma-models whose target spaces are
Born manifolds. Special cases
where at least partial integrability of the para-Hermitian structure
is preserved will then allow us to derive a sigma-model on the leaf
space of the foliated para-Hermitian manifold. These sigma-models are thought of as emerging from the
Born geometry after a quotient along a foliation of a
para-Hermitian manifold in a given polarization. A T-duality
transformation will then be an $\sfO(d,d)(M)$-transformation. 

\medskip

\subsection{Born Sigma-Models} ~\\[5pt]
Our aim in the following is to define worldsheet sigma-models for para-Hermitian
manifolds using, whenever it exists, a compatible Born geometry (cf.
Section~\ref{intropara}). This will allow us to see how generalized
T-duality transformations relate different sigma-models and how, in turn,
they give a relation between sigma-models obtained by reduction
on foliated para-Hermitian manifolds. We begin by defining the
sigma-models of interest in this paper.  
\begin{definition}
A \emph{Born sigma-model} is a harmonic map 
$$
\phi: (\Sigma, h)
\longrightarrow (M,\cH) \ ,
$$ 
where $\Sigma$ is a closed oriented surface
endowed with  a (pseudo-)Riemannian metric $h$ and $(M,K, \eta, \cH)$
is an almost para-Hermitian manifold with compatible generalized metric $\cH$ in the sense of Section~\ref{sec:Born}.
\end{definition}
In other words, $\phi \in {C}^\infty(\Sigma, M)$ is the smooth map
minimizing the functional\footnote{Here and throughout upper case
  Latin indices run over the target space directions
  $I,J,\dots=1,\dots,2d$, and repeated upper and lower indices are
  implicitly summed over.}
\be \label{sigmaaction}
\cS_0[\phi]=\frac{1}{4}\, \int_{\Sigma}\, \bar\cH_{IJ} \, \de \phi^I \wedge \star\, \de \phi^J\ ,
\ee
where the Hodge duality operator $\star$ is defined with respect to
the worldsheet metric $h$, which for definiteness we take to be
Lorentzian so that $\star^2=\mathds{1}$. Here and in the following a
bar on a field on $M$ denotes its pullback to the worldsheet $\Sigma$
by the map $\phi$. The action \eqref{sigmaaction} is
invariant under Lorentz transformations of the worldsheet and rigid
${\sf O}(d,d)$-transformations of the almost para-Hermitian target
space. 

This is the Dirichlet
functional obtained by endowing the space of maps $\de \phi:
T\Sigma \rightarrow \phi^*TM$ with a norm defined by
$\cH$, regarded as a metric on the vector space of sections of the pullback 
$\phi^*TM$ of the
tangent bundle $TM$ to $\Sigma$ by $\phi,$ and the inverse metric $h^{-1}$ on $T^*\Sigma$; this gives
a well-defined norm $\|\,\underline{\de\phi}\,\|_{h,\cH}$ for sections $\underline{\de \phi} \in
\mathsf{\Gamma}(T^*\Sigma \otimes \phi^*TM)$ which allows us to write
the action functional as
\be \label{sigmanorm}
\cS_0[\phi]=\frac{1}{4}\,\int_{\Sigma}\, h^{\alpha \beta}\,
\bar\cH_{IJ} \,
\del_{\alpha} \phi^I\, \del_{\beta} \phi^J\ \de \mu(h) =:
\frac{1}{4}\,\int_\Sigma\, \|\,\underline{\de \phi}\, \|_{h,\cH}^2\ \de \mu(h)\ ,
\ee 
where the Greek indices run over local coordinates $(\sigma^0,
\sigma^1)$ on $\Sigma,$ with
$\partial_\alpha$ the derivative with respect to $\sigma^\alpha$, and
$$
\de \mu(h) = \star\,1
$$ 
is the area measure
induced by $h$. 

We will also consider a topological term of the form 
\be \label{sigmatop}
\cS_{\rm top}[\phi]=\frac{1}{2}\, \int_{\Sigma}\, \bar\omega\ ,
\ee
where $\omega$ is the fundamental 2-form of the Born manifold $M.$
For curved worldsheets, the general form of a two-dimensional sigma-model also involves a
Fradkin-Tseytlin term
\be\nonumber
\cS_{\Psi}[\phi]=\frac1{2\pi}\, \int_{\Sigma}\,R^{(2)}(h) \, \bar\Psi \ \de
\mu(h) \ , 
\ee
where the smooth function $\Psi: M \rightarrow \mathbb{R}$ is a scalar
dilaton field and $R^{(2)}(h)$ is the scalar curvature of the metric
$h$ on $\Sigma$; since the metric $h$ is conformally equivalent to a
flat metric on $\Sigma$, this term can be (classically) set to $0$ by
a conformal transformation of the worldsheet and will not be
considered any further in the ensuing analysis.

We will usually denote by $\mathcal{S}(\cH, \omega)$ a Born sigma-model given by the sum of \eqref{sigmaaction} and~\eqref{sigmatop}:
\be \label{actcomplete}
 \cS[\phi]= \cS_0[\phi]+ \cS_{\rm top}[\phi] \ .
\ee
The notation stresses that the defining data for a Born sigma-model are given by
the fundamental geometric structures of a Born manifold. Written in this way,
the Born sigma-model is a direct generalization of the sigma-models
for doubled torus fibrations that were introduced
in~\cite{Hull2007}. 

On the other hand, we
denote by $S(g, b)$ a non-linear sigma-model with target space any Riemannian manifold $(\cQ, g)$ with Kalb-Ramond field $b \in \Omega^2(\cQ)$ which is \emph{not} a Born manifold:
$$
S[\phi]=\frac{1}{2}\,\int_\Sigma\, \bar g_{ij}\, \de \phi^i \wedge
\star\,  \de \phi^j +  \int_\Sigma\, \bar b\ ,
$$
where here $\phi$ is a map from $(\Sigma,h)$ to $(\cQ,g).$
Unlike the fundamental 2-form $\omega$ on $M$, the Kalb-Ramond field $b$ is not always a globally defined 2-form on
$\cQ$. Generally $b$ is only locally defined because the sigma-model
is characterized by a topological Wess-Zumino term given by a closed 3-form
$H$ with integer cohomology class $[H] \in \mathsf{H}^3(\cQ, \IZ),$
which is geometrically the Dixmier-Douady class of a gerbe on $\cQ$
with connection $b$ so that
$H=\de b$ only locally. In this case, we introduce a
three-dimensional manifold $V$ with boundary $\partial V= \Sigma,$ and
extend the maps $\phi$ to $V$. We may then write the action functional of the sigma-model as
$$
S[\phi]=\frac{1}{2}\,\int_\Sigma\, \bar g_{ij}\, \de \phi^i \wedge
\star\,  \de \phi^j + \int_V\, \bar H\ ,
$$
where $\bar{H}$ is the pullback of the 3-form $H$ to
$V$. In the quantum theory, the contribution of the $B$-field amplitude
$\exp(2\pi\,\mathrm{i}\,\int_V\,\bar H)$ to the functional integral is well-defined,
i.e. independent of the choice of three-dimensional manifold $V$ bounded by
$\Sigma$, by virtue of the assumption that $H$ has integer periods.

It will prove convenient to work with a local form for the
Born sigma-model, following the flux formulation of Born geometry
from Section~\ref{sec:fluxes}.
We write the Born sigma-model associated with the splitting
$TM=L_+\oplus L_-$ in local coordinates $\IX^I$ on $M$ by letting the map $\phi: (\Sigma, h) \rightarrow (M, \cH)$ pull the structures $(\cH,\omega)$ back to $\Sigma.$ The action functional \eqref{actcomplete} can then be written as
\be \label{locbsm1}
\cS[\phi]=\frac{1}{4}\, \int_\Sigma\, \Big((\bar g_+)_{ij}\,
\bar\Theta^i \wedge \star\, \bar\Theta^j + (\bar g_-)^{ij}\, \bar{\tilde\Theta}_i \wedge \star\, \bar{\tilde\Theta}_j \Big)  + \frac{1}{2}\, \int_\Sigma\, \bar\Theta^i \wedge \bar{\tilde\Theta}_i \ ,
\ee
where the coframe $\{\Theta^I\}=\{\Theta^i, \, \tilde{\Theta}_i\}$ is generally
given by $C^\infty(M)$-linear combinations of the holonomic coframe
$\{\de \IX^I\},$ thus we pull all of
them back to the worldsheet
$\Sigma$ by $\phi.$ To highlight the coordinate dependence, we write the local expression for the Born sigma-model as
\be \label{locbsm2}
\cS[\phi]= \frac{1}{4}\, \int_\Sigma\, \bar\cH_{IJ}\, \de \bar
\IX^I \wedge \star\, \de \bar \IX^J + \frac{1}{4}\, \int_\Sigma\, \bar\omega_{IJ}\, \de \bar \IX^I
\wedge \de \bar \IX^J \ ,
\ee
where the smooth functions $\cH_{IJ}$ and $\omega_{IJ}$ are the
components of $\cH$ and $\omega,$ respectively, in their local
expressions in the holonomic coframe $\{\de \IX^I \}.$

It is important to note that all of the information regarding
the generalized fluxes are contained in the topological term of the Born
sigma-model. This can be seen by writing \eqref{sigmatop} as a Wess-Zumino term
$$
\cS_{\rm top}[\phi] = \frac12\,\int_V\,\bar\cK \ ,
$$
where $\bar\cK$ is the pullback of the 3-form
$\cK=\de\omega$ from \eqref{eq:cK} by an extension of
$\phi:\Sigma\to M$ to a
three-dimensional manifold $V$ with boundary $\partial V=\Sigma$.
Writing $\cK = \tfrac16\, \cK_{IJK}\, \de \IX^I\wedge\de \IX^J\wedge\de \IX^K$ for its local expression in the holonomic coframe, the local form of the Born sigma-model can then be expressed as
\be\nonumber
\cS[\phi]= \frac{1}{4}\, \int_\Sigma\, \bar\cH_{IJ}\, \de\bar
\IX^I \wedge \star\, \de\bar \IX^J + \frac1{12}\,
\int_V\, \bar\cK_{IJK} \, \de\bar \IX^I\wedge\de\bar
\IX^J\wedge\de \bar \IX^K \ .
\ee
This form of the action functional shows that our Born sigma-model is
an immediate
generalization of the doubled sigma-models that were introduced for
the doubled twisted torus
in~\cite{Hull2009}; however, in our formalism the 2-form $\omega$ is
globally defined and the action functional can be defined without
resorting to an extension of the two-dimensional field theory to three
dimensions, as in~\cite{Hull2007}. 

The generalized fluxes
also obstruct the harmonic property of
$\phi:\Sigma\to M$. After integrating by parts and using Stokes'
Theorem, the equations of motion $\delta\cS[\phi]/\delta\IX^I=0$ for the action functional
\eqref{locbsm2}
read
\be \nonumber
\de\star\,\bar\cH_{IJ}\,\de\bar\IX^J +
\bar\cK_{IJK}\,\de\bar\IX^J\wedge\de\bar\IX^K=0 \ .
\ee
It follows that the Born sigma-model is a theory of harmonic maps only
for almost para-K\"ahler target spaces $(M,K,\omega)$, i.e. when
$\cK=\de\omega=0$. In general, the equations of motion determine extremal surfaces $\phi(\Sigma)\subset M$ with respect to a connection with torsion determined by
the 3-form $\cK$.

Thus far we have only introduced the obvious definition of a
worldsheet sigma-model for a Born manifold $M$. In the following we will
determine when a gauging of such a sigma-model model is possible if at
least one of the distributions $L_+$ or $L_-$ is integrable,
i.e.~assuming that $M$ is a Born manifold foliated by a maximally isotropic regular foliation.  
For definiteness, we suppose that the sub-bundle $L_-$ is integrable,
i.e. $L_-=T\cF_-.$ Provided that $\cS(\cH, \omega)$ satisfies certain
{generalized isometry conditions} as spelled out in~\cite{Kotov2018},
the gauging of the Born sigma-model reduces it to a worldsheet
sigma-model $S(g,b)$ for the quotient $\cQ=M/\cF_-.$ When the Born manifold admits
a maximally isotropic integrable distribution, we shall discuss the situations under
which these compatibility conditions are met; these include the
standard constraints for gauging an isometry, but in general the
foliation $\cF_-$ need not be generated by the action of a Lie group, and in fact a generic Born manifold need not admit a Lie algebra of
Killing vector fields. This will lead to a
description of T-duality for Born sigma-models and their reduced sigma-models for the quotient spaces~$\cQ.$

\medskip

\subsection{Gauged Sigma-Models for Foliated Manifolds} \label{isononiso} ~\\[5pt]
We shall start by briefly reviewing the formalism for the
gauging of a sigma-model. Then we will analyze in more detail gaugings
which involve regular foliations of the target space, following the general
treatment of~\cite{Kotov2018, Kotov2014} which also applies to
singular foliations.

\medskip

\subsubsection{The Standard Isometric Gauging} \label{sec:standardgauging} ~\\[5pt]
Let $\phi: (\Sigma, h) \rightarrow (M, \cH)$ be a
sigma-model.\footnote{For the present discussion, $(M, \cH)$ is a
  Riemannian manifold but not necessarily a Born manifold, unless
  specified explicitly.} Suppose that the Riemannian metric $\cH,$
defining the action functional $\cS_0[\phi]$ in the form \eqref{sigmaaction}, has
isometry group $\sfG$ with Lie algebra $\frg= \sfL \sfi \sfe (\sfG).$ Then there is a Lie algebra homomorphism
$$
\rho: \frg \longrightarrow \mathsf{\Gamma}(TM)
$$ 
such that the Killing vector $X=\rho(\sfx)$ corresponding to every $\sfx \in \frg$ satisfies
\be \label{killingeq}
\pounds_X \cH =0\ .
\ee
This homomorphism can also be regarded as a bundle map 
$$
\rho: M \times \frg \longrightarrow TM \ , \quad (p,\xi)\longmapsto \rho(\xi)\big|_p\in T_pM
$$ 
of constant rank covering the identity. The left action of $\sfG$ on $M$ by isometries is a rigid symmetry of the sigma-model $\cS_0[\phi]$. Gauging this symmetry reduces the sigma-model for $M$ to a sigma-model for the quotient $M/\sfG$. For this, 
we construct a new gauged action functional $\cS_0[\phi,A]$ by considering the trivial
principal $\sfG$-bundle\footnote{This discussion extends to any choice of principal $\sfG$-bundle on $\Sigma$, but we
  work with the trivial $\sfG$-bundle to simplify the presentation.} $\Sigma \times \sfG\to\Sigma$
and choosing a $\sfG$-connection on it. This gives a $\frg$-valued connection
1-form $A \in \mathsf{\Gamma}(T^*\Sigma \, \otimes \,\frg)$ which can
be used as the gauge field in 
$\cS_0[\phi,A].$ We incorporate this worldsheet gauge field by minimal coupling, i.e. by ``covariantizing'' the map $\de \phi: T\Sigma \rightarrow \phi^*TM$ in the following way.   

The connection 1-form $A,$ together with $\phi,$ can be regarded as a
bundle map $\bar{A}:T\Sigma \rightarrow M \times \frg$ covering
$\phi$, i.e. $\bar A$ is the map defined by the commutative diagram
\begin{center}
\begin{tikzcd}
T\Sigma \arrow[r, "\bar{A}"] \arrow[d]
& M\times \frg \arrow[d] \\
\Sigma \arrow[r, swap, "\phi "]
& M
\end{tikzcd}
\end{center}
The trivial vector bundle $M \times \frg\to M$ has a natural Lie algebroid structure with $\rho$ as anchor map, given by the action algebroid associated with
the action of $\frg$ on $M$: The sections of $M\times\frg$ are naturally identified with smooth $\frg$-valued functions on $M$, and given $f,g\in C^\infty(M,\frg)$ their Lie bracket is defined by
$$
[f,g](p) = \big[f(p),g(p)\big]_\frg + \rho\big(f(p)\big)\big|_p\,g - \rho\big(g(p)\big)\big|_p\,f 
$$
for all $p\in M$, where $[\,\cdot\,,\,\cdot\,]_\frg$ is the Lie bracket of $\frg$.
Then there is a composition of bundle maps 
$$
T\Sigma \xlongrightarrow{\bar{A}} M \times \frg
\xlongrightarrow{\rho} TM 
$$
and the image ${\rm Im}(\rho \, \circ \, \bar{A})$ is a vector sub-bundle of $TM.$ We consider the pullback bundle $\phi^* {\rm Im}(\rho \, \circ \, \bar{A})$ on $\Sigma,$ which is a sub-bundle of $\phi^* TM,$ so that $\rho \, \circ \, \bar{A}$ induces a bundle map 
$\overline{\rho(A)}: T\Sigma \rightarrow \phi^* {\rm Im}(\rho \, \circ \, \bar{A})$ covering the identity.

Now we define a new vector bundle map by 
$$
\mathrm{D}^A\phi= \de \phi -\overline{\rho(A)}: T\Sigma \longrightarrow
\phi^* TM\ ,
$$ 
which we regard as the ``covariantization'' of $\de \phi.$ The
bundle map $ \overline{\rho(A)}$ gives a tensor
$\bar{\rho}(A) \in \mathsf{\Gamma}(T^*\Sigma \otimes \phi^* TM).$ Thus $\De^A \phi$ 
is associated with an element $\underline{\De^A \phi} \in
\mathsf{\Gamma}(T^*\Sigma \otimes \phi^* TM)$. Recalling that the
vector space of sections $\mathsf{\Gamma}(T^*\Sigma \otimes \phi^* TM)$ is
endowed with a norm induced by the metric $\cH,$ we can thereby write the gauged action functional
\be\label{eq:gaugedaction}
\cS_0[\phi,A]=\frac{1}{4}\, \int_{\Sigma}\, \bar\cH_{IJ} \, \De^A
\phi^I \wedge \star\, \De^A \phi^J=\frac{1}{4}\,\int_\Sigma\, \|\,\underline{\De^A \phi}\,
\|_{h,\cH}^2 \ \de \mu(h) \ ,
\ee
where the norm $\|\,\underline{\De^A \phi}\, \|_{h,\cH}$ is defined as in
\eqref{sigmanorm}. This two-dimensional field theory is invariant
under infinitesimal gauge transformations in $C^\infty(\Sigma,\frg)$, which is the Lie algebra of sections of the pullback of the action algebroid $(M\times\frg,\rho)$ by $\phi$. A Lie groupoid integrating the action algebroid is given by the action groupoid $\sfG\ltimes M\rightrightarrows M$ associated with the smooth left $\sfG$-action on $M$, whose orbit space is $M/\sfG$, and sections of its pullback form the
group of gauge transformations $C^\infty(\Sigma,
\sfG)$ leaving \eqref{eq:gaugedaction} invariant.

The gauging of the action \eqref{actcomplete} with topological term \eqref{sigmatop} is completely analogous, provided that the 2-form $\omega$ satisfies the condition 
$$
\pounds_X\, \omega=0\ ,
$$
where $X=\rho(\sfx),$ for all $\sfx \in \frg.$
The gauged topological term then reads
$$
\cS_{\rm top}[\phi,A]= \frac{1}{4}\,\int_{\Sigma}\,
\bar\omega_{IJ}\,  \De^A \phi^I \wedge \De^A \phi^J\ ,
$$
where $\omega_{IJ}$ are local smooth functions on $M$ given by the
components of the 2-form $\omega$, and subsequently pulled back to $\Sigma$ by $\phi.$ 

\medskip

\subsubsection{The Kotov-Strobl Gauging} ~\\[5pt]
The isometric gauging may be generalized by replacing
the action algebroid $M \times \frg$ with any Lie algebroid $\sfA$
over $M$ and replacing the worldsheet gauge field $A$ with an
$\sfA$-valued $1$-form. This generalization is discussed in \cite{Kotov2018, Kotov2014}. 
In this instance we would like to construct the covariantization of
the map $\de \phi$ coming from a Lie algebroid connection and the
generalization of the isometry conditions which allow such
gauging. Let $\sfA\rightarrow M$ be a Lie algebroid with anchor map
$\rho: \sfA \rightarrow TM$ which is endowed with an connection 
$$
\nabla: TM \times \sfA \longrightarrow \sfA \ .
$$
By definition it satisfies the Leibniz rule
$$
\nabla_X(f\,\sfa) = \pounds_X(f)\,\sfa+f\,\nabla_X\sfa
$$
for all $X\in\mathsf{\Gamma}(TM)$, $f\in C^\infty(M)$ and $\sfa\in\mathsf{\Gamma}(\sfA)$.
Consider the short exact sequence of vector bundles
$$
0 \longrightarrow T^*M \otimes \sfA \longrightarrow J^1(\sfA)
\longrightarrow \sfA \longrightarrow 0
$$
where $J^1(\sfA)$ is the first jet bundle of smooth sections of
$\sfA.$ Then connections $\nabla$ are in one-to-one
correspondence with splittings 
$$
s: \sfA \longrightarrow J^1(\sfA)
$$
of this
short exact sequence, and for every $ j^1(\sfa) \in
\mathsf{\Gamma}(J^1(\sfA))$ with $\sfa \in \mathsf{\Gamma}(\sfA)$ one has 
$$
j^1(\sfa)= s(\sfa)- \nabla \sfa\ ,
$$ 
where $\nabla \sfa \in \mathsf{\Gamma}(T^*M  \otimes \sfA).$ We may therefore consider the $\sfA$-valued 1-form $\ccA \in
\mathsf{\Gamma}(T^*M \otimes \sfA)$ defined by $\nabla$, which
together with $\phi$ gives a bundle map
$\overline{\ccA}:T\Sigma\to\sfA$ covering $\phi: \Sigma \rightarrow M$
which generalizes the map $\bar{A}$ discussed in Section~\ref{sec:standardgauging}.

To extend the covariantization of $\de \phi,$ we consider
the case in which $\rho$ has constant rank. Then there is a composition of vector bundle maps
$$
T\Sigma \xlongrightarrow{\phi^* \overline{\ccA}} \sfA
\xlongrightarrow{\rho} TM 
$$
where $\phi^* \overline{\ccA}$ is obtained from $\ccA$ by pulling back the $T^*M$ factor
to $\Sigma$, so that the image ${\rm Im}(\rho \, \circ \,
\phi^* \overline{\ccA}\,)$ is a vector sub-bundle of $TM.$ We pull it
back to $\Sigma$ and obtain the vector sub-bundle $\phi^*{\rm Im}(\rho \, \circ \, \phi^* \overline{\ccA}\,)$ of $\phi^*TM.$ Thus $\rho \, \circ\, \phi^* \overline{\ccA}$ induces a bundle map
$\overline{\rho(\ccA)}: T\Sigma \rightarrow \phi^*{\rm Im}(\rho \, \circ \, \phi^* \overline{\ccA}\,)$ covering the identity, which in turn can be regarded as a tensor $\bar{\rho}(\ccA) \in \mathsf{\Gamma}(T^*\Sigma \otimes \phi^* TM).$ Hence the map $\de \phi$ is covariantized by considering
$$
\De^\nabla \phi= \de \phi - \overline{\rho(\ccA)}
$$
and a gauged action of the form \eqref{sigmanorm} can be written by
regarding $\De^\nabla \phi$ as a tensor $\underline{\De^\nabla \phi}
\in \mathsf{\Gamma}(T^*\Sigma \otimes \phi^* TM)$, which has a well-defined
norm $\|\,\underline{\De^\nabla \phi}\, \|_{h,\cH}$ induced by the metric $\cH.$

Following \cite{Kotov2018}, we shall next discuss the generalization of the isometry conditions by introducing the induced connection 
$$
\prescript{\tau}{}{\nabla}: \mathsf{\Gamma}(\sfA)\times \mathsf{\Gamma}(TM)\longrightarrow \mathsf{\Gamma}(TM)
$$ 
on the tangent bundle $TM$ by 
\be \label{indcon}
\prescript{\tau}{}{\nabla}_\sfa X=\big[\rho(\sfa),X\big]+\rho(\nabla_X \sfa)\ ,
\ee
where $X \in \mathsf{\Gamma}(TM)$ and $\sfa \in \mathsf{\Gamma}({\sf A}).$
The superscript refers to the canonical representation $\tau$ of the jet bundle $J^1(\sfA)$ on $TM$ which, when combined with the connection $\nabla,$ gives the $\sfA$-connection $ \prescript{\tau}{}{\nabla}$ (see \cite{Kotov2018, Fernandes2002,Blaom2006} for details). This provides the last ingredient we need to generalize the isometric gauging construction.

\theoremstyle{definition}
\begin{definition}
Let $(\sfA,\rho)$ be a Lie algebroid over a Riemannian manifold $(M, \cH)$ endowed with an connection $\nabla.$ Then $\nabla$ and $\cH$ are \emph{compatible} if 
\be \label{killie}
\prescript{\tau}{}{\nabla}\cH=0 \ ,
\ee
where $\prescript{\tau}{}{\nabla}$ is defined in \eqref{indcon}.
When this condition holds, the triple $(\sfA,\nabla, \cH)$ is a \emph{Killing Lie algebroid}.
\end{definition}
If we set $\sfA=M \times \frg,$ the action algebroid on $(M, \cH)$ for the Lie algebra $\frg$ of Killing vector fields of the metric $\cH,$ then \eqref{killie} is exactly the Killing equation \eqref{killingeq}. 

We can recast the compatibility condition \eqref{killie} in the following form. Let  $\{ a_i \}$ be a local basis of $\mathsf{\Gamma}(\sfA)$ with $i=1,\dots, \mathrm{rank}(\sfA).$ Then \eqref{killie} can be written as
\be \label{lockillie}
\pounds_{\rho(a_i)} \cH= \Omega^j_i \odot \iota_{\rho(a_j)}\cH \ , 
\ee  
where $\Omega^j_i$ are defined by the action of the connection
$\nabla$ on basis sections, $\nabla a_i = \Omega^j_i \otimes a_j\in
\mathsf{\Gamma}(T^*M\otimes \sfA),$ so that the connection
coefficients $\Omega^j_i$ are 1-forms on $M$. As we now explain, \eqref{lockillie} is an immediate generalization of the Killing equation \eqref{killingeq}.
The image of the anchor map $\rho: {\sf A} \rightarrow TM$ defines a
generalized distribution on $TM$ which is integrable in the sense of
the Stefan-Sussman Theorem, therefore $M$ is foliated by a singular
foliation $\cF.$ The main idea behind \eqref{lockillie} is to give a
condition stating that the components of the Riemannian metric $\cH$
transverse to the leaves, which induces a metric on the leaf space,
must be constant along the leaves. We will elaborate on this a bit
further supposing that the rank of $\rho$ is constant, i.e. the
foliation $\cF$ is regular, as our main interest in this paper concerns gaugings with Lie algebroids characterized by injective anchor maps, and we will be particularly interested in this case for our applications to Born sigma-models later on.

In the case of a regular foliation $\cF$ of $M,$ there is a short exact sequence of vector bundles 
$$
0 \longrightarrow T\cF \xlongrightarrow{i} TM  \xlongrightarrow{\bar{q}} N\cF \longrightarrow 0
$$
where $N\cF= TM / T\cF$ is the normal bundle of $\cF$ and $\bar{q}: TM
\rightarrow N\cF$ is the quotient map. We can always choose an
orthogonal splitting 
$$
s_\perp: N\cF \longrightarrow TM
$$
of this exact
sequence with respect to $\cH$, so that ${\rm Im}(s_\perp)=T\cF^\perp$
and 
$$
TM=T\cF \oplus T\cF^\perp \ ,
$$ 
where $T\cF^\perp \simeq N\cF$ is the orthogonal complement of $T\cF$ in the metric $\cH.$ 
This also induces a splitting of the cotangent bundle $T^*M= T^*\cF \oplus (T\cF^\perp)^*,$ where $(T\cF^\perp)^* \simeq N^*\cF$ with $N^*\cF$ the conormal bundle of $\cF.$ Then we can decompose the metric $\cH$ as
$$
\cH= g_\parallel + g_\perp\ ,
$$
where $g_\parallel$ is a fiberwise metric on the sub-bundle $T\cF$ and
$g_\perp$ is a fiberwise metric on $T\cF^\perp.$ This allows us to
rephrase the condition \eqref{lockillie} by saying that the Lie
derivative $\pounds_{X_\parallel} \cH,$ for every vector field $
X_\parallel \in \mathsf{\Gamma}(T\cF),$ can only have components in
$\mathsf{\Gamma}(T^*\cF \otimes T^*\cF),$ $\mathsf{\Gamma}(T^*\cF
\otimes (T\cF^\perp)^*)$ and $\mathsf{\Gamma}((T\cF^\perp)^* \otimes
T^*\cF),$ i.e. $(\pounds_{X_\parallel} \cH)_\perp =0$ in
$\mathsf{\Gamma}((T\cF^\perp)^* \otimes (T\cF^\perp)^*)$. It is easy to see that this constraint is equivalent to
\be \label{transvinvariance}
\pounds_{X_\parallel} g_\perp=0\ , 
\ee
for all $X_\parallel \in \mathsf{\Gamma}(T\cF).$ This implies that the component $g_\perp$ of the metric $\cH$ is transverse invariant.
In other words, the fiberwise metric $g_\perp$ induced by $\cH$
satisfies ${\rm Ker}(g_\perp)= T\cF$ and also \eqref{transvinvariance}
whenever the gauging is possible. These gauging constraints are simply
the defining conditions for $(M, g_\perp, \cF)$ to be a Riemannian foliation \cite{Molino}. 

We will now show that this statement about the gauging of sigma-models
is equivalent to saying that $(M, \cH, \cF)$ is a foliated manifold
with a bundle-like metric \cite{Reinhart1959,Hermann1960} whenever the foliation
is regular. 
\begin{definition}
A Riemannian metric $\cH$ on a
foliated manifold
$(M,\cF)$ is (\emph{totally geodesics}) \emph{bundle-like} if 
\be\nonumber
\pounds_{X_\parallel}\cH(Y_\perp, Z_\perp)=0\ ,
\ee
for all $ X_\parallel \in \mathsf{\Gamma}(T\cF)$ and $Y_\perp, Z_\perp \in \mathsf{\Gamma}(T\cF^\perp).$
\end{definition}
The leaf holonomy invariance of bundle-like metrics is discussed in \cite{Hermann1960}. The motivating example for this structure is given by the compatible generalized metrics of Example \ref{fiberborn}.

\begin{theorem}\label{thm:bundlelike}
Let $M$ be a manifold endowed with a regular foliation $\cF$ and a
Riemannian metric $\cH.$ Then the gauging condition \eqref{killie} holds if and only if $\cH$ is bundle-like.
\end{theorem}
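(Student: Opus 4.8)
The plan is to prove the equivalence by working in the $\cH$-orthogonal splitting $TM = T\cF \oplus T\cF^\perp$ and reducing both conditions to the local Killing-type equation \eqref{lockillie}. Throughout I identify the Lie algebroid $\sfA$ with $T\cF = \rho(\sfA)$ through its injective anchor, so that a local basis $\{a_i\}$ of $\mathsf{\Gamma}(\sfA)$ yields a local frame $\{\rho(a_i)\}$ of $T\cF$, and I set $g_\parallel = \cH|_{T\cF}$ and $g_\perp = \cH|_{T\cF^\perp}$. The first step is to record the pointwise form of the gauging condition: expanding $\prescript{\tau}{}{\nabla}\cH = 0$ with the Leibniz rule for the induced connection \eqref{indcon} and writing $\nabla a_i = \Omega^j_i \otimes a_j$, the bracket terms reassemble into $\pounds_{\rho(a_i)}\cH$ while the $\rho(\nabla a_i)$ terms reassemble into the symmetric product $\Omega^j_i \odot \iota_{\rho(a_j)}\cH$. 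This is exactly the equivalence \eqref{killie} $\Longleftrightarrow$ \eqref{lockillie} already invoked before the theorem.

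For the forward implication I would evaluate \eqref{lockillie} on a pair $Y_\perp, Z_\perp \in \mathsf{\Gamma}(T\cF^\perp)$. The right-hand side vanishes identically, since $\iota_{\rho(a_j)}\cH$ annihilates $T\cF^\perp$ by orthogonality of the splitting, leaving $(\pounds_{\rho(a_i)}\cH)(Y_\perp, Z_\perp) = 0$. A one-line computation of $\pounds_{fX}\cH$ then shows that $X_\parallel \mapsto (\pounds_{X_\parallel}\cH)(Y_\perp, Z_\perp)$ is $C^\infty(M)$-linear, because the correction terms each carry a factor $\cH(X_\parallel, Y_\perp)$ or $\cH(X_\parallel, Z_\perp)$ which vanishes; hence the vanishing on the frame $\{\rho(a_i)\}$ upgrades to $\pounds_{X_\parallel}\cH(Y_\perp, Z_\perp) = 0$ for all $X_\parallel \in \mathsf{\Gamma}(T\cF)$, which is precisely the bundle-like property.

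The converse is where the real work lies, since the bundle-like condition refers only to the pair $(\cH,\cF)$ and I must exhibit a compatible connection $\nabla$. Decomposing \eqref{lockillie} into its $\perp\perp$, $\parallel\perp$ and $\parallel\parallel$ blocks relative to the splitting, the $\perp\perp$ block holds by the bundle-like hypothesis (both sides vanish), while the remaining two blocks become linear equations for the unknown $1$-forms $\Omega^j_i$. The $\parallel\perp$ block reduces to $(\pounds_{\rho(a_i)}\cH)(X_\parallel, Y_\perp) = \Omega^j_i(Y_\perp)\, g_\parallel(\rho(a_j), X_\parallel)$ and fixes $\Omega^j_i|_{T\cF^\perp}$ uniquely, by inverting the non-degenerate matrix $g_\parallel(\rho(a_j),\rho(a_k))$; the $\parallel\parallel$ block is the familiar underdetermined metricity relation $A_{i,kl} = \Gamma_{ik,l} + \Gamma_{il,k}$ for the tangential coefficients $\Gamma^j_{ik}=\Omega^j_i(\rho(a_k))$, which always admits a solution. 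These two blocks involve disjoint parts of $\Omega^j_i$, so there is no interference, and I would remark that the Bott connection of the foliation provides a canonical such $\nabla$, matching the discussion preceding the theorem.

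The main obstacle is precisely this solvability step in the converse: one must verify that the bundle-like hypothesis is exactly what renders the $\perp\perp$ block consistent, and that non-degeneracy of $g_\parallel$ together with the symmetry of the $\parallel\parallel$ equation guarantees the remaining blocks are solvable with a genuine connection $1$-form. Once this is organized along the splitting the argument is clean, and everything else is routine bookkeeping; I do not anticipate any difficulty in the forward direction, where only orthogonality and $C^\infty(M)$-linearity are needed.
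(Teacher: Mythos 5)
Your proposal is correct, and the direction ``gauging condition $\Rightarrow$ bundle-like'' coincides with the paper's argument: both evaluate \eqref{lockillie} on a pair of sections of $T\cF^\perp$, observe that the right-hand side dies by $\cH$-orthogonality of the splitting, and conclude $(\pounds_{X_\parallel}\cH)_\perp=0$; your explicit $C^\infty(M)$-linearity check for passing from the frame $\{\rho(a_i)\}$ to arbitrary $X_\parallel$ is actually more careful than the paper's terse ``we have already shown above.'' Where you genuinely diverge is the converse. The paper takes the Bott connection $\nabla^{\tt B}$ restricted to $T\cF$, uses the vanishing of the mixed torsion components to show $\prescript{\tau}{}{\nabla}^{\tt B}_{Y_\parallel}X=\nabla^{\tt B}_{Y_\parallel}X$, and then invokes metricity of $\nabla^{\tt B}$ for the bundle-like metric --- a short conceptual argument that produces a canonical, globally defined connection in one stroke. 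You instead treat \eqref{lockillie} as a linear system for the unknown $1$-forms $\Omega^j_i$ and solve it block by block: this is more computational, but it isolates exactly where the bundle-like hypothesis is used (consistency of the otherwise empty $\perp\perp$ equations), shows that $\Omega^j_i|_{T\cF^\perp}$ is forced by invertibility of $g_\parallel$ while $\Omega^j_i|_{T\cF}$ retains the usual metricity freedom, and thereby exhibits the full space of compatible connections rather than a single one. The only point you leave implicit is that the block-solving is local in a frame, so to produce a global Lie algebroid connection you should either patch with a partition of unity (harmless, since \eqref{killie} is affine in $\nabla$) or simply fall back on the Bott connection, which you do note is a canonical solution; with that remark made explicit, your argument is complete.
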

\begin{proof}
Suppose that $\cH$ is a bundle-like metric. Let ${\sf A}=T\cF$ be the
Lie algebroid with anchor map given by the inclusion $i: T\cF
\hookrightarrow TM.$ We choose the Lie algebroid connection $\nabla$
to be the unique Bott connection\footnote{See \cite{Tondeur1988,
    Baudoin-Lect} for background on Bott connections on foliated Riemannian manifolds. In our case we use the Bott connection on the tangent bundle rather than on the normal bundle.} $\nabla^{\tt B}$ defined by $(M, \cH,
\cF)$ restricted to $T\cF$: 
$$
\nabla^{\tt B}: \mathsf{\Gamma}(TM)
\times \mathsf{\Gamma}(T\cF)\longrightarrow \mathsf{\Gamma}(T\cF)
$$ 
given
by $\nabla^{\tt B}_X Y_\parallel \in \mathsf{\Gamma}(T\cF),$ for all $
X \in \mathsf{\Gamma}(TM)$ and $ X_\parallel \in
\mathsf{\Gamma}(T\cF)$. The restriction is always well-defined as a
Lie algebroid connection since this reflects one of the properties of
the Bott connection. 
To show that \eqref{killie} holds, we give the
representation $\prescript{\tau}{}{\nabla}^{\tt B}$ for this
restriction of the Bott connection by using \eqref{indcon}:
\be \label{reprbott}
\prescript{\tau}{}{\nabla}^{\tt B}_{Y_\parallel} X=[Y_\parallel ,X] + \nabla^{\tt B}_X Y_\parallel \ .
\ee
The only non-zero component of the torsion tensor $T^{\tt B}$ of a Bott
connection is $T^{\tt B}(X_\perp,
Y_\perp) \in \mathsf{\Gamma}(T\cF),$ i.e. $T^{\tt B}(X_\perp,
Y_\parallel)=T^{\tt B}(X_\parallel, Y_\parallel)=0.$ This implies
$$
\nabla^{\tt B}_X Y_\parallel= \nabla^{\tt B}_{Y_\parallel} X + [X,
Y_\parallel ]\ ,
$$ 
which when substituted in \eqref{reprbott} gives 
$$ 
\prescript{\tau}{}{\nabla}^{\tt B}_{Y_\parallel} X= \nabla^{\tt
  B}_{Y_\parallel} X \ .
$$
Since $\nabla^{\tt B}$ is a metric connection for the bundle-like
metric $\cH,$ the compatibility condition \eqref{killie} follows.

Conversely, we have already shown above that \eqref{killie} is equivalent to the condition \eqref{transvinvariance}, which is the defining condition of a bundle-like metric. 
\end{proof}

To understand when the condition \eqref{killie} allows us to recover a
sigma-model for the quotient space of a foliation, we need an additional concept.
\begin{definition}
Let $(M, \cH)$ and $(\cQ, g)$ be Riemannian manifolds. Let ${\mathit\Pi}: M
\rightarrow \cQ$ be a surjective submersion, so that the orthogonal complement ${\rm Ker}(\de {\mathit\Pi})^\perp \subset TM$ defines the horizontal sub-bundle complementary to ${\rm Ker}(\de {\mathit\Pi}).$ Then ${\mathit\Pi}$ is a \emph{Riemannian submersion} if the isomorphism $\de {\mathit\Pi}:{\rm Ker}(\de {\mathit\Pi})^\perp \rightarrow T\cQ$ is an isometry:
$$ 
\cH(X_{\tt h}, Y_{\tt h})= g\big(\de {\mathit\Pi} (X_{\tt h}), \de {\mathit\Pi}
(Y_{\tt h})\big)\ , 
$$  
for all $X_{\tt h}, Y_{\tt h} \in \mathsf{\Gamma}({\rm Ker}(\de \phi)^\perp)$.
\end{definition}

Whenever the quotient $\cQ=M/\cF$ of a foliated manifold
$(M,\cF)$ equipped with a bundle-like metric $\cH$ is smooth, we can
identify $(T\cF^\perp)^* \simeq N^*\cF$ with $T^*\cQ$ and find that
$g_\perp$ induces a metric $g$ on the quotient manifold $\cQ.$ This
describes a Riemannian submersion of $(M, \cH)$ onto $(\cQ,g).$ We
interpret the Riemannian submersion $(M, \cH) \rightarrow (\cQ,g)$ as
a way to relate a sigma-model $\cS(\cH)$ for a foliated Riemannian
manifold $(M,\cH)$ to a sigma-model $S(g)$ for the leaf space $\cQ$
endowed with a Riemannian metric $g$ obtained from $\cH$ through the
gauging. From this point of view, the constraint \eqref{transvinvariance} is simply a condition for the metric $g_\perp$ to be well-defined on the leaf space $\cQ.$ 

Under these conditions, the gauged sigma-model thus constructed is
invariant under the Lie algebra of sections of the pullback to $\Sigma$ of the Lie algebroid $(T\cF,i)$ on $M$, which is a Lie subalgebroid of the tangent algebroid $(TM,\unit_{TM})$. An integrating Lie groupoid for $TM$ is given by the pair groupoid $M\times M\rightrightarrows M$ of the manifold $M$, with orbit space $M$, while $T\cF$ is integrated by the Lie subgroupoid of $M\times M$ given by the graph of the equivalence relation on $M$ defined by the surjective submersion $M\to\cQ$~\cite{Mackenzie1995}, with orbit space $M/\cF=\cQ$. The pullback of this to $\Sigma$ determines the groupoid of gauge transformations which leaves the resulting gauged sigma-model invariant.

\medskip

\subsubsection{Incorporating the $B$-Field} ~\\[5pt]
Finally, we wish to include a topological term in the action functional of the sigma-model, i.e. a $B$-field. This extension of the present formalism is also discussed in \cite{Kotov2018}.

\theoremstyle{definition}
\begin{definition}
Let $(\sfA, \rho)$ be a Lie algebroid over $(M, \Phi),$ where $\Phi \in \mathsf{\Gamma}(T^*M \otimes T^*M)$ is a non-degenerate bilinear form. Endow $\sfA$ with a connection  $\nabla$ and let $\psi \in \mathsf{\Gamma}(T^*M \otimes {\sf End}(\sfA)).$ Then $(\sfA,\nabla, \psi)$ and $(M, \Phi)$ are \emph{compatible} if 
\be \label{compatop}
(\prescript{\tau}{}{\nabla}^{+}\otimes \unit + \unit \otimes \prescript{\tau}{}{\nabla}^{-})(\Phi)=0\ , 
\ee
where 
$$
\nabla^{\pm}=\nabla \pm \psi
$$ 
and $\prescript{\tau}{}{\nabla}$
is given in \eqref{indcon}.
\end{definition}
To apply this to the case at hand, we also describe the local
expression of \eqref{compatop}. Let us write ${\sf Sym}\{\Phi\}=\cH$
and ${\sf Alt}\{\Phi\}=\omega$ for the symmetric and skew-symmetric
parts of the $(0,2)$-tensor $\Phi$, and let $\{a_i\}$ be a local basis of $\mathsf{\Gamma}(\sfA)$ such that 
$$
\nabla a_i = \Omega^j_i \otimes a_j \qquad \mbox{and} \qquad \psi a_i
= \psi^j_i \otimes a_j \ ,
$$
where $\Omega^j_i$ and $\psi^j_i$ are 1-forms on $M$.
Then \eqref{compatop} can be cast in the form
\begin{align}
\pounds_{\rho(a_i)}\cH &= \Omega^j_i \odot \iota_{\rho(a_j)} \cH + \psi^j_i \odot \iota_{\rho(a_j)}\omega \label{compa1} \ , \\[4pt]
\pounds_{\rho(a_i)}\omega &= \Omega^j_i \wedge \iota_{\rho(a_j)} \omega + \psi^j_i \wedge \iota_{\rho(a_j)}\cH \ . \label{compa2}
\end{align}
These equations represent the generalization of \eqref{lockillie},
thus the condition expressed by \eqref{compatop} is usually referred
to as a \emph{generalized isometry} or as the condition for a \emph{Lie algebroid gauging}. 

A relevant example for our purposes is given by a sigma-model on a foliated
manifold. Let $(M,\Phi)$ be a manifold foliated by $\cF.$ We then
naturally consider as Lie algebroid $\sfA= T\cF.$ The gauging of the
corresponding sigma-model is possible if \eqref{compatop} is satisfied
and the bilinear form $\Phi$ induces a bilinear form $\Phi^{\cQ}$ on
$N^*\cF.$ Whenever the quotient $\cQ=M/\cF$ is smooth, the bilinear
form $\Phi^\cQ$ defines a section of $T^*\cQ \otimes T^*\cQ.$ 

In this paper we are mostly concerned with the conditions that the
geometry of the target space must satisfy in order to admit a gauging
of the generalized isometry. For a discussion involving the structure
of the pullback gauge Lie algebroid on the worldsheet $\Sigma$, see \cite{Wright2019}. 

\medskip

\subsection{Gauging the Born Sigma-Model} \label{BSMGau} ~\\[5pt]
The discussion of Section~\ref{isononiso} can be applied to a Born manifold $(M,K,\omega, \cH)$ simply by setting 
$$
\Phi= \cH+ \omega
$$
in $\mathsf{\Gamma}(T^*M \otimes T^*M),$ which is non-degenerate by construction. By assuming that one of the eigenbundles $L_{\pm}$ of $K$ is integrable, we obtain a foliation $\cF_{\pm}$ of $M.$ 
Then we can regard the map $(M, \Phi) \rightarrow (\cQ, \Phi^{\cQ}),$
where $\cQ=M/ \cF_{\pm}$ and 
$$
\Phi^\cQ= g+b \ ,
$$ 
as a reduction of a Born
sigma-model $\cS(\cH, \omega)$ to a sigma-model $S(g, b)$ for the leaf
space $\cQ.$ The specialization of the general formalism for the
gauging for a foliated manifold comes from the compatibility conditions that the Riemannian metric $\cH$ now satisfies with the para-Hermitian structure.

Here we shall focus on the local construction of the gauging of a Born
sigma-model. From now on $\cH$ will be the compatible generalized metric on a Born manifold $M$ and $\omega$ will be its fundamental 2-form. To apply the theory of gauged sigma-models with generalized isometry
reviewed in Section~\ref{isononiso}, we need a foliated target space
$M.$ Since the target space of a Born sigma-model is an almost
para-Hermitian manifold $(M, K, \eta),$ the tangent bundle is given by
the splitting $TM=L_+\oplus L_-$ as described in Section
\ref{intropara}. We require
one of the eigenbundles $L_\pm$ of $K$ to be integrable; let us take
it to be $L_-$ for definiteness. Then $M$ is foliated by the leaves of
$\cF_-$ such that $T\cF_-=L_-.$ From a local perspective, this amounts
to the assumption that any frame for $TM$ closes a Lie algebra of the
form \eqref{fluxes} with $R^{mnk}=0$; we shall return to the case when
$R^{mnk}\neq0$ in Section~\ref{specT}.
The bundle $L_-$ is a Lie subalgebroid of the tangent algebroid
$TM$ with the anchor map given by the inclusion $ i:L_-
\hookrightarrow TM$. When the space of leaves is a smooth manifold, $L_-$ is integrated by the Lie subgroupoid of the pair groupoid $M\times M\rightrightarrows M$ given by the graph of the equivalence relation on $M$ defined by the surjective submersion $M\to\cQ=M/\cF_-$; we shall discuss the case when the leaf space is not smooth in Section~\ref{specT}. It is natural to use this Lie algebroid in the
application of the formalism of Section~\ref{isononiso}, and
investigate under which geometric constraints the generalized isometry
conditions \eqref{compa1} and \eqref{compa2} are satisfied for a Born
sigma-model with a foliated para-Hermitian target space. For this, we
require a Lie algebroid connection on $L_-$; as discussed in the proof
of Theorem~\ref{thm:bundlelike}, a natural candidate is the Bott
connection $\nabla^{\tt B}$. 

The traditional approach to the gauging of a
generalized isometry involves applying a non-standard variation of the
dynamical
fields $\phi$ and $A$ of the gauged Born sigma-model, where $A$ is the
$L_-$-valued connection 1-form on $M$. Since the anchor map is the
inclusion $i$ of $L_-$ in $TM,$ the Lie algebroid bracket is locally given by 
\be \label{eq:tildeZQ}
[\tilde{Z}^m, \tilde{Z}^n]=Q_k{}^{mn}\, \tilde{Z}^k \ .
\ee
The map $\de \phi$ is covariantized upon introducing 
\be \label{gaude}
\De^A \phi^I = \de \phi^I - \bar{i}^{Ij}\, \bar A_j \ ,
\ee
where $\bar{i}$ is the pullback of $i$ along $\phi.$ 
The gauged sigma-model with topological term is thus written as 
\be \label{gauact}
\cS[\phi,A]=\frac{1}{4}\,\int_{\Sigma}\,\bar\cH_{IJ}\, \De^A \phi^I \wedge \star\, \De^A \phi^J + \frac{1}{4}\,\int_{\Sigma}\, \bar\omega_{IJ}\, \De^A \phi^I \wedge \De^A \phi^J \ ,
\ee
and the variations of the fields $\phi$ and $A$ under the infinitesimal gauge transformations generated by the vector fields $\tilde{Z}^i$ read
\begin{align}
\delta_{\epsilon}\phi^I &= \bar{i}^{Ij}\,\epsilon_j \ , \label{varia1} \\[4pt]
\delta_{\epsilon}\bar A_i &=\de \epsilon_i + \bar Q_i{}^{jk}\,
                            \epsilon_k\, \bar A_j + \bar\Omega_{iJ}^{k}\,\epsilon_k\, \De^A \phi^J + \bar\psi_{iJ}^{k}\, \epsilon_k\, \star\, \De^A \phi^J \ , \label{varia2}
\end{align}
where $\epsilon_i \in C^{\infty}(\Sigma).$ 
It can be verified that the gauged action functional \eqref{gauact} is invariant under these transformations:
\be \nonumber
\delta_{\epsilon}\cS[\phi,A]=0 \ ,
\ee
if and only if the conditions \eqref{compa1} and \eqref{compa2} are
satisfied~\cite{Deser2016b}. 

We also give the gauging of the Born sigma-model in local coordinates 
$$
\IX^I=(x^i, \tilde{x}_i)
$$ 
on $M,$ where the foliation $\cF_-$ has leaves whose adapted coordinates are given by $\tilde{x}_i$ for $i=1,\dots,d$.
Then the gauged Born sigma-model coming from the local expression
\eqref{locbsm2} is obtained by replacing the pulled back 1-forms $ \de
\bar x^i$ and $\de \bar {\tilde{x}}{}_i$ by the covariantized maps\footnote{Here and in the following we suppress the pullback of the inclusion $i.$}
\be \label{covdiff}
\De^A \bar x^i= \de \bar x^i \qquad \mbox{and} \qquad \De^A
\bar{\tilde{x}}{}_i = \de \bar{\tilde{x}}{}_i - \bar A_i \ .
\ee
In the quantum theory, this minimal coupling allows $\de
\bar{\tilde{x}}{}_i$ to be absorbed into a shift of the worldsheet gauge fields
$\bar A_i$.
The gauged Born sigma-model in local coordinates thus reads
\be \label{locgaubsm}
\cS[\phi, A]= \frac{1}{4}\, \int_\Sigma\, \bar\cH_{IJ}\, \De^A
\bar \IX^I \wedge \star\, \De^A \bar \IX^J + \frac{1}{4}\, \int_\Sigma\, \bar\omega_{IJ}\, \De^A\bar \IX^I \wedge \De^A\bar \IX^J \ ,
\ee
and the variations \eqref{varia1} and \eqref{varia2} under which the gauged sigma-model is invariant are written as
\begin{align*}
\delta_\epsilon\bar x^i&=0 \ , \\[4pt]
\delta_{\epsilon}\bar{\tilde{x}}{}_i&=\epsilon_i \ , \\[4pt]
\delta_{\epsilon}\bar A_i&=\de \epsilon_i + \bar Q_i{}^{jk}\,
                           \epsilon_k\, \bar A_j + \bar\Omega_{iJ}^{k}\,\epsilon_k\, \De^A\bar \IX^J + \bar\psi_{iJ}^{k}\, \epsilon_k\, \star\, \De^A\bar \IX^J \ .
\end{align*}

In the Born sigma-model, the almost para-Hermitian structure appears
solely in the topological term. In the sigma-model without topological
term, the generalized isometry condition for $\cH$ only involves the
assumption that $M$ must be foliated and does not capture deeper
information about the almost para-Hermitian manifold; in~\cite{Hull2005} the
topological term was introduced for doubled torus bundles in order to maintain invariance under
large gauge transformations in the corresponding gauged sigma-model
and found to play an important role in the quantum
theory~\cite{Hull2007}. To understand how the generalized isometry conditions \eqref{compa1} and \eqref{compa2} specialise to a Born sigma-model, let us discuss the local reduction of a Born sigma-model $\cS(\cH, \omega)$ to a sigma-model $S(g, b)$ for the leaf space. 

In the coordinates $\IX^I=(x^i, \tilde{x}_i)$ adapted to the foliation
$\cF_-,$ where $\tilde{x}_i$ are the leaf coordinates, the local frame
and dual coframe which respectively span $\mathsf{\Gamma}(TM)$ and
$\mathsf{\Gamma}(T^*M)$, and which diagonalize the almost para-complex structure $K$, are given in the form
$$ 
Z_i= \frac{\partial}{\partial x^i} + N_{ij}\, \frac{\partial}{\partial \tilde{x}_j} \qquad \mbox{and} \qquad \tilde{Z}^i= \frac{\partial}{\partial \tilde{x}_i} \ ,
$$
for local functions $N_{ij}$ on $M$, since there always exists a local completion $\{ \tilde{Z}^i \}$ of the holonomic frame for $\mathsf{\Gamma}(T\cF_-),$ and
$$
\Theta^i= \de x^i \qquad \mbox{and} \qquad \tilde{\Theta}_i= \de
  \tilde{x}_i - N_{ji} \,\de x^j\ ,
$$ 
which form a local basis for $L_+^*$ and $L_-^*$ respectively. The split signature metric $\eta$ assumes the form
\be\label{eq:etagen}
\eta=\eta^i_j\, \bigl((\de \tilde{x}_i - N_{ki}\, \de x^k )\otimes \de x^j + \de x^j \otimes (\de \tilde x_j - N_{kj}\, \de x^k) \bigl)\ ,  
\ee
while the fundamental 2-form $\omega$ reads
$$
\omega= \eta^{j}_i\, \de x^i \wedge \de \tilde{x}_j - \eta^k_j\, N_{ik}\, \de x^i \wedge \de x^j \ .
$$ 
Finally, a compatible generalized metric $\cH$ on $(M, K, \eta)$ is equivalent to specifying a fiberwise metric $g_+$ on $L_+,$ which locally reads 
$$
g_+= (g_+)_{ij}\, \de x^i \otimes \de x^j \ .
$$
Then the complete local expression for $\cH$ is given by
\begin{align}
\cH&= \bigl( (g_+)_{ij}+(g_-)^{kl}\,N_{ik}\,N_{jl} \bigr)\,
     \de x^i \otimes \de x^j + (g_-)^{ij}\, \de \tilde{x}_i \otimes
     \de \tilde{x}_j \nonumber \\
& \qquad - (g_-)^{jk}\,N_{ik}\, \de x^i \otimes \de
  \tilde{x}_j -(g_-)^{ik}\,N_{jk}\, \de \tilde{x}_i \otimes
  \de x^j \ ,
\label{eq:cHgen}\end{align}
where $g_-$ is defined in \eqref{eq:g-}.

The Born sigma-model $\cS(\cH, \omega)$ in the coordinates adapted to the foliation is written as in \eqref{locbsm2} and its gauging is obtained upon introducing the covariant derivatives \eqref{covdiff} to get the action functional \eqref{locgaubsm}. Then the gauged Born sigma-model reads
\begin{align}
\cS[\phi, A] &= \frac{1}{4}\,\int_\Sigma\, \Big( \bigl( (\bar g_+)_{ij}+
                                            \bar
                                            N_{ik}\,(\bar g_-)^{kl}\,\bar N_{jl}
                                            \bigr)\, \de\bar x^i
                                            \wedge \star\, \de\bar x^j
                                            \nonumber \\ & \hspace{2cm} - 2\,(\bar
                                            g_-)^{ik}\,\bar N_{jk}\, \De^A
                                            \bar{\tilde{x}}{}_i \wedge
                                            \star\, \de \bar x^j+(\bar g_-)^{ij}\, \De^A \bar{\tilde{x}}{}_i \wedge \star\, \De^A
   \bar{\tilde{x}}{}_j \Big) \nonumber \\ & \quad \, + \frac{1}{2}\, \int_\Sigma\, \Big(\bar \eta^{j}_i\, \de\bar x^i \wedge \De^A
   \bar{\tilde{x}}{}_j - \bar\eta^k_j\, \bar N_{ik}\, \de\bar x^i
   \wedge \de\bar x^j \Big) \ .
   \label{foliagau}\end{align}
To recover a reduced sigma-model on the leaf space, we impose the constraint obtained from the equation of motion for the worldsheet gauge field $A,$ $\delta\cS[\phi, A]/\delta A_i=0,$ which appears quadratically as an auxiliary field. It is given explicitly by
\be \label{selfduality}
\De^A\bar{\tilde{x}}{}_i= \bar N_{ki} \, \de\bar x^k + (\bar
g_-^{-1})_{il}\, \bar\eta^l_k\, \star\, \de\bar x^k\ .
\ee
By using \eqref{eq:etagen} and \eqref{eq:cHgen} we can write \eqref{selfduality} in a more covariant form as
$$
\De^A\bar\IX^I = \bar\eta^{IJ}\,\bar\cH_{JK} \, \star \, \de\bar\IX^K \ ,
$$
which for $A=0$ is the immediate generalization of the self-duality constraint written in~\cite{Hull2005,Hull2007,Hull2009}.
By substituting the constraint \eqref{selfduality} in \eqref{foliagau}, we obtain the local expression
\be\label{eq:sigmareduced}
S[\phi]= \frac{1}{2}\,\int_\Sigma\, (\bar g_+)_{ij}\, \de\bar
x^i \wedge \star\, \de\bar x^j + \int_\Sigma \, \bar N_{ik}\,\bar\eta^k_j\,
\de\bar x^i \wedge \de\bar x^j \ . 
\ee
In the quantum theory, integrating out $A_i$ in the functional integral
formally generates a determinant involving $\det(\bar g_-)=\det(\bar g_+)^{-1}$ which contributes a
Fradkin-Tseytlin term with dilaton field 
$$
\Psi = -\tfrac12\log\det(g_+)
$$
 in the sigma-model action functional; this gives the required
 generalized T-duality invariant correction to the dilaton.

It follows that the reduced sigma-model \eqref{eq:sigmareduced} is well-defined on the leaf space if the condition
$$
\pounds_{X_-} g_+=0
$$
holds for all $ X_- \in \mathsf{\Gamma}(L_-),$ so that $\cH$ is a
bundle-like metric; in other words, $\pounds_{X_-}\cH \in
\mathsf{\Gamma}(T^*M \otimes L_-^*),$ and $(M,g_+,\cF_-)$ is a
Riemannian foliation. If $\psi_i^j \in
\mathsf{\Gamma}(L_-^*),$ then the Lie derivative of $\cH$ along any
vector field in $\mathsf{\Gamma}(L_-)$ is still an element of
$\mathsf{\Gamma}(T^*M \otimes L_-^*)$  and hence the condition
\eqref{compa1} still holds, since $\iota_{X_-}\omega \in
\mathsf{\Gamma}(L_+^*),$ for all $ X_- \in \mathsf{\Gamma}(L_-).$ 

Let us focus now on the topological term of the reduced
sigma-model \eqref{eq:sigmareduced}. We would like to find conditions under which the 2-form
$$
b=N_{ik}\,\eta^k_j\, \de x^i \wedge \de x^j
$$ 
is at least locally
well-defined on the leaf space $\cQ= M/ \cF_-.$ This 2-form arises
from a local frame spanning the sub-bundle $L_+,$ so it involves the
locally defined functions $N_{ik}$ which characterize the frame for
$L_+.$ To obtain a condition involving these functions, we will
consider the transverse components to the foliation of the Lie derivative of $\omega$ along vector fields parallel to the foliation.
Following~\cite{Kotov2018}, we require the condition
$$
{}^\tau\nabla\omega=0 \ ,
$$
which implies transversal invariance of the fundamental 2-form, i.e. for all $X_-\in \mathsf{\Gamma}(L_-)$:
\be \label{topcondition}
\pounds_{X_-}\omega(Y_+, Z_+)=0\ ,  
\ee
for all $ Y_+, Z_+ \in \mathsf{\Gamma}(L_+).$ This implies that the split signature metric satisfies the invariance condition $\pounds_{X_-}\eta=0$, and the frame spanning $L_+$ which diagonalizes $K$ is composed of (local) projectable vector fields. If we write any section of $L_+$ in the local
basis diagonalizing $K$, then this condition imposes the local constraint 
$$
\pounds_{X_-}N_{ik}=0 \ .
$$ 
Then the metric \eqref{eq:etagen} coincides with the pp-wave type
split signature metrics proposed by~\cite{Cederwall:2014kxa}, proving here that
these exhaust (locally) the allowed non-constant split signature metrics
for double field theory.
It follows that then the local 2-form $b=N_{ik}\,\eta^k_j\, \de x^i \wedge \de x^j$ is well-defined on the leaf space.

Because of the isotropy of $L_-$ with respect to $\omega$ and its involutivity, it is also easy to show that
$$
\pounds_{X_-} \omega (Y_-, Z_-)=0 \ , 
$$ 
for all $Y_-, Z_- \in \mathsf{\Gamma}(L_-) .$ Together with
\eqref{topcondition} this implies that $\pounds_{X_-}\omega$, like
$\omega$, is an element of $\mathsf{\Gamma}(L_+^*\wedge L_-^*).$ It
follows that the Lie derivative of $\omega$ along any vector field
from $\mathsf{\Gamma}(L_-)$ satisfies \eqref{compa2} if the connection
coefficients of $\nabla$ satisfy $\Omega_i^j \in
\mathsf{\Gamma}(L_-^*).$ Combining this constraint with the constraint
obtained from the generalized isometry condition for $\cH$, we may
still consider the Bott connection $\nabla^{\tt B}$ as the Lie
algebroid connection on $T\cF_-$ and restrict it further to get a map
$$
\nabla^{\tt B}: \mathsf{\Gamma}(T\cF_-) \times
\mathsf{\Gamma}(T\cF_-) \longrightarrow \mathsf{\Gamma}(T\cF_-) \ ,
$$ 
so that
$\nabla^{\tt B}\pm \psi$ with $\psi \in \mathsf{\Gamma}(L_-^* \otimes
{\sf End}(T\cF_-))$ give well-defined connections on $T\cF_-.$ This
restriction of $\nabla^{\tt B}$ is well-defined because of the
properties of Bott connections. 

The 2-form $b$ on the leaf space
emerging from this description is not always globally defined; a
global construction involving the 3-form $\cK = \de \omega$ should, in principle, resemble the construction implemented in \cite{Severa2019}. A first step towards extending this gauging procedure to further include open string sigma-models can be performed following the formalism of Hamiltonian Lie algebroids~\cite{Ikeda2019}.

\medskip

\subsection{Generalized T-Duality and Non-Geometric Backgrounds} \label{specT} ~\\[5pt]
We will now discuss the role of $ {\sf O}(d,d)(M)$-transformations for Born sigma-models and how they relate to their gauging. 
We saw in Section~\ref{ContTduality} that, given an almost para-Hermitian manifold $(M, \eta, K)$ with compatible generalized metric $\cH,$ the Born structure $(\eta, K, \cH)$ is mapped into another Born structure $(\eta, K_\vartheta, \cH_\vartheta)$ on $M$ by $\vartheta \in \sfO(d,d)(M).$ Thus starting from a Born geometry which defines a Born sigma-model $\cS(\cH, \omega)$ with the target space $M,$ an $\sfO(d,d)(M)$-transformation gives another Born sigma-model $\cS(\cH_\vartheta, \omega_\vartheta)$ with the new Born structure on the same target space. This is a generalized T-duality transformation which relates two Born sigma-models.

To interpret generalized T-duality in the context of gauged Born sigma-models, we first focus on $\sfO(d,d)(M)$-transformations which relate Born structures $(\eta, K, \omega, \cH)$ and $(\eta, K_\vartheta, \omega_\vartheta, \cH_\vartheta)$ for which both $K$ and $K_\vartheta$ have at least one integrable eigenbundle, which without loss of generality we may assume corresponds to the same eigenvalue $-1$. We write $L_-$ for the integrable sub-bundle of $K$ and $L^\vartheta_-$ for the integrable sub-bundle of $K_\vartheta.$ Then the Born manifold $(M, \eta, K, \cH)$ is foliated by $\cF_-$ such that $L_-=T \cF_-$ and $(M, \eta, K_\vartheta, \cH_\vartheta)$ is foliated by $\cF^\vartheta_-$ such that $L^\vartheta_-=T\cF^\vartheta_-.$
Whenever the backgrounds defining both Born sigma-models $\cS(\cH,
\omega)$ and $\cS(\cH_\vartheta, \omega_\vartheta)$ for $M$ satisfy
the generalized isometry conditions, we can gauge both sigma-models to
reduce them to two distinct non-linear sigma-models $S(g, b)$ and
$S(g_\vartheta, b_\vartheta)$ for the leaf spaces $M/ \cF_-$ and $M/
\cF^\vartheta_-$ respectively. The reduced sigma-models are defined,
respectively, by the Riemannian metric $g$ and the 2-form $b$ induced
by $(\cH, \omega)$ on $M/ \cF_-$, and the metric $g_\vartheta$ and
2-form $b_\vartheta$ induced by $(\cH_\vartheta, \omega_\vartheta)$ on
$M /\cF^\vartheta_-.$  We say that the non-linear sigma-models $S(g,
b)$ and $S(g_\vartheta, b_\vartheta)$ recovered in this way are
\emph{T-dual} to each other. 

We may picture this prescription through the diagram
\begin{center}
\begin{tikzcd}
(M, \cH, \omega) \arrow[r, "\vartheta "] \arrow[d, "{\mathit\Pi} "'] & (M, \cH_\vartheta, \omega_\vartheta) \arrow[d, "{\mathit\Pi}_\vartheta "] \\
(M/\cF_-, g, b) \arrow[r, dashed, swap, "\ccT "] & (M/\cF_-^\vartheta, g_\vartheta, b_\vartheta)
\end{tikzcd}
\end{center}
where $\vartheta \in \sfO (d,d)(M)$, and the dashed arrow $\ccT$ is
not a map but rather the generalized T-duality relation between the
sigma-models $S(g, b)$ and $S(g_\vartheta, b_\vartheta)$ defined by
the backgrounds on the respective leaf spaces. The vertical arrows are
the Riemannian submersions ${\mathit\Pi}$ of $(M, \cH, \omega)$ onto
$(M/\cF_-, g, b),$ which is physically defined by imposing the
dynamical self-duality constraint $\delta\cS(\cH, \omega)/\delta A_i=0$,
and ${\mathit\Pi}_\vartheta$ of $(M, \cH_\vartheta, \omega_\vartheta)$ onto
$(M/ \cF^\vartheta_-, g_\vartheta, b_\vartheta)$ which is similarly
defined by the self-duality constraint $\delta \cS(\cH_\vartheta,
\omega_\vartheta)/\delta A_i=0.$ These constraints relate derivatives of
the pullback of the leaf coordinates $\tilde x_i$ to derivatives of
the pullback of the physical coordinates $x^i$ on the space of leaves as in \eqref{selfduality}, and together with the generalized isometry conditions they constitute the generalization of the strong constraint of double field theory.

The reduced sigma-models may also be used to geometrically
characterize the choice of polarization. When the leaf space $\cQ=
M/\cF_-$ is a smooth manifold and the reduced sigma-model $S(g, b)$
involves well-defined background fields on $\cQ,$ it corresponds to a
\emph{geometric background} and the corresponding polarization is
called a \emph{geometric polarization}. Otherwise, if the leaf space
$\cQ$ does not admit a smooth structure but the background fields
$(g,b)$ are still well-defined on $\cQ$, i.e. they come from a
background $(\cH, \omega)$ satisfying the generalized isometry
conditions for the Born sigma-model, we call it a \emph{locally
  geometric background}. Using the terminology of~\cite{Hull2005}, we
will refer to such leaf spaces as \emph{T-folds}, and the
corresponding polarization defining the Born sigma-model which leads
to a T-fold will be called a \emph{T-fold polarization}. In contrast
to common lore, initiated by~\cite{Shelton2005}, it is possible for
both geometric and non-geometric backgrounds in this sense to have
non-vanishing `$Q$-flux', as \eqref{eq:tildeZQ} and the general
analysis of Section~\ref{BSMGau} shows. That $Q$-flux is not
necessarily an obstruction to global geometry was also highlighted by~\cite{Schulz:2011ye}.

In a T-fold
polarization, the foliation $\cF_-$ defines a Lie subalgebroid of the
tangent algebroid $(TM,\unit_{TM})$ that is naturally integrated by
the holonomy groupoid of $\cF_-$ presenting the
space of leaves, which however is no longer a Lie
subgroupoid of the pair groupoid $M\times M\rightrightarrows
M$~\cite{Mackenzie1995}. The (singular) quotient $\cQ=M/\cF_-$ can
also be presented in a more invariant way as a
  \emph{smooth} stack, even for singular foliations
  $\cF_-$, and generalized T-duality can be realized as a morphism of
  stacks. This perspective was developed by~\cite{Bytsenko:2015uxa} in
  the more general context of stratified spaces, which include the
  orbifolds and symmetric spaces that appear in the following, while topological T-duality and T-folds are described in such a
  geometric framework by~\cite{Nikolaus2018} for the 
  polarizations obtained from torus bundles with NS--NS $H$-flux. An interesting special class of T-folds which admit a precise geometric description are the foliated Born manifolds whose leaves are compact. Since the generalized isometry condition for the compatible generalized metric 
$
\cH= g_+ + g_-
$
implies that $(M, g_+, \cF_-)$ is a Riemannian foliation, in these
cases the leaf space admits the structure of an orbifold with isotropy
group given by the leaf holonomy group, and ${\mathit\Pi}:M\to\cQ$ is an
orbifold submersion, see \cite{Molino, Boyer2007, Mrcun2003} for
further details; in this case, a more invariant description of the
orbifold $\cQ$ is as a smooth real Deligne-Mumford stack. 

The other scenario which can arise is when a generalized T-duality ${\vartheta} \in {\sf O}(d,d)(M)$ maps a Born manifold $(M,\eta, K, \cH)$ with an integrable eigenbundle $L_- \subset TM$ of $K$ into another Born manifold $(M, \eta,K_{{\vartheta}}, \cH_{{\vartheta}})$ with eigenbundle $L^{{\vartheta}}_-\subset TM$ of $K_\vartheta$ which is no longer integrable; this is the case of non-vanishing `$R$-flux' $R^{mnk}\neq0$ in \eqref{fluxes}.
In this case, there is still a well-defined Born sigma-model $\cS(\cH_{{\vartheta}}, \omega_{{\vartheta}})$ which is related to the Born sigma-model $\cS(\cH, \omega)$ 
by an ${\sf O}(d,d)(M)$-transformation. However, even if a frame
spanning $\mathsf{\Gamma}(L^{{\vartheta}}_-)$ and
$\cS(\cH_{{\vartheta}}, \omega_{{\vartheta}})$ satisfy the generalized
isometry conditions, a gauging of $\cS(\cH_{{\vartheta}},
\omega_{{\vartheta}})$ which recovers a conventional spacetime
description is not possible since $M$ is no longer a foliated manifold: there is no submersion from $(M,\cH_{{\vartheta}}, \omega_{\vartheta})$ because there is no leaf space in this case. This situation is summarized by the diagram
\begin{center}
\begin{tikzcd}
(M, \cH, \omega) \arrow[r, "\vartheta "] \arrow[d, "{\mathit\Pi} "'] & (M,\cH_{{\vartheta}}, \omega_{{\vartheta}})\arrow[d, dashed, " "] \\
(M/\cF_-, g, b) \arrow[r, dashed, swap,"\ccT "] & (\ \cdot\ ,\ \cdot\ ,\ \cdot\ ) 
\end{tikzcd}
\end{center}
The vertical dashed arrow here indicates the impossibility of recovering any conventional background, even locally. 

In this instance one could try to implement a similar version of the gauging of the Born sigma-model $\cS(\cH_{{\vartheta}}, \omega_{{\vartheta}})$ upon introducing the analogue of a covariantized map $\De^\cA \phi$ which is defined by the bundle maps
$$ 
T\Sigma \xlongrightarrow{\bar{\cA}} L^{{\vartheta}}_- \xlongrightarrow{i} TM
$$ 
where $\bar{\cA}: T\Sigma \rightarrow L^{{\vartheta}}_-$ is a bundle
map which is \emph{not} generally induced by the pullback of a Lie
algebroid connection, since $L^{{\vartheta}}_-$ is not naturally a Lie algebroid in this case. 
There is again the pullback bundle $\phi^* {\rm Im}(i \, \circ \, \bar{\cA}\,) \subset \phi^* TM$ and the induced map $\overline{i(\cA)}: T\Sigma \rightarrow \phi^* TM$ which permits us to write the analogue of the covariant derivative 
$$
\De^\cA \phi = \de \phi - \overline{i(\cA)}\ .
$$
We can associate to this map the
tensor $\underline{\De^\cA \phi} \in \mathsf{\Gamma}(T^*\Sigma \otimes \phi^*
TM).$ A ``gauged'' sigma-model can still be defined with this
map since the norm  $\|\,\underline{\De^\cA \phi}\, \|_{h,\cH}$ is
well-defined on the vector space of sections
$\mathsf{\Gamma}(T^*\Sigma \otimes \phi^* TM).$ This construction
depends on the choice of bundle map~$\bar\cA$.

To give a physical meaning to this construction, we pass to the
local picture and introduce the analogue of a covariant derivative for
only half of the coordinates; however these coordinates now have no
particular geometric significance. We can write down a self-duality
constraint $\delta\cS(\cH_{{\vartheta}},
\omega_{{\vartheta}})/\delta\cA_i=0,$ but the solution to this constraint does not eliminate the dependence of the
background fields on the ``gauged coordinates'', and is moreover expected to involve a non-local expression. This means that it is
not possible to find even a locally defined conventional background on some open subset of $M.$
Thus there is no reduced sigma-model that can be recovered, since
there is no well-defined quotient and hence no physical spacetime to
serve as a target for a reduced sigma-model. The polarization in which
this happens can thus only be described in the full doubled formalism
based on the Born manifold $(M,\cH_\vartheta,\omega_\vartheta)$; using
the terminology of~\cite{Hull:2019iuy}, we say that this polarization
is associated with an \emph{essentially doubled space}, and call
the corresponding polarization an \emph{essentially doubled polarization}.

\medskip

\subsection{Weakly versus Strongly T-Dual Sigma-Models} \label{naive} ~\\[5pt]
We will now discuss how to distinguish T-dual sigma-models based on
the geometry of the underlying foliations. For this, we stress a
further distinction amongst generalized T-duality transformations. We
may apply an ${\sf O}(d,d)(M)$-transformation $\vartheta$
preserving the foliation induced by the almost para-complex structure
and preserving the transverse metric $g_+$ to the foliation, i.e. $(M,
\cF_-, \cH, \omega)$ and $(M, \cF_-, \cH_\vartheta, \omega_\vartheta)$
are both Riemannian foliations with respect to the same metric $g_+$
and have the same leaf space $\cQ=M/\cF_-.$ In this case, the only
difference between the reduced sigma-models is given by the
topological term for the leaf space $\cQ,$ i.e. the T-dual
sigma-models are given by the backgrounds $(\cQ, g_+, b)$ and $(\cQ,
g_+, b_\vartheta).$ These sigma-models thus have the same dynamical
content, since the same transverse metric $g_+$ appears in the
background of each. In this sense, such sigma-models are \emph{weakly}
T-dual to each other; weak T-duality acts on an exact Courant algebroid,
and is physically a manifest symmetry of the low-energy effective target space supergravity theory on $\cQ$.
In contrast, \emph{strongly} T-dual sigma-models arise when applying $\sfO(d,d)(M)$-transformations which map a Riemannian foliation into a different Riemannian foliation, each associated with a different Born structure.

It follows from Remark~\ref{rem:B+} that a weak generalized T-duality transformation
is exactly a $B$-transformation. Let $(M, \cF_-, \eta, \omega, \cH)$ be a
foliated Born manifold with splitting of its tangent bundle 
$$
TM=T\cF_- \oplus L_+
$$
induced by the almost para-complex structure $K.$ The compatible generalized metric is 
$$
\cH= g_++g_-
$$ 
in this splitting, where $g_+$ and $g_-$ are fiberwise metrics on
$L_+$ and $T\cF_-$, respectively, which are related by \eqref{eq:g-}. We may think of this splitting as the
bundle map 
$$
s: N\cF_- \longrightarrow TM
$$ 
that splits the short exact sequence of vector bundles 
\be \label{shortfoli}
0 \longrightarrow T\cF_- \longrightarrow TM \longrightarrow  N\cF_- \longrightarrow 0
\ee
which is maximally isotropic with respect to the split signature metric $\eta$ and orthogonal to $T\cF_-$ in the compatible generalized metric $\cH.$ We assume that $(M, g_+, \cF_-)$ is a Riemannian foliation, therefore the associated Born sigma-model $\cS(\cH, \omega)$ can be reduced to a conventional non-linear sigma-model $S(g_+, \omega\rvert_{L_+})$ for the leaf space $\cQ,$ i.e. there exists a Riemannian submersion 
$$
{\mathit\Pi}: (M, \cH, \omega) \longrightarrow (\cQ, g_+, \omega\rvert_{L_+}) \ .
$$

A weakly T-dual sigma-model is obtained by applying a
$B_+$-transformation, which preserves $T\cF_-$ and is generated by a
basic 2-form $b_+ \in \mathsf{\Gamma}(\midwedge^2 L_+^*)$, as
discussed in Section~\ref{sec:Btransformations}. The $B_+$-transformed
Born structure is given by $(K^{B_+}, \eta, \cH^{B_+}, \omega^{B_+}),$
where $\omega^{B_+}= \omega+ 2\, b_+$ and $K^{B_+}=K+2\,B_+$ with
$B_+$ a bundle map from $L_+$ to $T\cF_-.$ It induces the splitting 
$$
TM=T\cF_-
\oplus L_+^{B_+}
$$
that can be regarded as a different choice of
splitting 
$$
s^{B_+}: N\cF_- \longrightarrow TM
$$
of the short exact sequence
\eqref{shortfoli} such that ${\rm Im}(s^{B_+})$ is still maximally
isotropic with respect to $\eta,$ which naturally follows from the fact that $s$ changes by a $B_+$-transformation.

Let us concentrate on the action of the foliation-preserving
$B_+$-transformation on the compatible generalized metric. We can
easily show that an $\cF_-$-preserving ${B_+}$-transformation is an
isometry of the fiberwise metric $g_+$ on $L_+$: any section $X_+ \in
\mathsf{\Gamma}(L_+)$ transforms as 
$$
X_+^{B_+}= e^{B_+}(X_+)= X_+ + {B_+}(X_+) \ ,
$$ 
so that
$$
g_+\big(X^{B_+}_+, Y^{B_+}_+\big)=g_+(X_+,Y_+)\ .
$$
Thus the eigenbundle $L_+^{B_+}$ of $K^{B_+}$, whose sections are of
the form  $X^{B_+}_+$, inherits the fiberwise Riemannian metric $g_+,$
and the structure of $(M, g_+, \cF_-)$ as a Riemannian foliation is
preserved. We can also show that the only effect of the ${B_+}$-transformation on the compatible generalized metric $\cH$ is to introduce a new fiberwise metric on $T\cF_-.$ In other words, the fiberwise metric $g_-$ on $T\cF_-$ does not have $L_+^{B_+}$ as its kernel:
$$
g_-\big(X_+^{B_+}, Y_+^{B_+}\big)= g_-\big({B_+}(X_+), {B_+}(Y_+)\big) \ , 
$$
and so the change of sub-bundle $L_+^{B_+}$ is associated with a change of fiberwise metric on $T\cF_-,$ now given by $g_-^{B_+}$ such that ${\rm Ker}(g_-^{B_+})= L_+^{B_+}.$
The ${B_+}$-transformed compatible generalized metric thus reads
$$
\cH^{B_+}=g_+ + g_-^{B_+}
$$
in the polarization $TM= T\cF_- \oplus L_+^{B_+}$ defined by $K^{B_+},$ i.e. by the splitting $s^{B_+}$ of \eqref{shortfoli} which is now orthogonal to $T\cF_-$ with respect to $\cH^{B_+}.$

Since the ${B_+}$-transformation preserves the Riemannian foliation and the 2-form $b_+$ is basic, we can still obtain a well-defined Riemannian submersion from $M$ to the leaf space $\cQ$ given by
$$
{\mathit\Pi}^{B_+} : (M, \cH^{B_+}, \omega^{B_+}) \longrightarrow (\cQ, g_+, \omega\rvert_{L_+}+2\,b_+) \ .
$$
Then the sigma-models $S(g_+, \omega \rvert_{L_+})$ and $S(g_+, \omega\rvert_{L_+}+2\,b_+),$ each defined with the same leaf space $\cQ$ as target space, are T-dual. They have the same dynamical content, as they are defined by the same metric $g_+$, whereas the topological term changes. In this sense they are {weakly} T-dual to each other: Classically they have the same local degrees of freedom and differ only in their global structure (which can lead to differences in the quantum theory). It follows that weakly T-dual sigma-models are classified by the cohomology of basic 2-forms. 

On the other hand, a strongly T-dual sigma-model of $S(g_+, \omega\rvert_{L_+})$ may be thought of as induced by a maximally isotropic splitting 
$$
s^\vartheta: N \cF^\vartheta_- \longrightarrow TM
$$ 
with respect to $\eta$ of the short exact sequence
$$
0 \longrightarrow T\cF^\vartheta_- \longrightarrow TM \longrightarrow  N\cF^\vartheta_- \longrightarrow 0 
$$
for $\vartheta\in{\sf O}(d,d)(M)$, which corresponds to an almost
para-Hermitian structure $(K^\vartheta, \eta, \omega^\vartheta)$ on
$M$. The compatible generalized metric $\cH^\vartheta$ decomposes as 
$$
\cH^\vartheta= g^\vartheta_++ g^\vartheta_- \ , 
$$
where $g^\vartheta_+$ is a fiberwise metric on $L^\vartheta_+$ such that $(M, g^\vartheta_+, \cF^\vartheta_-)$ is a Riemannian foliation. 

In the remainder of this paper we will illustrate the constructions of this section through several explicit examples. 

\section{Born Sigma-Models for Phase Spaces}\label{sigmacot}
A large class of examples which are well-suited to explicit
realization of the formalism of Section~\ref{BSM} come from Born
structures on fiber bundles. These examples naturally supply
Riemannian submersions from the total space $M$ to the base space
$\cQ$, which can be regarded as the smooth quotient $M / \cF$ of the
total space with respect to the foliation of the bundle given by the
fibers $\cF.$ We will also consider a different quotient, which in
Section~\ref{sec:BornDTT} will be used to give a geometric interpretation of
the prototypical T-folds in this framework. Our working example will be the cotangent bundle $T^*\cQ$, which can be thought of as the phase space for a closed string with target space $\cQ$. This nicely ties our worldsheet formalism from Section~\ref{BSM} with the old sigma-models for duality-symmetric string theory based on phase space targets~\cite{Tseytlin:1990nb} and with more recent discussions of phase spaces as instances of doubled geometry~\cite{Freidel2014,Deser2015,Aschieri2015,Lee:2015xga,Blumenhagen:2016vpb,
Samann2018,Heller2016,Aschieri2017,Jonke2018,Osten:2019ayq}.

\medskip

\subsection{Para-K\"ahler Structure on the Cotangent Bundle} \label{parakcot} ~\\[5pt]
We first recall how to define a Born structure on the cotangent bundle of any smooth manifold by specializing the general discussion of Examples~\ref{fiberexample} and~\ref{fiberborn}. Let $\cQ$ be a smooth manifold with $\mathrm{dim}(\cQ)=d.$ Its cotangent bundle is the vector bundle
\be \label{exseq}
\pi: T^*\cQ \longrightarrow \cQ \ ,
\ee
where $\pi$ is the canonical projection and the typical fiber $\cF$ is diffeomorphic to $\IR^d.$ 
Since $\pi$ is a surjective submersion, there is a short exact sequence of vector bundles
\be \label{cotshort}
0 \longrightarrow L_{\tt v}(T^*\cQ)\xlongrightarrow{i} T(T^*\cQ) \xlongrightarrow{\hat{\pi}} \pi^*(T \cQ)\longrightarrow 0 
\ee
where $L_{\tt v}(T^*\cQ)= \mathrm{Ker}(\pi_*)=\mathrm{Ker}(\hat{\pi})$ is the vertical sub-bundle defined by the differential of the projection and $\pi^*(T \cQ)$ is the pullback bundle of $T\cQ$ over $T^*\cQ$ along $\pi.$ The map $i: L_{\tt v}(T^*\cQ) \rightarrow T(T^*\cQ)$ is the canonical inclusion of the vertical vector sub-bundle into $T(T^*\cQ).$ The vertical sub-bundle is integrable and can be regarded as a tangent bundle $L_{\tt v}(T^*\cQ) \simeq T \IR^d.$ The bundle map $\hat{\pi}: T(T^*\cQ) \rightarrow \pi^*(T\cQ)$ is surjective and covers $\pi,$ since there is also a surjective submersion of $\pi^*(T\cQ)$ onto $\cQ.$

We define an almost para-complex structure on $T^*\cQ$ by choosing a splitting of the short exact sequence \eqref{cotshort}, i.e. we fix a right inverse $C$ of $\hat{\pi}$:
\be \label{spl}
\begin{tikzcd} 
T(T^*\cQ) \arrow [r,"\hat{\pi}"']
& \pi^*(T\cQ) \arrow[l,bend right,"C"'] 
\end{tikzcd}
\ee
so that $T(T^*\cQ)=\mathrm{Im}(C) \oplus \mathrm{Ker}(\hat{\pi}).$ The
sub-bundle $\mathrm{Im}(C)=L^C_{\tt h}(T^*\cQ)$ is one of the
possible choices of horizontal distribution: the map $C$ is usually
understood as a horizontal lift of sections of $T\cQ$ to $T(T^*\cQ).$
This defines an Ehresmann connection on $T^*\cQ,$ with 
$$
T(T^*\cQ)= L^C_{\tt h}(T^*\cQ) \oplus L_{\tt v}(T^*\cQ) \ .
$$
The horizontal lift $C$ thus defines a vector sub-bundle $L^C_{\tt
  h}(T^*\cQ)$ of $T(T^*\cQ),$ which is generally not involutive.   
A splitting \eqref{spl} of $T(T^*\cQ)$ is equivalent to a choice of an almost para-complex structure on $T^*\cQ$: we define the almost para-complex structure $K_C \in {\sf Aut}_\unit(T(T^*\cQ))$ by 
$$
K_C\big|_{L^C_{\tt h}(T^*\cQ)}=\unit_{L^C_{\tt h}(T^*\cQ)}\qquad
\mbox{and} \qquad K_C \big|_{L_{\tt v}(T^*\cQ)}= -\unit_{L_{\tt
    v}(T^*\cQ)} \ .
$$

The phase space $T^*\cQ$ is endowed with a canonical symplectic 2-form
$\omega_0$ with respect to which $L_{\tt v}(T^* \cQ)$ is maximally
isotropic. We may then ask whether the almost para-complex structure
$K_C$ and the canonical symplectic 2-form $\omega_0$ satisfy a
compatibility condition such that they induce a split signature metric on
$T^*\cQ.$ In other words, we may ask for conditions ensuring existence
of a split signature metric compatible with $K_C$, in the sense of almost para-Hermitian structures, such that $\omega_0$ is the corresponding fundamental 2-form. Recall that the requisite compatibility condition is
\be \label{omegacomp}
\omega_0\big(K_C(X),Y\big)+ \omega_0\big(X, K_C(Y)\big)=0\ , 
\ee
for all $X, Y \in \mathsf{\Gamma}(T(T^*\cQ)).$ It is straightforward to check that \eqref{omegacomp} holds if and only if the chosen splitting is isotropic with respect to $\omega_0$. Since $L_{\tt v}(T^*\cQ)$ is maximally isotropic because it is in the kernel of the tautological 1-form on $T^*\cQ,$ this means we have to choose $C$ such that $L^C_{\tt h}(T^*\cQ)$ is isotropic with respect to $\omega_0.$ We denote the split signature metric given by such a choice by $\eta_C$:
$$
\eta_C(X,Y)= \omega_0\big(K_C(X),Y\big)\ , 
$$ 
for all $X , Y \in \mathsf{\Gamma}(T(T^*\cQ)).$ 
We then obtain an almost para-K\"ahler structure on $T^*\cQ$ given by $(K_C, \eta_C, \omega_0).$

\begin{remark}
This construction generalizes to any fiber bundle $\pi: M \rightarrow \cQ,$ with ${\rm dim}(M)= 2 \, {\rm dim}(\cQ),$ which is endowed with a Liouville 1-form~\cite{Alekseevsky1994}. For this, consider again the short exact sequence of vector bundles \eqref{bunshort} from Example~\ref{fiberexample}. A \emph{Liouville 1-form} $\alpha \in \mathsf{\Gamma}(T^*M)$ is a horizontal 1-form on $M,$ i.e. $\iota_{X_{\tt v}}\alpha=0,$ for all $X_{\tt v} \in \mathsf{\Gamma}(L_{\tt v}(M)).$ Then the foliation given by the fibers of $M$ is Lagrangian with respect to the symplectic 2-form $\omega=\de \alpha$ 
associated with $\alpha,$ since ${\rm Ker}(\alpha)=L_{\tt v}(M).$
Any choice of an isotropic splitting $s: \pi^*(T\cQ) \rightarrow TM$ of \eqref{bunshort} with respect to $\omega$ defines an almost para-K\"ahler structure on $M.$ These structures are all diffeomorphic to those defined on the cotangent bundle of the base manifold $\cQ$ by considering the tautological 1-form as a Liouville 1-form. 
\end{remark}

\begin{example}
Let $\cQ$ be the configuration space of a dynamical system and
consider the tangent bundle $\pi: T\cQ \rightarrow \cQ$ as a carrier
space of the dynamics. The equations of motion of the system is thus defined by a
second order vector field ${\mathit\Sigma} \in
\mathsf{\Gamma}(T(T\cQ))$. A regular Lagrangian $\ccL \in
C^{\infty}(T\cQ)$ for the dynamical system, when it exists,
defines a  Liouville 1-form in the following way.

The \emph{vertical lift} $X_{\tt v}\in \mathsf{\Gamma}(L_{\tt
  v}(T\cQ))$ of a vector field $X \in \mathsf{\Gamma}(T\cQ)$ to
$\mathsf{\Gamma}(T(T\cQ))$ is the infinitesimal generator of
translations along the fibers, i.e. of the one-parameter group of
diffeomorphisms defined by $\mathbb{R} \ni t \mapsto (q, t\, X
\rvert_q) \in T_q\cQ$; this induces a map $\bar{{\tt v}}:  T\cQ \rightarrow L_{\tt v}(T\cQ).$ 
The \emph{vertical endomorphism} $\upsilon \in \mathsf{End}(T(T\cQ))$ is the bundle map given by the composition of the vertical lift and the tangent projection: $\upsilon= \bar{{\tt v}}\, \circ \, \pi_*,$ or equivalently the endomorphism of $T(T\cQ)$ which makes the diagram
\begin{center}
\begin{tikzcd}
T(T\cQ) \arrow{r}{\pi_*}  \arrow{rd}{\upsilon} 
  & T\cQ \arrow{d}{\bar{{\tt v}}} \\
    & T(T\cQ)
\end{tikzcd}
\end{center}
commute.
Then the Liouville 1-form of the Lagrangian dynamics is given
by\footnote{In this context, $\alpha_\ccL$ is usually called the
  `Cartan 1-form'.} $\alpha_\ccL= \upsilon(\de \ccL)$, since ${\rm
  Ker}(\alpha_\ccL)= {\rm Ker}(\upsilon)= L_{\tt v}(T \cQ).$ This gives
the Lagrangian symplectic 2-form $\omega_\ccL= \de \alpha_\ccL.$

The dynamical vector field ${\mathit\Sigma} \in \mathsf{\Gamma}(T(T\cQ))$
induces an isotropic splitting, with respect to $\omega_\ccL$, of the
canonical short exact sequence \eqref{bunshort} of vector bundles from
Example~\ref{fiberexample} with $M=T\cQ.$ Hence $(T\cQ , \ccL)$ admits
a para-K\"ahler structure. One can also show that the Lagrangian
$\ccL$ induces a compatible generalized metric on $T\cQ$~\cite{SzMar}. 
The symplectomorphism induced by a regular Lagrangian, given by the
Legendre transform from $T\cQ$ to $T^*\cQ,$ is a vector bundle
isomorphism covering the identity which induces a map from the dynamical para-K\"ahler structure to an isotropic splitting of \eqref{cotshort}; see~\cite{SzMar, Ferrario1990} for further details.
\end{example}

As discussed in~\cite{SzMar}, in a local description we may describe
the horizontal lift of a holonomic frame $\big\{
\frac{\partial}{\partial q^i}\big\}$ of $T\cQ,$ where $q^i$ are local
coordinates on $\cQ$ (pulled back from $\cQ$ to $T^*\cQ$ by the
projection $\pi$), by the
vector fields
$$
{\tt h}_i=C\Big(\frac\partial{\partial q^i}\Big)=
\frac{\partial}{\partial q^i} + C_{ij}\,\frac{\partial}{\partial p_j}
\ \in \ \mathsf{\Gamma}\big(L^C_{\tt h}(T^*\cQ)\big)
$$
where $(q^i,p_i)$ are local Darboux coordinates on $T^*\cQ,$ and $C_{ij}$ are smooth functions on the chosen open subset of $T^*\cQ$ defining the Darboux chart.  
This gives a local basis of sections of the horizontal
sub-bundle. Then it is straightforward to see that $L^C_{\tt
  h}(T^*\cQ)$ is maximally isotropic with respect to 
$$
\omega_0=\de
q^i\wedge\de p_i
$$ 
if and
only if $C_{ij}=C_{ji}$ is symmetric.

To define a Born sigma-model for $T^*\cQ$ we need to define a
compatible generalized metric $\cH_C$ for the almost para-K\"ahler
structure $(K_C, \eta_C, \omega_0).$ Following
Example~\ref{fiberborn}, we endow the base manifold $\cQ$ with a
Riemannian metric $g$. Then a fiberwise metric $g_+$ on $L^C_{\tt
  h}(T^*\cQ)$ is given by the pullback $g_+= \pi^* g$:
\be \label{cotglift}
g_+\big(X_{\tt h}, Y_{\tt h}\big)= g(X,Y) \ ,
\ee
where $X_{\tt h}= C(X)$ and $Y_{\tt h}=C(Y),$ for all $X, Y \in
\mathsf{\Gamma}(T\cQ),$ i.e. $X_{\tt h}$ and $ Y_{\tt h}$ are arbitrary
horizontal lifts. This gives a compatible generalized metric for the
almost para-K\"ahler structure $(K_C, \eta_C, \omega_0)$ which takes the
form 
$$
\cH_C= g_++ g_- 
$$
in the splitting induced by $K_C,$ with 
$$
g_-(X_{\tt v},Y_{\tt v})= g_+^{-1}\big(\eta_C{}^\flat(X_{\tt v}),
  \eta_C{}^\flat(Y_{\tt v})\big)
$$ 
for all
$X_{\tt v},Y_{\tt v}\in \mathsf{\Gamma}(L_{\tt
  v}(T^*\cQ))$.

\medskip

\subsection{Phase Space Born Sigma-Model and its Gauging} \label{smCTQ} ~\\[5pt] 
We now have all of the ingredients needed to write down a Born sigma-model for $T^*\cQ.$
The cotangent bundle Born sigma-model $\cS(\cH_C, \omega_0)$ is given by 
\begin{equation*}
\cS_0[\phi]=\frac{1}{4}\,\int_{\Sigma}\, \bigl( (\bar g_+)_{ij} \, \de
\bar q^i \wedge \star \, \de \bar q^j + (\bar g_-)^{ij} \, 
\bar\zeta_i \wedge \star \, \bar\zeta_j \bigr)
\end{equation*}
and 
\begin{equation*}
\cS_{\rm top}[\phi]= \frac{1}{2}\,\int_{\Sigma}\, \de\bar q^i \wedge \de\bar p_i \ ,
\end{equation*}
where $\phi$ is a map from the closed string worldsheet $\Sigma$ to the
phase space $T^*\cQ$. Here we wrote the compatible generalized metric $\cH_C$ as 
$$
\cH_C= (g_+)_{ij} \, \de q^i \otimes \de q^j + (g_-)^{ij}\, \zeta_i
\otimes \zeta_j \ ,
$$
with $\de q^i$ and
$$
\zeta_i=\de p_i - C_{ij}\, \de q^j
$$
dual 1-forms to ${\tt h}_i$ and $\frac{\partial}{\partial p_i}$
respectively, and $(g_+)_{ij}=g_{ij}$; we also used $\de
q^i\wedge\zeta_i=\de q^i\wedge\de p_i$ in writing the topological term. The topological term is defined
by the symplectic 2-form $\omega_0=- \de \alpha,$ where $\alpha$ is
the tautological 1-form on $T^*\cQ$ (in a Darboux chart,
$\alpha=p_i\,\de q^i$). Since $\omega_0$ is exact and we
assume that $\Sigma$ is closed, the topological term
vanishes. However, even in the case when $\Sigma$ has a non-empty
boundary, since $\omega_0$ does not have a component in
$\mathsf{\Gamma}(\midwedge^2 L_{{\tt v}}(T^*\cQ)^*)$ we do not expect
any topological term to arise in the reduced sigma-model. We will keep
the topological term explicit in the following to show that
this is indeed the case.

This sigma-model can be gauged, as discussed in Section \ref{BSM}, by
considering the vertical distribution $L_{\tt v}(T^*\cQ)$ as a Lie
algebroid. In this case $L_{\tt v}(T^*\cQ)$ is the Lie algebroid
of symmetries of the Born sigma-model $\cS(\cH_C,\omega_0),$ since 
$$
\pounds_{Z_{\tt v}}\,g_+=0 \ , 
$$ 
for all $ Z_{\tt v} \in \mathsf{\Gamma}(L_{\tt v}(T^*\cQ)),$ because
$\pounds_{Z_{\tt v}}X_{\tt h} \in \mathsf{\Gamma}(L_{\tt h}^C
(T^*\cQ))$ with $X_{\tt h}$ the horizontal lift of a vector field
$X\in \mathsf{\Gamma}(T\cQ),$ and because \eqref{cotglift} holds. The
gauging is also possible since $\pounds_{Z_{\tt v}}\, \omega_0$ has vanishing component in $\mathsf{\Gamma}(\midwedge^2 L_{\tt h}^C(T^*\cQ)^*).$
We introduce the connection 1-form $ A$ on $T^*\cQ$ obtained from the
Lie algebroid of generalized isometries of $\cH_C$ and $\omega_0.$
As discussed in Section \ref{BSM}, we define the covariant derivatives
$$
\De^A \bar q^i= \de\bar q^i \qquad \mbox{and} \qquad \De^A\bar p_i=
\de\bar p_i - \bar A_i\ , 
$$ 
where we covariantize only the pullback of the differential of the
leaf coordinates $p_i.$ Here we work with a Darboux chart $( q^i, p_i)$, so that $(p_i)$ are coordinates adapted to the leaves of $T^*\cQ,$ as discussed in \cite{Kotov2014, Deser2017}.

The action functional $\cS[\phi,A]$ of the resulting gauged Born sigma-model has two terms and is
given by
\begin{align*}
\cS[\phi,A]&=\frac{1}{4}\,\int_{\Sigma}\, \Big(\big((\bar g_+)_{ij}+\bar
                                              C_{im}\,(\bar
                                              g_-)^{mn}\,\bar
                                              C_{jn}\big)\,\de\bar q^i
                                              \wedge \star\, \de\bar
                                              q^j \\ & \hspace{2cm} + (\bar g_-)^{ij}\, \De^A\bar p_i \wedge \star\, \De^A\bar p_j -2\,(\bar g_-)^{ik}\,\bar C_{jk}\, \De^A\bar p_i \wedge \star\, \de\bar q^j \Big) \\ & \quad \, + \frac{1}{2}\,\int_{\Sigma}\, \de\bar q^i \wedge \De^A\bar p_i \ .
\end{align*}
Then the self-duality constraints $\delta \cS[\phi,A]/\delta A_i=0$ are given by
$$
\star\, (\bar g_-)^{ij}\,\De^A\bar p_j - (\bar g_-)^{ij}\,\bar C_{kj}\,
\star\, \de\bar q^k - \de\bar q^i=0 \ .
$$
By imposing this constraint we obtain a sigma-model for the quotient
$T^*\cQ / \IR^d \simeq \cQ$ with background given by the Riemannian
metric $g$ in which the holonomic basis of $T\cQ$ is orthonormal, i.e. the reduced sigma-model $S(g, b)$ is given by
$$
S[\phi]= \frac12\,\int_{\Sigma}\, (\bar g_+)_{ij}\,\de\bar q^i \wedge
\star\, \de\bar q^j \ ,
$$
where here $\phi$ is the harmonic map with its image projected to the leaf space $\cQ$ and $(g_+)_{ij}=g_{ij}$.
Thus the sigma-model $S(g, b)$ for $\cQ$ is characterized
by the metric $g=g_{ij}\, \de q^i \otimes \de q^j$ on $\cQ,$ which is
not surprising since the compatible generalized metric $\cH_C$ is
defined as a horizontal lift of $g$ to $T^*\cQ$. However, what was not
obvious from the start is that the reduced sigma-model has vanishing
Kalb-Ramond field $b=0,$ i.e. even starting with a topological term,
in the almost para-K\"ahler case the reduced sigma-models involve a
background with vanishing $B$-field on $\cQ$. This also means that the Riemannian submersion $(T^* \cQ, \cH_C,
\omega_0)\rightarrow (\cQ, g, b=0)$ is simply given by the bundle projection $\pi: T^* \cQ \rightarrow \cQ,$ as expected. 

The properties of the class of examples described in this section
extend more generally to arbitrary choices of compatible generalized
metric, given by a fiberwise metric $g_+ \in
\mathsf{\Gamma}(\midodot^2 L_{\tt h}^C(T^*\cQ)^*)$ such that 
$\pounds_{Z_{\tt v}}\, g_+=0,$ for all
$Z_{\tt v} \in \mathsf{\Gamma}(L_{\tt v}(T^*\cQ)).$
Although this simple class of examples gives the obvious result, it
aids in understanding how to deal with the gaugings in
general. Furthermore, we can still use it to understand how to obtain
a background for the reduced sigma-model in this case with a non-trivial $B$-field. 

\medskip

\subsection{Weakly T-Dual Sigma-Model with $B$-Field} ~\\[5pt]
In order to describe T-dual sigma-models for the background $(\cQ, g, b=0),$ we give the first simple example of the construction discussed in Section~\ref{naive} by considering a $B_+$-transformation of the Born structure 
introduced in Section~\ref{parakcot}, which is a pushforward of $(K_C, \eta_C, \cH_C)$ by $e^{B_+}.$ For this, we recall that $e^{B_+} \in \sfO(d,d)(T^*\cQ)$ is generated by a skew map $B_+:L^C_{\tt h}(T^*\cQ) \rightarrow L_{\tt v}(T^*\cQ)$ such that the new almost para-complex structure $K_C^{B_+}$ has $L_{\tt v}(T^*\cQ)$ and $L^C_{B_+}(T^*\cQ)$ as its eigenbundles, where $L^C_{B_+}(T^*\cQ)$ is the sub-bundle obtained from $L^C_{\tt h}(T^*\cQ)$ after the $B_+$-transformation as discussed in Section~\ref{sec:Btransformations}:
$$
L_{B_+}^C(T^*\cQ)= \big\{ X_{\tt h}+ B_+(X_{\tt h}) \ \big| \ X_{\tt h} \in L^C_{\tt h}(T^*\cQ) \big\} \ .
$$
Since a $B_+$-transformation preserves the vertical sub-bundle, we may think of the sub-bundle $L^C_{B_+}(T^*\cQ)$ as the horizontal distribution defining a new splitting \eqref{spl} of the short exact sequence of vector bundles \eqref{cotshort}: 
$$
T(T^*\cQ)= L^C_{B_+}(T^*\cQ)\oplus L_{\tt v}(T^*\cQ) \ , 
$$
which represents a different Ehresmann connection 
$$
C_{B_+}:\pi^* (T\cQ)\longrightarrow T(T^*\cQ) \ .
$$
Thus given an isotropic splitting of the short exact sequence \eqref{cotshort} with respect to $\omega_0$, we can obtain non-isotropic splittings by acting with $B_+$-transformations, which preserve the vertical distribution. The splittings obtained in this way are maximally isotropic with respect to the split signature metric $\eta_C.$
The associated almost para-Hermitian structure $(K^{B_+}_C, \eta_C, \omega^{B_+}_0)$ is obtained as a $B_+$-transformation of the almost para-K\"ahler structure $(K_C, \eta_C, \omega_0),$
whose fundamental 2-form is no longer the canonical symplectic 2-form on~$T^*\cQ$.

It is important to stress that $L^C_{B_+}(T^*\cQ)$ is not isotropic with respect to $\omega_0,$ therefore the splitting $C_{B_+}$ does not induce an almost para-complex structure which is compatible with $\omega_0.$ In fact, the fundamental 2-form of the $B_+$-transformed structure is not symplectic in general.  
Recall that, since $e^{B_+} \in \sfO(d,d)(T^*\cQ),$ the $B_+$-transformation preserves the metric $\eta_C,$ but not the fundamental 2-form which transforms to 
$$
\omega^{B_+}_0= \omega_0 + 2\, b_+ \ ,
$$ 
where $b_+$ is a horizontal 2-form,\footnote{$b_+$ is the pullback of a 2-form in $\Omega^2(\cQ)$ since the map $B_+$ is constant along the fibers.} i.e. $\iota_{X_{\tt v}}\,b_+=0,$ for all $X_{\tt v}\in \mathsf{\Gamma}(L_{\tt v}(T^*\cQ))$; hence $\omega^{B_+}_0$ is closed if and only if $b_+$ is closed. Thus $(T^*\cQ, K_C^{B_+}, \eta_C)$ is generally only an almost para-Hermitian manifold. 

In a local description, we may regard $L^C_{B_+}(T^*\cQ)$ as the sub-bundle locally spanned by sections 
$$
{\tt h}^{B_+}_i=C_{B_+}\Big(\frac\partial{\partial q^i}\Big)=\frac{\partial}{\partial q^i} + \big(C_{ij}+(b_+)_{ij}\big)\,\frac{\partial}{\partial p_j} \ ,
$$
where $C_{ij}+(b_+)_{ij}$ is not symmetric.
The map $B_+$ can be regarded as a tensor $\underline{B_+}\in \mathsf{\Gamma}(L^C_{\tt h}(T^*\cQ)^* \otimes L_{\tt v}(T^*\cQ))$ such that 
$$
\underline{B_+}=(b_+)_{ij}\, \de q^i \otimes \frac{\partial}{\partial p_j} \ ,
$$ 
where $\{\de q^i\}$ is the local coframe that spans $\mathsf{\Gamma}(L^C_{\tt h}(T^*\cQ)^*),$ and so is dual to $\big\{C(\frac\partial{\partial q^i})\big\}.$ Thus the horizontal 2-form $b_+$ reads 
$$
b_+= \tfrac12\,(b_+)_{ij}\, \de q^i \wedge \de q^j \ .
$$
In summary, we consider a $B_+$-transformation, which preserves the vertical sub-bundle so that it can still be regarded as the Lie algebroid of generalized isometries of the new structure. 

A $B_+$-transformation of the compatible generalized metric $\cH_C$ gives rise to a new compatible generalized metric $\cH_C^{B_+}$ as in \eqref{gemeju}. In the setting of Section~\ref{parakcot}, the horizontal lift $g_+$ of a Riemannian metric $g$ on $\cQ$ to $T^*\cQ$ is mapped into another horizontal lift $g_+^{B_+}$ of the same metric $g$ given by 
$$
g_+^{B_+}\big(C_{B_+}(X), C_{B_+}(Y)\big)= g(X,Y)\ , 
$$
for all $X, Y \in \mathsf{\Gamma}(T\cQ).$ 
In this case we can write  
$$
\cH_C^{B_+}=\big(g_+^{B_+}\big)_{ij} \, \de q^i \otimes \de q^j + \big(g_-^{B_+}\big)^{ij}\, \zeta^{B_+}_i \otimes \zeta^{B_+}_j \, ,  $$
where $(g_+^{B_+})_{ij}=g_{ij}$ and $g_-^{B_+}= \eta_C \,\circ \, (g_+^{B_+})^{-1}\, \circ \, \eta_C,$ while
$$
\zeta^{B_+}_i = \de p_i -\big(C_{ij}-(b_+)_{ij}\big)\,\de q^j
$$ 
is the local coframe spanning $\mathsf{\Gamma}(L_{\tt v}(T^*\cQ)^*),$ i.e. it is dual to $\frac\partial{\partial p_i}$ in the $B_+$-transformed polarization.

We thus obtain a new Born sigma-model $\cS\big(\cH_C^{B_+}, \omega^{B_+}_0\big)$ for $T^*\cQ$ given by
$$
\cS^{B_+}[\phi]=\frac{1}{4}\, \int_{\Sigma}\,\Big( \big(\bar g_+^{B_+}\big)_{ij}\, \de \bar q^i \wedge \star \, \de \bar q^j +\big(\bar g_-^{B_+}\big)^{ij}\, \bar\zeta^{B_+}_i \wedge \star \, \bar\zeta^{B_+}_j\Big) + \frac12\,\int_\Sigma\,\de \bar q^i \wedge \bar\zeta^{B_+}_i
$$
and it can still be gauged with respect to the generalized isometries generated by the vertical distribution. Following the same steps as in Section~\ref{smCTQ}, the generalized isometry condition
$$
\pounds_{Z_{\tt v}}\,g^{B_+}_+=0
$$
holds for all $Z_{\tt v} \in \mathsf{\Gamma}(L_{\tt v}(T^*\cQ))$ since $g_+^{B_+}$ is the horizontal lift of a Riemannian metric $g$ on $\cQ$, and the component $b_+$ of the fundamental 2-form $\omega^{B_+}_0$ must satisfy the condition
$$
\pounds_{Z_{\tt v}}\, b_+ =0\ ,
$$ 
for all $Z_{\tt v} \in \mathsf{\Gamma}(L_{\tt v}(T^*\cQ)),$ which
follows here since $b_+$ is the pullback of a 2-form on $\cQ.$ In the
case of a Born structure arising from a $B_+$-transformation one
obtains a global expression for this component of the fundamental
2-form. This is not always the case, since generally it has a local
characterization in terms of the local expression of a splitting and
therefore the induced 2-form on the leaf space, even when it is smooth, is not necessarily globally defined.  

The gauging is analogous to the gauging for the Born sigma-model
$\cS(\cH_C, \omega_0)$ described in Section~\ref{smCTQ}. In the
present case, the self-duality constraint $\delta \cS^{B_+}[\phi,A]/\delta A_i=0$ from gauging reads
$$
\star \, \big(\bar g_-^{B_+}\big)^{ij}\, \De^A\bar p_j - \star \, \big(\bar g_-^{B_+}\big)^{ij}\,\big(\bar C_{kj}+(\bar b_+)_{kj}\big)\, \de\bar q^k - \de\bar q^i=0 \ . 
$$
Then the sigma-model for $\cQ$ obtained by imposing these self-duality constraints is 
$$
S^{B_+}[\phi]= \frac{1}{2}\,\int_{\Sigma}\, \big(\bar g_+^{B_+}\big)_{ij}\, \de\bar q^i \wedge \star \, \de\bar q^j+\int_{\Sigma}\, \bar b_+ \ .
$$
Hence the reduced sigma-model $S(g, b_+)$ for the quotient $T^*\cQ /
\IR^d \simeq \cQ$ is defined by the same Riemannian metric $g=g_{ij}\,
\de q^i \otimes \de q^j$ as in the previous gauging, and Kalb-Ramond
field given by $b_+=\frac12\, (b_+)_{ij}\, \de q^i \wedge \de q^j.$ It follows
that the only effect of a $B_+$-transformation, which leaves unchanged
the integrable sub-bundle generating the generalized isometries, is to
give a new topological term for this class of sigma-models.
In summary, we have shown that the sigma-models $S(g,0)$ and $S(g, b_+)$ for $\cQ$ can be considered as weakly T-dual sigma-models in the sense of Section~\ref{naive}, i.e. they are related by a weak generalized T-duality transformation. This is our working example of weakly T-dual sigma-models.

\medskip

\subsection{Strongly T-Dual Sigma-Models} \label{sec:cotstrongTdual} ~\\[5pt]
An example of how a strongly T-dual sigma-model can be constructed is obtained by considering a foliation $\cF^\vartheta$ of $T^*\cQ$ of codimension ${\rm dim}(\cQ)$ such that $T\cF^\vartheta$ is maximally isotropic with respect to $\eta_C,$ for $\vartheta\in\sfO(d,d)(T^*\cQ)$. We then obtain the short exact sequence 
\be \label{cotfoli}
0 \longrightarrow T\cF^\vartheta \longrightarrow T(T^*\cQ) \longrightarrow N\cF^\vartheta \longrightarrow 0
\ee
and an isotropic splitting 
$$
s^\vartheta: N\cF^\vartheta \longrightarrow T(T^*\cQ)
$$ 
with respect to $\eta_C$ defines an almost para-Hermitian structure with different fundamental 2-form $\omega_0^\vartheta$ which is not the canonical symplectic 2-form. Whenever the compatible generalized metric $\cH_C^\vartheta$ induces a Riemannian foliation, we then obtain a reduced sigma-model for the leaf space $T^*\cQ/ \cF^\vartheta$ which is strongly T-dual to the natural sigma-model constructed in Section~\ref{smCTQ}.

This discussion is a natural prelude to describing Born geometries associated with Lagrangian foliations of $T^*\cQ$ with respect to $\omega_0.$ Consider a foliation $\cF^\vartheta$ of $T^*\cQ$ such that $T\cF^\vartheta$ is maximally isotropic with respect to $\omega_0.$ 
A maximally isotropic splitting of the short exact sequence \eqref{cotfoli} with respect to $\omega_0$ gives an almost para-K\"ahler structure $(K^\vartheta, \eta^\vartheta, \omega_0).$ Such a splitting has split signature metric $\eta^\vartheta$ which is in general different from $\eta_C.$ Therefore this structure cannot be obtained via an $\sfO(d,d)(T^*\cQ)$-transformation of the canonical almost para-K\"ahler structure discussed in Section~\ref{parakcot}. Moreover, any compatible generalized metric $\cH^\vartheta$ induced by a fiberwise metric $g^\vartheta_+$ giving rise to a Riemannian foliation defines a Born sigma-model which is not T-dual to the canonical Born sigma-model of Section~\ref{smCTQ}. In this sense, the distinct T-duality orbits of phase space Born sigma-models, giving rise to distinct T-duality chains, are classified by Lagrangian foliations and their allowed Riemannian foliation structures.

\medskip

\subsection{Worldsheet Description of Essentially Doubled Backgrounds} \label{Rfluxcot} ~\\[5pt]
We shall now discuss the case in which the gauging is not possible, i.e. it does not lead to any submersion from $T^*\cQ$ to any orbit space. For a phase space Born sigma-model, this happens whenever we apply a $B_-$-transformation. For this, let $(K_C, \eta_C)$ be an almost para-K\"ahler structure on $T^*\cQ$ which is compatible with the canonical symplectic 2-form $\omega_0,$ obtained as discussed in Section~\ref{parakcot}. Consider the automorphism $e^{B_-} \in {\sf O}(d,d)(T^*\cQ)$ covering the identity that is generated by a skew map $B_-: L_{\tt v}(T^*\cQ) \rightarrow L^C_{\tt h}(T^*\cQ),$ which can be regarded as a tensor $\underline{B_-} \in \mathsf{\Gamma}(L_{\tt v}(T^*\cQ)^* \otimes L^C_{\tt h}(T^*\cQ)),$ as discussed in Section~\ref{sec:Btransformations}. As a section of this tensor bundle, in a local coordinate chart $\underline{B_-}$ takes the form
$$
\underline{B_-} = (\beta_-)^{ij}\, \zeta_i \otimes C\Big(\frac\partial{\partial q^j}\Big) \ ,
$$ 
where $\{ \zeta_i \}$ is a local basis of vertical 1-forms and $\big\{ C(\frac\partial{\partial q^i}) \big\}$ is a local basis of horizontal vector fields, as described in Sections~\ref{parakcot} and~\ref{smCTQ}. The $B_-$-transformation determines a horizontal bivector $\beta_-$ which in local coordinate form reads
$$
\beta_- = \frac12\,(\beta_-)^{ij}\, C\Big(\frac\partial{\partial q^i}\Big)\wedge C\Big(\frac\partial{\partial q^j}\Big) \ .
$$

We then obtain a new almost para-Hermitian structure by pushing forward $(K_C, \eta_C)$ by $e^{B_-}.$ The new almost para-complex structure is given by 
$$
K_C^{B_-}= e^{B_-}\circ K_C\circ e^{-B_-}= K_C + 2\,B_- \ ,
$$ 
while the split signature metric $\eta_C$ is preserved. The
fundamental 2-form is no longer closed in general and becomes the
2-form $\omega_0^{B_-}$, where
\begin{align}\label{eq:omega0B-}
\omega_0^{B_-}(X,Y) = \omega_0(X,Y)+2\, \beta_-\big(\eta_C^\flat(X), \eta_C^\flat(Y)\big)
\end{align}
for all $X,Y\in\mathsf{\Gamma}(T(T^*\cQ))$.
The eigenbundles of $K_C^{B_-}$ are given by $L^C_{\tt h}(T^*\cQ),$ i.e. the horizontal sub-bundle remains unchanged, and $L_{B_-}(T^*\cQ)$ as a deformation of the vertical distribution, which is no longer a vertical sub-bundle. In fact, we may regard (local) sections of $L_{B_-}(T^*\cQ)$ as spanned by vector fields of the form 
$$
P^i = \frac{\partial}{\partial p_i}+ (\beta_-)^{ij}\,C_{jk}\, \frac{\partial}{\partial p_k} + (\beta_-)^{ij}\,\frac{\partial}{\partial q^j}\ ,
$$ 
obtained from the local span of the vertical distribution $L_{\tt
  v}(T^*\cQ)$ via $e^{B_-}.$ The transformation $e^{B_-}$ induces a
new splitting of $T(T^*\cQ),$ but it is no longer given by a choice of
an Ehresmann connection on $T^*\cQ,$ i.e. by a splitting of the short
exact sequence \eqref{cotshort}. 

Despite the fact that the eigenbundle $L^C_{\tt h}(T^*\cQ)$ remains
the same, the new complementary sub-bundle (which is maximally isotropic with respect to $\eta_C$) is no longer vertical. This can be easily seen by considering the pullback $\pi^* f \in C^{\infty}(T^*\cQ)$ of any function $f \in C^\infty(\cQ),$ and noticing that  
$$
\pounds_X\,\pi^*f \neq 0\ ,
$$ 
in general for $ X \in \mathsf{\Gamma}(L_{B_-}(T^*\cQ)).$ Therefore
there is still an eigenbundle decomposition
$$
T(T^*\cQ)=L^C_{\tt h}(T^*\cQ)\oplus L_{B_-}(T^*\cQ) \ ,
$$ 
but this splitting does not arise from the fiber bundle structure of
$T^*\cQ$; in other words, the transformation $e^{B_-}$ does not map a
horizontal distribution into another horizontal distribution, so it
does not preserve the vertical distribution. Hence in this case it no
longer makes sense to distinguish the eigenbundles of $K^{B_-}_C$ as
vertical and horizontal. Furthermore, neither of the eigenbundles is
integrable in general, so $T^*\cQ$ does not generally admit any foliation associated with the almost para-Hermitian structure $(K_C^{B_-}, \eta_C).$ This is the reason for refering to this polarization as an `essentially doubled polarization': It exhibits an unavoidable obstruction to obtaining a reduced sigma-model associated with the Born sigma-model in this polarization. 

To describe the Born sigma-model associated with this polarization, we consider the generalized metric $\cH_C^{B_-}$ which is compatible with the pullback almost para-Hermitian structure $(K_C^{B_-}, \eta_C)$ obtained via $e^{B_-}$ from the generalized metric $\cH_C.$ We then obtain 
$$
\cH_C^{B_-}= \big(g_+^{B_-}\big)_{ij}\, \theta^i \otimes \theta^j + \big(g_-^{B_-}\big)^{ij}\,\lambda_i \otimes \lambda_j \ ,
$$
where 
$$
\lambda_i = \de p_i - C_{ik}\, \de q^k
$$ 
is the coframe dual to $\{P^i\}$ and 
$$
\theta^i=(\beta_-)^{ij}\, \de p_j +\big(\delta^i{}_k - (\beta_-)^{ij}\,C_{jk}\big)\, \de q^k
$$ 
is the dual coframe to $\big\{C(\frac\partial{\partial q^i})\big\}$ in the splitting given by $K_C^{B_-}.$ In this coframe, the fundamental 2-form \eqref{eq:omega0B-} can be written as 
$$
\omega_0^{B_-}=  \theta^i \wedge \lambda_i \ .
$$
We now have all the data needed to write down the Born sigma-model action functional in the essentially doubled polarization:
\be \label{rfluxsigma}
\cS^{B_-}[\phi]=\frac{1}{4}\,\int_{\Sigma}\,\Big(\big(\bar
g_+^{B_-}\big)_{ij}\, \bar\theta^i \wedge \star \, \bar\theta^j +
\big(\bar g_-^{B_-}\big)^{ij}\, \bar\lambda_i \wedge \star \,
\bar\lambda_j\Big) + \frac{1}{2}\,\int_{\Sigma}\, \bar\theta^i \wedge\bar \lambda_i \ ,
\ee
where $\phi$ is a map from the closed string worldsheet $\Sigma$ to the phase space $T^*\cQ.$ 
It is clear that we cannot obtain from the sigma-model $\cS(\cH_C^{B_-}, \omega_0^{B_-})$ any reduced sigma-model associated with the almost para-Hermitian structure $(K_C^{B_-},\eta_C)$, since this structure does not yield any foliation. 

Neverthess, let us still attempt to follow the discussion of Section \ref{BSM}, and in particular our interpretation of generalized T-duality from Section~\ref{specT}. We may try to force the gauging with respect to the $-1$-eigenbundle $L_{B_-}(T^*\cQ)$ of $K_C^{B_-},$ which cannot be regarded as a Lie algebroid. Despite the fact that the Lie derivatives $\pounds_{P^k}\, \cH^{B_-}_C$ and $\pounds_{P^k}\, \omega_0^{B_-}$ can still be cast in the form of the conditions for a generalized isometry, there are no leaf coordinates whose pullback differentials can be covariantized. Thus in order to introduce some kind of gauging, we might try to covariantize $\de\bar p_i$ by minimal coupling to a 1-form $\cA$ on $\Sigma$ which is neither 
valued in a Lie algebra nor in a Lie algebroid. In principle, $\cA$ should be valued in the vector sub-bundle $L_{B_-}(T^*\cQ)$ and might be interpreted as induced by a vector bundle morphism $\bar{\cA}: T\Sigma \rightarrow L_{B_-}(T^*\cQ)$ which covers $\phi:\Sigma\to T^*\cQ$ giving the pullback bundle $\phi^*{\rm Im}(i \, \circ \, \bar{\cA})$ on $\Sigma,$ a vector sub-bundle of $\phi^*T(T^*\cQ),$ and an associated tensor $\underline{\bar\cA} \in \mathsf{\Gamma}(T^*\Sigma \, \otimes \, \phi^*T(T^*\cQ)).$ 
Here the choice of $p_i$ as ``gauged'' coordinates is arbitrary, since they do not have a geometric meaning in this polarization as leaf coordinates. Despite this, we introduce the ``covariant derivatives''
\be\label{rfluxsigmacot}
\De^\cA\bar q^i=  \de\bar q^i \qquad \mbox{and} \qquad \De^\cA\bar p_i = \de\bar p_i - \bar\cA_i\ ,
\ee
and write down the ``gauged'' sigma-model $\cS^{B_-}[\phi, \cA]$ in the usual way by replacing $\de\bar p_i$ and $\de\bar q^i$ with the maps in \eqref{rfluxsigmacot}.

We obtain the self-duality constraint as the equation of motion for $\cA$ by imposing $\delta\cS^{B_-}[\phi, \cA]/\delta\cA_i=0,$ which reads
\begin{align*}
&\Big(\big(\bar g_+^{B_-}\big)_{ij}\, (\bar\beta_-)^{im}\,(\bar\beta_-)^{jn}+ \big(\bar g_-^{B_-}\big)^{mn}\Big)\,\star\, \De^\cA\bar p_n \\ & \hspace{3cm} + \Big(\big(\bar g_+^{B_-}\big)_{ij}\,(\bar\beta_-)^{im}\,\big(\delta^j{}_l- (\bar\beta_-)^{jk}\,\bar C_{kl}\big)-\big(\bar g_-^{B_-}\big)^{mj}\,\bar C_{jl}\Big)\,\star\, \de\bar q^l \\
& \hspace{6cm} - (\bar\beta_-)^{mn}\,\De^\cA\bar p_n + \big(2\,(\bar\beta_-)^{mi}\,\bar C_{il} - \delta^m{}_l\big)\,\de\bar q^l=0 \ .
\end{align*}
These equations are solved formally by the non-local expression
\begin{align*}
\De^\cA \bar p_k = \bigg(\frac1{\bar g_-^{B_-}-\bar\beta_-\,\bar g_+^{B_-}\,\bar\beta_- - \bar\beta_-\,\star}\bigg)_{km} \, \Big(& \big(\bar g_+^{B_-}\big)_{ij}\,(\bar\beta_-)^{im}\,\big(\delta^j{}_l- (\bar\beta_-)^{jk}\,\bar C_{kl}\big) \, \de\bar q^l \\ & - \big(\bar g_-^{B_-}\big)^{mj}\,\bar C_{jl} \, \de\bar q^l+\big(2\,(\bar\beta_-)^{mi}\,\bar C_{il} - \delta^m{}_l\big)\,\star\, \de\bar q^l \Big)
\end{align*}
and substitution into the gauged extension of \eqref{rfluxsigma} gives an action functional with the $\de\bar p_n$ dependence removed. However,
even in the simplest instances where $\beta_-$ and $C$ guarantee that the sub-bundle $L_{B_-}(T^*\cQ)$ is non-integrable, the ``reduced'' local action functional still involves both sets of coordinates $q^i$ and $p_i.$ In other words, the worldsheet formulation for an essentially doubled polarization does not permit the writing of any reduced sigma-model, not even in a local form, which has dependence on only half of the coordinates. In this sense the Born sigma-model itself is needed to describe string theory on the essentially doubled background.

\section{Born Sigma-Models for Doubled Groups} \label{DoubledGroups}
Another broad class of examples of Born geometries come from Lie
groups which can be endowed with an almost para-Hermitian structure,
and their discrete quotients. In particular, cotangent bundles of Lie groups
(and their discrete quotients) furnish natural examples of doubled groups which
can be nicely combined with the phase space formalism of
Section~\ref{sigmacot}. Worldsheet theories for these types of doubled
geometries are discussed
in~\cite{Hull:2007jy,DallAgata:2008ycz,Hull2009,ReidEdwards2009,Avramis:2009xi,
  edwards:nonpub,
  Schulz:2011ye,Klimcik:2015gba,Lust:2018jsx,mpv,
  Hassler:2019wvn,Chaemjumrus:2019fsw}. In particular, doubled groups
provide concrete examples where both the exact Courant algebroid and
doubled geometry descriptions of the string background are understood,
and the connections between them were described
by~\cite{edwards:nonpub} in a similar spirit to the framework of the
present paper. These doubled sigma-models are defined using the
natural left-invariant metric and 3-form on the group manifold, so that
quotients by left-acting discrete subgroups of the doubled group can be treated using the
standard isometric gauging techniques reviewed in
Section~\ref{sec:standardgauging}. Particular changes of polarization
of doubled groups are related to non-abelian T-duality and some
aspects of Poisson-Lie T-duality; double field
theory on these sorts of extended spacetimes was formulated
in~\cite{Blumenhagen:2015zma,Bosque:2015jda}, and in this context Poisson-Lie
T-duality for Drinfel'd double groups was studied in~\cite{Hassler:2017yza,Demulder:2018lmj,Sakatani:2019jgu}. In this section we shall re-examine
gauged sigma-models for doubled groups (and their discrete quotients) from the
perspective of the Lie algebroid gauging of Born sigma-models
developed in Section~\ref{BSM}, and hence provide a more intrinsic
geometric description of them. 

\medskip

\subsection{Invariant Para-Hermitian Structures on Lie Groups} \label{sec:parahermLie} ~\\[5pt]
We begin by describing the para-Hermitian geometry on a Lie group which is invariant under the action of the group on itself. We refer to such groups as `doubled groups'.
\begin{definition}
A Lie group $\sfD$ of even dimension $2d$ is a \emph{doubled group} if
it is endowed with a left-invariant almost para-Hermitian
structure\footnote{One can also define a doubled group with a
  right-invariant almost para-Hermitian structure.} $(K^{\textrm{\tiny
    L}}, \eta^{\textrm{\tiny L}}),$
so that
$$
{\tt L}_\gamma^*\eta^{\textrm{\tiny L}} = \eta^{\textrm{\tiny L}} \qquad \mbox{and} \qquad {{\tt L}_\gamma}_*\circ K^{\textrm{\tiny L}} = K^{\textrm{\tiny L}}\circ
{{\tt L}_\gamma}_*
$$ 
for all $\gamma\in\sfD$, where ${\tt L}_\gamma:\sfD\to\sfD$ is
the map induced by left multiplication of elements of $\sfD$ with $\gamma$.
\end{definition}

Let $T_M$, with $M=1,\dots,2d$, be generators of the Lie algebra $\frd={\sf
  Lie}(\sfD)$, with the brackets
\be \label{eq:sfDLiealg}
[T_M,T_N] = t_{MN}{}^P\, T_P \ .
\ee
This preserves a constant ${\sf O}(d,d)$-invariant
metric defined by $\eta^{\textrm{\tiny L}}$, and so a doubled group is
a $2d$-dimensional subgroup of ${\sf
  O}(d,d)$. The polarization defined by the left-invariant almost para-complex
structure $K^{\textrm{\tiny L}}$ splits the generators $T_M$ into two
sets $T_m$ and $\tilde T^m$, with $m=1,\dots,d$, such that the Lie
algebra \eqref{eq:sfDLiealg} takes the form
\begin{align*}
[T_m,T_n] &= f_{mn}{}^p\,T_p + H_{mnp}\,\tilde T^p \ , \\[4pt]
[\tilde T^m,T_n] &= f_{np}{}^m\,\tilde T^p - Q_n{}^{mp}\, T_p \ ,
  \\[4pt]
[\tilde T^m,\tilde T^n] &= Q_p{}^{mn}\, \tilde T^p + R^{mnp}\, T_p \ ,
\end{align*}
with constant fluxes $H_{mnp}$, $f_{mn}{}^p$, $Q_p{}^{mn}$ and
$R^{mnp}$. The Jacobi identity for the Lie brackets
\eqref{eq:sfDLiealg} implies a set of algebraic Bianchi identities for
the generalized fluxes which can be found in e.g.~\cite{SzMar}.

Corresponding to $T_M$ there is a global frame of left-invariant vector
fields $Z_M$ on $\sfD$ which trivialize the tangent bundle
$T\sfD\simeq\sfD\times\IR^{2d}$ and generate the right action of
$\sfD$ on itself; they generate the Lie algebra \eqref{eq:sfDLiealg}
with respect to the Lie bracket of vector fields. The left-invariant
Maurer-Cartan one-forms $\Theta^M$, dual to the
left-invariant vector fields $Z_M$, form a global coframe trivializing
the cotangent bundle $T^*\sfD$ which satisfy the Maurer-Cartan equations
$$
\de\Theta^M+\tfrac12\,t_{NP}{}^M\, \Theta^N\wedge\Theta^P=0 \ .
$$
The polarization selects a splitting of these bases as $Z_M=(Z_m,\tilde Z^m)$
and $\Theta^M=(\Theta^m,\tilde\Theta_m)$. The left-invariant almost
para-Hermitian structure $(K^{\textrm{\tiny L}}, \eta^{\textrm{\tiny
    L}})$ can then be expressed in terms of this global frame and
coframe as
$$
\underline{K^{\textrm{\tiny L}}} = \Theta^m\otimes Z_m-\tilde\Theta_m\otimes\tilde
Z^n \qquad \mbox{and} \qquad \eta^{\textrm{\tiny L}} =
\Theta^m\otimes\tilde\Theta_m+\tilde\Theta_m\otimes\Theta^m \ ,
$$
and the corresponding fundamental 2-form is
\be \label{eq:2formdoubledgroup}
\omega^{\textrm{\tiny L}}= \Theta^m\wedge\tilde\Theta_m \ ,
\ee
with field strength
\begin{align}
\cK^{\textrm{\tiny L}} = \de\omega^{\textrm{\tiny L}} = \tfrac12\,\big( & H_{mnp}\,\Theta^m\wedge\Theta^n\wedge\Theta^p
            + f_{mn}{}^p\,\Theta^m\wedge\Theta^n\wedge\tilde\Theta_p
\nonumber \\ &  -
       Q_m{}^{np}\,\Theta^m\wedge\tilde\Theta_n\wedge\tilde\Theta_p +
       R^{mnp}\,
       \tilde\Theta_m\wedge\tilde\Theta_n\wedge\tilde\Theta_p \big) \ .
\label{eq:cKgroup} \end{align}

We will now introduce a suitable notion of left-invariant generalized
metric.
\begin{definition}
A \emph{generalized metric} on a doubled group $\sfD$ is an
automorphism $I^{\textrm{\tiny L}}\in{\sf Aut}_\unit(T\sfD)$ such that
$(I^{\textrm{\tiny L}})^2=\unit$, $I^{\textrm{\tiny L}}\neq\pm\,\unit$
and $I^{\textrm{\tiny L}}\circ{{\tt L}_\gamma}_*={{\tt L}_\gamma}_*\circ
I^{\textrm{\tiny L}}$ for all $\gamma\in\sfD$, which defines a
left-invariant Riemannian metric $\cH^{\textrm{\tiny L}}$ by
$$
\cH^{\textrm{\tiny L}}(X,Y) = \eta^{\textrm{\tiny
    L}}\big(I^{\textrm{\tiny L}}(X),Y\big) 
$$
for all $X,Y\in\mathsf{\Gamma}(T\sfD)$.
\end{definition}
An example is provided by the generalized metric \eqref{eq:sl2Cmetric}
on $\sfD={\sf SL}(2,\IC)$ from Example~\ref{ex:sl2c}; see
also~\cite{Hassler:2019wvn} for a two-parameter family of almost
para-Hermitian structures in this case which arise in the context of
integrable deformations of the principal chiral model.

A left-invariant Born geometry is a compatible generalized metric on
$\sfD$ which is determined in the usual way by choosing a
left-invariant fiberwise metric $g_+$ on $L_+$ (or
$g_-$ on $L_-$), where $T\sfD=L_+\oplus L_-$ is the splitting induced by
$K^{\textrm{\tiny L}}$. We will often work with the simplest example
of a Born metric on
$\sfD$ which is constructed from the left-invariant 1-forms as
\be \label{eq:genmetricdoubledgroup}
\cH^{\textrm{\tiny L}} = \delta_{MN}\, 
\Theta^M\otimes\Theta^N = \delta_{mn}\,\Theta^m\otimes\Theta^n +
\delta^{mn}\,\tilde \Theta_m\otimes\tilde \Theta_n \ .
\ee
This is the unique left-invariant Riemannian metric on $\sfD$ in which
the selected frame $\{Z_M\}$ is orthonormal.

\medskip

\subsubsection{Matrix Lie Groups} \label{sec:matrixgroups} ~\\[5pt]
To make contact between our framework and previous treatments of the geometry of
doubled groups in the literature, as well as to work out some explicit examples, we will
now specialise to the case that $\sfD$ is a matrix group. Then the
Maurer-Cartan 1-forms are given by
$$
\Theta = \gamma^{-1}\, \de\gamma = \Theta^M\,T_M
$$
where $\gamma\in\sfD$.
In a neighbourhood of the identity, we can introduce local coordinates
$$
\IX^M=(x^m,\tilde x_m)
$$ 
on the group manifold of $\sfD$ by using the
polarization to write a general group element $\gamma\in\sfD$ through
the exponential parameterization
$$
\gamma(\IX) = \tilde \sigma(\tilde x) \, \sigma(x)
$$
where
$$
\sigma(x)=\exp\big(x^m\, T_m\big) \qquad \mbox{and} \qquad \tilde\sigma(\tilde
x)=\exp\big(\tilde x_m\,\tilde T^m\big) \ .
$$
So far we have not said anything about the integrability of the
eigenbundles of $K^{\textrm{\tiny L}}$, and this splitting of
coordinates can always naturally be made for a doubled group. With it
we can express the global Maurer-Cartan 1-forms $\Theta^M$ on $\sfD$
as local $C^\infty(\sfD)$-linear combinations of the holonomic basis
$\de\IX^M$. 

The Born geometry of the doubled group $\sfD$ may then be
expressed in this parameterization by
following~\cite{Hull2009,Schulz:2011ye} to introduce the $\frd$-valued
1-form
\be \label{eq:XiTheta}
\Xi = \sigma\,\Theta\,\sigma^{-1} =
\tilde\sigma^{-1}\,\de(\tilde\sigma\,\sigma)\,\sigma^{-1} =
\de\sigma\, \sigma^{-1} + \tilde\sigma^{-1}\, \de\tilde\sigma \ .
\ee
Using the polarization we can expand the $\frd$-valued 1-forms on the
right-hand side of \eqref{eq:XiTheta} as
$$
\de\sigma\,\sigma^{-1} = \varrho^m\,T_m  + \varrho_m\,\tilde T^m
\qquad \mbox{and} \qquad \tilde\sigma^{-1}\,\de\tilde\sigma =
\tilde\ell^m\,T_m + \tilde\ell_m\,\tilde T^m \ .
$$
The component form $\Xi=\Xi^M\,T_M$ is thus given by
$$
\Xi^M = (p^m,\tilde q_m)
$$
where
$$
p^m = \varrho^m + \tilde\ell^m \qquad \mbox{and} \qquad \tilde
q_m = \varrho_m + \tilde\ell_m \ .
$$
The inverse of this change of coframe, $\Theta=\sigma^{-1}\,\Xi\,\sigma$, is given by
$$
\Theta^M = \cE_N{}^M(x)\,\Xi^N \ ,
$$
where $\cE_N{}^M(x)$ depends only on the local coordinates $x^m$ 
and is given by the adjoint action of $\sigma^{-1}(x)$ on the Lie
algebra $\frd$. The adjoint action preserves the split signature metric
$\eta^{\textrm{\tiny L}}$ and so $\cE(x)\in\sfO(d,d)$ for each
$x$. Similarly to the discussion in Remark~\ref{rem:SOdd}, we may
parameterize it with respect to the splitting of $T^*\sfD$ associated to the
almost para-complex structure $K^{\textrm{\tiny L}}$ as
$$
\cE = \begin{pmatrix}
e & e\, \beta \\
e^{-1}\, b & e^{-1}
\end{pmatrix}
$$
where $e(x)\in{\sf GL}(d,\IR)$ while $b(x)$ and $\beta(x)$ are skew-symmetric $d\times d$
matrices which depend only on the local coordinates $x^m$. 

The fundamental 2-form \eqref{eq:2formdoubledgroup} can then expressed in the coframe $\Xi^M$ as
$$
\omega^{\textrm{\tiny L}} = \tfrac12\,\hat\omega_{MN}(x) \, \Xi^M\wedge\Xi^N
$$
where 
\be \label{eq:hatomega}
\hat\omega = \begin{pmatrix}
2\,b & \mathds{1}+b\,\beta \\
-\mathds{1}-b\,\beta & -2\,\beta
\end{pmatrix} \ ,
\ee
while the compatible generalized metric \eqref{eq:genmetricdoubledgroup} can be written as
$$
\cH^{\textrm{\tiny L}} = \hat\cH_{MN}(x) \, \Xi^M\otimes\Xi^N
$$
where
\be \label{eq:hatcH}
\hat\cH = \begin{pmatrix}
g-b\,g^{-1}\,b & g\,\beta-b\,g^{-1} \\
-\beta\,g+g^{-1}\,b & g^{-1}-\beta\,g\,\beta 
\end{pmatrix} \ .
\ee
Here $g(x)=e(x)^{\rm t}\,e(x)$ is a symmetric non-degenerate $d\times d$ matrix depending only on the local coordinates $x^m$.

If the $R$-flux $R^{mnp}$ vanishes, then $\tilde T^m$ generate a $d$-dimensional
subgroup $\sfG\subset\sfD$ with the Lie algebra
$$
[\tilde T^m,\tilde T^n] = Q_p{}^{mn}\, \tilde T^p \ .
$$
In this case $\tilde\ell^m=0$, so that $p^m=\varrho^m$, and $\tilde\sigma^{-1}\,\de\tilde\sigma$ gives the left-invariant Maurer-Cartan 1-forms $\tilde\ell_m$ on $\sfG$. 
The Lie group $\sfG$ gives a maximally isotropic foliation of the doubled group $\sfD$
and we can analyse the generalized isometry conditions which enable
the gauging of the corresponding Born sigma-model. This reduces to a
non-linear sigma-model for a conventional geometric background
$\sfD/\sfG$ with local coordinates $x^m$, and will be studied in
detail in Sections~\ref{sec:Manin} and~\ref{sec:GroupBorn}. 

If the
$R$-flux is non-zero, then the generators $\tilde T^m$ do not close a
Lie subalgebra of $\frd$. In this instance there is no foliation and any ``gauging''
of the Born sigma-model will yield a reduced sigma-model description
that depends explicitly on both sets of local coordinates $x^m$ and
$\tilde x_m$, so that there is no interpretation in terms of a
conventional $d$-dimensional spacetime. The resulting background is
therefore essentially doubled.

\medskip

\subsubsection{Quotienting by a Discrete Group} ~\\[5pt]
When a Lie group $\sfG$ is non-compact, a Scherk-Schwarz dimensional
reduction~\cite{Scherk:1979zr} on $\sfG$ does not give a proper compactification. In order to
lift it to string theory one should introduce a discrete cocompact
subgroup $\sfG(\IZ)\subset\sfG$ and consider instead compactification
on the compact space $\sfG/\sfG(\IZ)$~\cite{Hull2005a}. If $\sfG$ foliates a doubled
group $\sfD$, then taking the quotient by a discrete cocompact
subgroup $\sfD(\IZ)\subset\sfD$ gives a compact manifold
$M=\sfD/\sfD(\IZ)$, where $\sfG(\IZ)\subset\sfD(\IZ)$ acts only on
$\sfG$ and leaves the leaf space $\sfD/\sfG$ invariant. Thus the doubled group construction in string theory is
restricted to Lie groups which admit a discrete cocompact
subgroup. A widely studied class of examples are the nilpotent Lie
groups which can be defined over the rationals, and
taking the quotient by a discrete cocompact subgroup gives a compact
nilmanifold; we will study in detail an example from this class in Section~\ref{sec:BornDTT}. Generally, writing the left-invariant 1-forms in the
holonomic basis as 
$$
\Theta^I=E^I{}_J\,\de\IX^J
$$ 
identifies the
Scherk-Schwarz twist matrix $E=\big(E^I{}_J \big)\in{\sf GL}(2d,\IR)$ in this formalism.

Taking the subgroup $\sfD(\IZ)$ to have a left action
$L_\xi:\sfD\to\sfD$ for all $\xi\in\sfD(\IZ)$, then the left-invariant almost para-Hermitian
structure $(K^{\textrm{\tiny L}}, \eta^{\textrm{\tiny L}})$ and
compatible generalized metric $\cH^{\textrm{\tiny L}}$ descend to a
well-defined almost para-Hermitian
structure $(K, \eta)$ and
compatible generalized metric $\cH$ on the quotient
$M=\sfD/\sfD(\IZ)$. The group of large diffeomorphisms ${\sf
  Diff}(M;\IZ)$ is the automorphism group ${\sf
  Aut}\big(\sfD(\IZ)\big)$ of the lattice $\sfD(\IZ)$, and in the quantum theory physical T-duality transformations will then
live in a subgroup of the discrete group
\be \label{eq:sfOAut}
\sfO(d,d)(\sfD)\,\cap\,{\sf Aut}\big(\sfD(\IZ)\big)
\ee
of automorphisms of the doubled group $\sfD$ that preserve $\sfD(\IZ)$ and the split signature metric $\eta^{\textrm{\tiny L}}$.
For example, when $\sfD=\IR^{2d}$ then $\sfD(\IZ)=\IZ^{2d}$ with
$M=\sfD/\sfD(\IZ)=\IT^{2d}$ and ${\sf
  Aut}(\IZ^{2d})={\sf GL}(2d,\IZ)$, so that \eqref{eq:sfOAut} is the
T-duality group $\sfO(d,d)\cap{\sf GL}(2d,\IZ)=\sfO(d,d;\IZ)$ of string
  theory on a $d$-dimensional toroidal compactification.

Gauging the generalized isometry generated by the
vector fields
$\tilde Z^m$ in the corresponding Born sigma-model then gives a
conventional reduced sigma-model for the quotient space
$M/\sfG$. However, in contrast to the Born structure on the doubled group
manifold $\sfD$, where the quotient $\sfD/\sfG$ always yields a
geometric background, the geometric nature of the quotient $M/\sfG$ depends on
the way in which the subgroups $\sfG$ and $\sfD(\IZ)$ are embedded
into $\sfD$~\cite{Hull2009,ReidEdwards2009}. If the subgroup $\sfG\subset\sfD$ commutes with the
action of $\sfD(\IZ)$, so that
$$
\xi\,\sfG\,\subseteq\,\sfG\,\xi \ ,
$$
for $\xi\in\sfD(\IZ)$, then the quotient space $M/\sfG$ is smooth and
describes a conventional geometric background. On the other hand, if
the subgroup $\sfG$ does not commute with $\sfD(\IZ)$, then the quotient
space $M/\sfG$ is not smooth and the resulting background is a T-fold. In
Section~\ref{sec:BornDTT} we shall study a concrete example which
illustrates all of these general features explicitly from a different
geometric point of view.

\medskip

\subsection{Manin Pairs and Drinfel\rq{}d Doubles} \label{sec:Manin} ~\\[5pt]
We will now consider some general Lie algebraic
structures that naturally lead to doubled groups. We demonstrate how doubled groups arise from Manin pairs,
and also Manin triples such as Drinfel\rq{}d doubles which generalize
Example~\ref{ex:sl2c}. Let us start by providing some definitions
which will be central to the rest of this paper, following
\cite{Jurco2018}.
\begin{definition}
A \emph{Manin pair} $({\frd}, {\frg})$ is a $2d$-dimensional Lie algebra ${\frd}$ endowed with an invariant symmetric non-degenerate pairing $\braket{\, \cdot \, , \, \cdot \,}$ of signature $(d,d)$, together with a Lie subalgebra ${\frg}\subset\frd$ which is maximally isotropic with respect to $\braket{\, \cdot \, , \, \cdot \,}.$
\end{definition}
A short exact sequence of vector spaces is naturally associated with any Manin pair:
\be \label{Maninpair}
0 \longrightarrow \frg \xlongrightarrow{i} \frd \xlongrightarrow{i^*} \frg^*\longrightarrow 0
\ee
where $i: \frg \hookrightarrow \frd$ is the inclusion map, $\frg^*$ is the dual vector space of $\frg$, and the map $i^*$ is defined by 
$$ 
\braket{\,i(\sfx)\,,\, \sfw\,}= \braket{\,\sfx\,,\, i^*(\sfw)\,} \ , 
$$ 
for all $\sfx \in \frg$ and $\sfw \in \frd.$
We can always choose an isotropic splitting of the short exact
sequence \eqref{Maninpair}, which is an injective map 
$$
j: \frg^* \longrightarrow \frd  \qquad \mbox{with} \quad i^* \circ j =
\unit_{\frg^*} \ .
$$ 
In this case
\be \label{eq:frdsplit}
\frd= \frm \oplus \frg \ ,
\ee
where $\frm = {\rm Im}(j)$ is a maximally isotropic subspace with respect to the pairing $\braket{\, \cdot \, , \, \cdot \,},$ but not generally a Lie subalgebra of $\frd$. We call $(\frd, \frg; j)$ a \emph{split Manin pair}.

The choice of an isotropic splitting of the short exact sequence \eqref{Maninpair} defines an almost para-Hermitian structure on the Lie algebra $\frd$. It is given by an almost para-complex structure $\kappa \in {\sf Aut}(\frd)$ such that 
\be \label{eq:kappadef}
\kappa\big(j(\sfx)+ i({\tilde\sfx})\big)= j(\sfx)- i({\tilde\sfx})\ ,
\ee
for all $\sfx \in \sfg^*$ and $\tilde{\sfx} \in \sfg,$ and the symmetric non-degenerate pairing $\braket{\, \cdot \, , \, \cdot \,}.$ The almost para-complex structure $\kappa$ is compatible with the pairing $\braket{\, \cdot \, , \, \cdot \,}$ by construction. Then the fundamental 2-form $\cW \in \midwedge^2\, \frd^*$ induced by $\kappa$ and the pairing is 
$$ 
\cW( \sfw, \sfz)=  \braket{\,\kappa(\sfw)\,,\, \sfz\,}\ , 
$$ 
for all $\sfw, \sfz \in \frd,$ which by using isotropy of $\frm$ and $\frg$ with respect to the pairing reads
$$
\cW\big(j(\sfx)+ i(\tilde{\sfx})\,,\, j(\sfy)+ i(\tilde{\sfy})\big)=
\braket{\,j({\sfx})\,,\, i(\tilde\sfy)\,} - \braket{\,i(\tilde\sfx)\,,\,j({\sfy})\,} \ ,
$$
for all ${\sfx}, {\sfy} \in \frm$ and $\tilde\sfx, \tilde\sfy \in
\frg.$ Thus the subspaces $\frm$ and $\frg$ are also maximally
isotropic with respect to $\cW,$ so that $\cW \in \frm^* \wedge \frg^*.$  

There is a notion of $B$-transformations in this case preserving the Lie
subalgebra $\frg$ which are generated by bivectors $\Lambda \in \midwedge^2\, \frg.$ Once a splitting $j$ is fixed, we may then obtain a new subspace $\frm_\Lambda={\rm Im}(j_\Lambda)$, where 
$$
j_\Lambda({\sfx})= j({\sfx})+ i\big(\Lambda(\sfx)\big)\ ,
$$ 
for all ${\sfx} \in \frg^*.$ The subspace $\frm_\Lambda$ is again isotropic, so this gives a transformation that maps an isotropic splitting $j$ into another  isotropic splitting $j_\Lambda.$ The difference between these two splittings is given by the associated almost para-complex structure, which we formally write as 
$$
\kappa_\Lambda= \kappa+ 2\,\Lambda \ . 
$$
Correspondingly, the fundamental 2-form for $j_\Lambda$ reads 
$$
\cW_\Lambda= \cW + 2\, i(\Lambda) \ .
$$
Generally, changes of polarization $\vartheta\in\sfO(d,d)(\frd)$ which map a
split Manin pair $(\frd,\frg;j)$ into another split Manin pair
$(\frd,\frg_\vartheta;j_\vartheta)$ are called \emph{non-abelian
  T-duality transformations}~\cite{Lust:2018jsx}.

Suppose now that $\sfD$ is a Lie group which integrates the Lie
algebra $\frd$, i.e. $\frd={\sf Lie}(\sfD)$. The corresponding tangent
Lie group is the semi-direct product
$$
T\sfD \simeq \sfD \ltimes \frd
$$ 
by the adjoint action of $\sfD$ on $\frd\simeq\R^{2d}$ regarded as an
abelian Lie group. Thus $\sfD$ inherits an almost para-Hermitian structure
$(K^{\textrm{\tiny L}}, \eta^{\textrm{\tiny L}}, \omega^{\textrm{\tiny
    L}} )$ from $(\kappa,\braket{\, \cdot \, , \, \cdot \,}, \cW )$ by
using the isomorphism between $\frd$ and the left-invariant vector fields on $\sfD,$ which by construction is left-invariant with respect to the left action of $\sfD$ on itself. Hence $\sfD$ is a doubled group.
As a vector bundle, the tangent bundle admits a splitting into
left-invariant distributions
$$
T\sfD=L_{\frm}(\sfD) \oplus L_{\frg}(\sfD) \ ,
$$ 
which corresponds fiberwise to the vector space splitting
\eqref{eq:frdsplit}. Here $L_{\frm}(\sfD)$ is the sub-bundle of
$T\sfD$ associated with the subspace $\frm,$ and $L_{\frg}(\sfD)\simeq T\sfG$ with $\sfG$ the
Lie subgroup of $\sfD$ whose Lie algebra is $\frg,$ hence sections of
$L_{\frg}(\sfD)$ are given by left-invariant vector fields on $\sfG$. Clearly $L_{\frm}(\sfD)$ is not generally integrable, whereas $\sfG$ defines a foliation of $\sfD.$ 

It follows that a compatible generalized metric $\cH^{\textrm{\tiny L}}$ can always be defined on such an almost para-Hermitian manifold by considering a left-invariant Riemannian metric $\cG$ on $\sfG$ and setting
$$
g_-(X_-, Y_-)=\cG\big(X^{\textrm{\tiny L}}, Y^{\textrm{\tiny L}}\big)\ , 
$$  
where $X_-,Y_- \in \mathsf{\Gamma}(L_{\frg}(\sfD))$ are the sections
of $L_{\frg}(\sfD)$ corresponding to the left-invariant vector fields
$X^{\textrm{\tiny L}}, Y^{\textrm{\tiny L}} \in
\mathsf{\Gamma}(T\sfG)$ on $\sfG.$ The fiberwise metric $g_-$ on
$L_{\frg}(\sfD)$ induces a fiberwise metric $g_+$ on $L_{\frm}(\sfD)$
in the usual way by 
\be
g_+(X_+,Y_+) = g_-^{-1}\big(\eta^{\textrm{\tiny L}}{}^\flat(X_+) ,
\eta^{\textrm{\tiny L}}{}^\flat(Y_+)\big)
\label{eq:g+cG} \ee
for all $X_+,Y_+\in\mathsf{\Gamma}(L_\frm(\sfD))$. Then the left-invariant compatible generalized metric is
$$
\cH^{\textrm{\tiny L}}= g_+ + g_- \ ,
$$
which is indeed Riemannian. It is straightforward to show that this metric is the unique left-invariant Riemannian metric that can be defined on $\sfD$ for which the basis of left-invariant vector fields is orthonormal.

\begin{remark}
Whenever $\sfG$ is a closed connected Lie subgroup of the doubled
group $\sfD$, the coset space $\cQ=\sfD/\sfG$ is a smooth manifold and
the quotient map $\pi: \sfD \rightarrow \cQ$ is a principal
$\sfG$-bundle. In this case, $L_{\frg}(\sfD)$ is the induced vertical
distribution and $ L_{\frm}(\sfD)$ is the horizontal
distribution. Then an alternative way of defining a compatible
generalized metric is by lifting a Riemannian metric defined on $\cQ$
to $\sfD,$ as discussed in Example \ref{fiberborn}. 
\end{remark}

Para-Hermitian structures on Drinfel'd doubles now arise naturally from the above discussion.

\begin{definition}
Let $(\frd, \frg; j)$ be a split Manin pair. If $\tilde{\frg}={\rm
  Im}(j)$ closes a Lie subalgebra of $\frd,$ then $(\frd, \frg,
\tilde{\frg})$ is a \emph{Manin triple}. A corresponding triple of
integrating Lie groups $(\sfD,\sfG,\tilde{\sfG})$ is a \emph{Drinfel'd
  double}, and is denoted
$$
\sfD = \sfG \Join \tilde \sfG \ .
$$
\end{definition}
For further details on Drinfel'd doubles, see \cite{kossman1997}.
For a Manin triple, in addition to \eqref{Maninpair} there is also the short exact sequence of vector spaces
\be \label{Manintriple}
0 \longrightarrow \frg^* \xlongrightarrow{} \frd \xlongrightarrow{} \frg \longrightarrow 0
\ee
since a Manin triple corresponds to the Lie bialgebras $(\frg,
\tilde{\frg})$ and $(\tilde{\frg}, \frg).$ Then there is a canonical
para-Hermitian structure induced by the vector space splitting 
$$
\frd=\tilde\frg \oplus {\frg}
$$ 
and the non-degenerate symmetric pairing $\braket{\, \cdot \, , \,
  \cdot \,}.$ The subgroup of non-abelian T-duality transformations
$\vartheta\in\sfO(d,d)(\sfD)$ which map a Manin triple
$(\frd,\frg,\tilde\frg)$ into another Manin triple
$(\frd,\frg_\vartheta,\tilde\frg_\vartheta)$ captures some features of
{Poisson-Lie T-duality}~\cite{Lust:2018jsx}.

\begin{example} \label{cotadrinfeld}
Let $\sfG$ be a $d$-dimensional Lie group. Its cotangent bundle
$T^*\sfG\simeq \sfG \ltimes \R^d$ is a Drinfel'd double Lie group
$\sfD$ with $\tilde\sfG=\IR^d$. Denoting the bundle projection by $\pi: T^*\sfG \rightarrow \sfG,$ the canonical short exact sequence of vector bundles 
$$
0 \longrightarrow L_{\tt v}(T^*\sfG) \xlongrightarrow{} T(T^*\sfG)
\xlongrightarrow{} \pi^*(T\sfG) \longrightarrow 0 
$$
corresponds fiberwise to the short exact sequence of vector spaces
\eqref{Manintriple}. A left-invariant isotropic splitting with respect to the
split signature metric $\eta^{\textrm{\tiny L}},$ induced by the Drinfel'd
double structure, of this short exact sequence defines a
left-invariant para-Hermitian structure on $T^*\sfG$; note that the associated fundamental 2-form $\omega^{\textrm{\tiny L}}$ is not necessarily the canonical symplectic 2-form $\omega_0$ on $T^*\sfG$, which is not left-invariant in general. A left-invariant
compatible generalized metric $\cH^{\textrm{\tiny L}}$ on $T^*\sfG$ can be obtained by the horizontal lift of a left-invariant Riemannian metric $\cG$ on $\sfG$ which, in turn, induces a left-invariant Riemannian metric on~$T^*\sfG$. 
\end{example}

\medskip

\subsection{Doubled Group Born Sigma-Models} \label{sec:GroupBorn} ~\\[5pt]
Let $\sfD$ be a doubled group whose Lie algebra $\frd$ has the
structure of a split Manin pair, and let $(K^{\textrm{\tiny L}},
\eta^{\textrm{\tiny L}})$ be the associated almost para-Hermitian
structure on $\sfD.$ As we have seen, there is a natural compatible
generalized metric $\cH^{\textrm{\tiny L}}$ induced by a
left-invariant Riemannian metric on the Lie subgroup $\sfG\subset\sfD$,
as well as the fundamental 2-form $\omega^{\textrm{\tiny L}}$ induced by the
almost para-Hermitian structure. Thus the doubled group $\sfD$
naturally serves as the target space for a Born sigma-model
$\cS(\cH^{\textrm{\tiny L}}, \omega^{\textrm{\tiny L}}).$ Since $\sfD$
is foliated by $\sfG$, we may look for conditions under which the Born
sigma-model $\cS(\cH^{\textrm{\tiny L}}, \omega^{\textrm{\tiny L}})$
admits a gauging and thereby yields a conventional sigma-model
description of the quotient
$\sfD/ \sfG.$ We will study the problem of the existence of a gauged Born
sigma-model with target space $\sfD$ in a split Manin pair
polarization, using the general description of the Lie algebroid
gauging of Section~\ref{BSM}.

For this, we consider the generators of left-invariant vector fields
$\{ Z_I \}=\{Z_i,\tilde Z^i\}$ on $\sfD$ such that $\{ Z_i \}$ spans
the sections of $L_{\frm}(\sfD)$ and $\{ \tilde{Z}^i \}$ spans the
left-invariant vector fields on $\sfG$. This frame closes a Lie
algebra of the form
\begin{align}
[Z_m, Z_n]&={f_{mn}}^k\, Z_k + H_{mnk}\, \tilde{Z}^k \ , \nonumber \\[4pt] [ Z_m,\tilde{Z}^n]&= {f_{km}}^n \, \tilde{Z}^k + Q_m{}^{nk}\, Z_k \ , \label{mplie}\\[4pt] [\tilde{Z}^m, \tilde{Z}^n]&=Q_k{}^{mn}\, \tilde{Z}^k \  , \nonumber
\end{align}
and admits a dual left-invariant coframe $\{
\Theta^I\}=\{\Theta^i,\tilde\Theta_i \}$ such that $\{ \Theta^i \}$
spans the sections of 
$L_{\frm}^*(\sfD)$ and $\{ \tilde{\Theta}_i \}$ spans the
left-invariant 1-forms on $\sfG$. 

A left-invariant Born metric
$\cH^{\textrm{\tiny L}}$ on $\sfD$ is specified by a fiberwise left-invariant
metric $g_+$ on $L_{\frm}(\sfD)$.
To see when the Lie algebroid of left-invariant vector fields on
$\sfG$ generates the generalized isometries of $\cH^{\textrm{\tiny
    L}},$ we check when the transverse invariance condition for
$\cH^{\textrm{\tiny L}}$ is satisfied. Since the left-invariant vector fields
$\tilde Z^k$ generate the right action of $\sfG$ on $\sfD$, the
vanishing requirement
$$
\pounds_{\tilde{Z}^k}\,g_+ = 0
$$
from Section~\ref{BSMGau} implies that the metric $g_+$ is
bi-invariant for the $\sfG$-action.

We also need to check the transverse invariance of the fundamental
2-form $\omega^{\textrm{\tiny L}}$:
$$
(\pounds_{\tilde{Z}^k}\,\omega^{\textrm{\tiny L}})(X_+,Y_+)=
\omega^{\textrm{\tiny L}}\big([\tilde
Z^k,X_+],Y_+\big)+\omega^{\textrm{\tiny L}}\big(X_+,[\tilde
Z^k,Y_+]\big) = 0
$$
for all $X_+,Y_+\in \mathsf{\Gamma}(L_\frm(\sfD))$.
This holds if and only if the Lie bracketing of the subspace $\frm\subset\frd$
is given by
\be \label{eq:red}
[{\frg}, {\frm}]_\frd \subseteq \frm \ .
\ee
When $\sfG$ is connected this 
implies that the splitting \eqref{eq:frdsplit} is also invariant for
the adjoint action of $\sfG$.

Generally, the
quotient $\cQ= \sfD/ \sfG$ is a homogeneous space and the quotient
map $\sfD \rightarrow \cQ$ is a principal $\sfG$-bundle. The condition
\eqref{eq:red} then implies that $\cQ$ is a 
reductive homogeneous space: it means that there is
a natural $\sfG$-action on $\sfD$ given by right multiplication, with
$\frg$ the Lie algebra of the isotropy subgroup and $\frm$ the generators
of infinitesimal translations of $\cQ$~\cite{Helgason1978}.
In this case the gaugeable Born
sigma-models from Section~\ref{BSMGau} are in correspondence with
$\sfG$-invariant connections\footnote{Strictly speaking, for this
  correspondence we should consider a right-invariant para-Hermitian
  structure on $\sfD$, but this would not affect any of the results above.}
on the principal $\sfG$-bundle $\sfD \rightarrow \cQ$ which are
maximally isotropic with respect to $\eta^{\textrm{\tiny L}},$ the
split signature metric induced by the split Manin pair structure of $\frd=
{\sf Lie}(\sfD)$.
In particular, in a split Manin pair polarization one always
obtains a geometric background for the reduced worldsheet sigma-model $S(g,b)$
for $\cQ$, where the Riemannian metric $g$ descends from the $\sfG$-bi-invariant metric
$g_+$ and the Kalb-Ramond field $b$ is given by the transverse component
of the fundamental 2-form $\omega^{\textrm{\tiny L}}$. In this
setting, a non-abelian T-duality transformation between Born
sigma-models, as a change of split Manin pair polarization, is in the
same spirit as the Poisson-Lie T-duality of~\cite{Severa2015}. We shall
describe these sigma-models explicitly below in the special
case of matrix Lie groups.

\begin{example}\label{ex:simplestmetric}
The simplest example of a fiberwise metric $g_-$ on $L_{\frg}(\sfD)$ induced by a left-invariant metric $\cG$ on $\sfG$ can be written as 
$$ 
g_- = \delta^{ij} \, \tilde{\Theta}_i \otimes \tilde{\Theta}_j\ .
$$
Then the fiberwise metric $g_+$ on $L_{\frm}(\sfD)$ given by \eqref{eq:g+cG}
reads
$$
g_+= \delta_{ij} \, \Theta^i \otimes \Theta^j\ ,
$$
with $\cH^{\textrm{\tiny L}}= g_++ g_-.$ 
In this case from \eqref{mplie} and the Maurer-Cartan equations we find
$$
\pounds_{\tilde{Z}^k}\,g_+ = \delta_{ij}\, Q_l{}^{kj} \, \Theta^i \odot
\Theta^l\ ,
$$
which vanishes when the structure constants $Q_l{}^{kj}$ are
completely skew. This implies that the Lie group $\sfG$ is semi-simple
and $g_+$ is the lift of the metric on $\sfG$ given by the Cartan-Killing form
$$
c^{mn} = \tfrac12\, Q_p{}^{mq}\,Q_q{}^{np}
$$ 
on $\frg={\sf Lie}(\sfG).$ 

For the fundamental 2-form
$$
\omega^{\textrm{\tiny L}}= \Theta^i \wedge \tilde{\Theta}_i \ ,
$$
in this case we find
$$
\pounds_{\tilde{Z}^k}\,\omega^{\textrm{\tiny L}}= {f_{ij}}^k\, \Theta^i
\wedge \Theta^j \ .
$$
Hence $\pounds_{\tilde{Z}^k}\,\omega^{\textrm{\tiny L}}$ has only one
component which belongs to $\mathsf{\Gamma}(\midwedge^2 L_{\frm}^*(\sfD)),$ so it has to vanish identically. This implies that
$$
{f_{ij}}^k=0 \ ,
$$ 
or equivalently that the Lie bracketing of the subspace $\frm\subset\frd$
satisfies
\be \label{eq:frmbrackets}
[{\frm},
{\frm}]_\frd \subseteq {\frg}\ ,
\ee
in addition to \eqref{eq:red}. This means that the almost para-complex
structure $\kappa$ defined by \eqref{eq:kappadef} endows the splitting
\eqref{eq:frdsplit} with a $\IZ_2$-grading by assigning degree~$0$ to
elements of $\frg$ and degree~$1$ to elements of $\frm$. The remaining
fluxes in \eqref{mplie} are constrained by the Bianchi identities
$$
H_{m[np}\, Q_{l]}{}^{km} = 0 \ ,
$$
where the brackets denote skew-symmetrization of the enclosed
indices.

The extra condition
\eqref{eq:frmbrackets} implies that the reductive homogeneous space $\cQ=\sfD/\sfG$ is a 
symmetric space: the quotient $\cQ$ is invariant under
inversion about any chosen origin~\cite{Helgason1978}. Symmetric string backgrounds in this case were also
found in~\cite{Hassler:2016srl,Demulder:2018lmj} as particular explicit solutions to
the strong constraint in the target space double field theory.
\end{example}

\medskip

\subsubsection{Matrix Lie Groups} ~\\[5pt]
The Born sigma-model $\cS(\cH^{\textrm{\tiny
    L}},\omega^{\textrm{\tiny L}})$ for a general doubled group $\sfD$
which is a matrix group can be
written using the exponential parameterization from
Section~\ref{sec:matrixgroups} as
$$
\cS[\phi] = \frac14\,\int_\Sigma\,
\bar{\hat\cH}_{MN}(x) \, \bar\Xi^M \wedge \star\, \bar\Xi^N +
\frac14\,\int_\Sigma\,\bar{\hat\omega}_{MN}(x) \,
\bar\Xi^M\wedge\bar\Xi^N \ ,
$$
where the map $\phi$ embeds a closed string worldsheet $\Sigma$ into
the doubled group $\sfD$, while $\hat\cH_{MN}$ and $\hat\omega_{MN}$
are the components of the generalized compatible metric given in
\eqref{eq:hatcH} and of the fundamental 2-form given in
\eqref{eq:hatomega}. The sigma-model $\cS[\phi]$ has a rigid symmetry given by the action of $\sfD$ on itself by left multiplication. Since the $R$-flux $R^{mnp}$ vanishes in a split Manin pair
polarization, $p^m=\varrho^m$ and $\tilde\ell_m$ are the
left-invariant Maurer-Cartan 1-forms on $\sfG$, as discussed in
Section~\ref{sec:matrixgroups}. Since by Example~\ref{ex:simplestmetric} the metric torsion coefficients
$f_{mn}{}^p$ vanish in this case by the generalized isometry constraints, it
follows that
$$
e=\mathds{1} \qquad \mbox{and} \qquad \varrho^m=\de x^m \ .
$$

The Lie algebroid gauging of the Born
sigma-model in this case is achieved by the minimal coupling of the
Maurer-Cartan 1-forms $\tilde\ell_m$ to a $\sfG$-invariant
connection 1-form $\ccC_m$, giving the gauged Born sigma-model actional functional
\begin{align}
\cS[\phi,\ccC] &=
\frac14\,\int_\Sigma\, \big(\delta_{mn} + \bar
b_{mk}\,\delta^{kp}\,\bar b_{np}\big)\,\de\bar x^m\wedge \star\, \de\bar
                x^n + \frac12\,\int_\Sigma\,\bar b_{mn}\,\de\bar
  x^m\wedge\de\bar x^n \nonumber \\
& \quad \, + \frac14\,\int_\Sigma\, \big(\delta^{mn} +
                \bar\beta^{mk}\,\delta_{kp}\,\bar\beta^{np}\big)\,
                \big(\bar{\tilde q}_m+\bar\ccC_m\big)\wedge \star\,
                (\bar{\tilde q}_n+\bar\ccC_n\big) \nonumber \\
& \hspace{6cm} - \frac12\,\int_\Sigma\, \bar\beta^{mn}\,\big(\bar{\tilde
  q}_m+\bar\ccC_m\big)\wedge\big(\bar{\tilde q}_n+\bar\ccC_n\big) \label{eq:Borndoubledgroup} \\
& \quad \, -\frac12\,\int_\Sigma\,\big(\bar b_{mk}\,\delta^{kn} -
  \delta_{mk}\,\bar\beta^{kn}\big)\,\de\bar x^m\wedge \star\,
  \big(\bar{\tilde q}_n+\bar\ccC_n\big) \nonumber \\
& \hspace{6cm} + \frac12\,\int_\Sigma\, \big(\delta_m{}^n+\bar
  b_{mk}\,\bar\beta^{kn}\big)\, \de\bar x^m\wedge\big(\bar{\tilde
  q}_n+\bar\ccC_n\big) \ . \nonumber
\end{align}
Varying \eqref{eq:Borndoubledgroup} with respect to the gauge fields $\ccC_m$ leads to the self-duality constraints
\begin{align*}
& \big(\delta^{mn}+\bar\beta^{mk}\,\delta_{kp}\,\bar\beta^{np}\big) \, \star \, \big(\bar{\tilde q}_n+\bar\ccC_n\big) - 2\,\bar\beta^{mn}\, \big(\bar{\tilde q}_n+\bar\ccC_n\big) \\ & \hspace{4cm} = \big(\bar b_{nk}\,\delta^{km} -
  \delta_{nk}\,\bar\beta^{km}\big) \, \star \, \de\bar x^n - \big(\delta_n{}^m+\bar
  b_{nk}\,\bar\beta^{km}\big)\, \de\bar x^n \ ,
\end{align*}
which are formally solved by the non-local expression
\begin{align*}
\bar{\tilde q}_m+\bar\ccC_m = \bigg(\frac1{(\mathds{1}-\bar\beta\,\star)^2}\bigg)_{mn}\, \Big((\bar b-\bar\beta)^n{}_k\, \de\bar x^k - (\mathds{1}+\bar\beta\,\bar b)^n{}_k\,\star\,\de\bar x^k \Big) \ .
\end{align*}
Substituting this into the gauged action functional
\eqref{eq:Borndoubledgroup} eliminates the dependence on $\bar{\tilde
  q}_m$, giving a reduced sigma-model that depends only on the leaf
space coordinates $x^m$. The complicated non-local dependence on the
bivector $\beta$ owes to the appearence of $Q$-flux in the doubled
group background; nonetheless, the resulting physical background is
geometric. 

We conclude this section by briefly considering some explicit examples which illustrate how this reproduces well-known backgrounds in the doubled formalism.

\begin{example}[The Doubled Torus]
\label{sec:doubledtorus}
The simplest case corresponds to setting the structure constants to zero:
$$
Q_k{}^{ij}=0 \ ,
$$
in addition to $f_{ij}{}^k=0$ in \eqref{mplie},
so that $\sfD$ is a doubled group integrating a Manin pair corresponding to the abelian Lie group
$$
\sfG = \IR^d \ .
$$
In this case one finds~\cite{Hull2009}
$$
\varrho_m=\tfrac12\, H_{mnp}\,x^p\,\de x^n \ , \quad \tilde\ell_m=\de\tilde x_m \ , \quad \beta^{mn}=0 \qquad \mbox{and} \qquad b_{mn} = H_{mnp}\, x^p \ .
$$
The reduction of the Born sigma-model then yields the standard non-linear sigma-model $S(g,b)$ with flat metric
$$
g=\delta_{mn}\, \de x^m\otimes\de x^n
$$
and Kalb-Ramond field $b$, so that the spacetime is locally $\cQ=\IR^d$. The $H$-flux of the $B$-field agrees with the field strength \eqref{eq:cKgroup} of the fundamental 2-form in this case:
$$
\cK^{\textrm{\tiny L}} = \tfrac12\, H_{mnp}\, \de x^m\wedge\de x^n\wedge\de x^p \ .
$$
After taking the quotient by a cocompact discrete subgroup $\sfD(\IZ)$, the spacetime becomes a $d$-dimensional torus $\IT^d=\IR^d/\IZ^d$ with $H$-flux in this split Manin pair polarization. Thus in this case the compact space $M=\sfD/\sfD(\IZ)$ reproduces the standard doubled torus in the geometric $H$-flux polarization~\cite{Hull2005}.
\end{example}

\begin{example}[Doubled WZW Models]
\label{sec:WZW}
Setting 
$$
H_{mnk}= c_{mi}\,c_{nj}\,Q_k{}^{ij}
$$ 
in addition to $f_{ij}{}^k=0$ in \eqref{mplie} recovers the doubled
sigma-model description of the Wess-Zumino-Witten (WZW)
model discussed in~\cite{DallAgata:2008ycz,Avramis:2009xi,Schulz:2011ye,Demulder:2018lmj}. The doubled group is 
$$
\sfD=\sfG\times\sfG \ ,
$$
and if $\sfG$ is compact then $\sfD$ is embedded into the maximal compact subgroup $\sfO(d) \times
\sfO(d)$ of the generalized T-duality group $\sfO(d, d)$, where the
two copies of the semi-simple Lie group $\sfG$ are associated
to the left-moving and right-moving worldsheet sectors. In this case we can set $\beta=0$ using an $\sfO(d)\times\sfO(d)$-transformation, and embedding
$\sfG$ as the diagonal subgroup of $\sfD$, the gauged Born sigma-model
is a gauged WZW model based on the group $\sfD$ with gauge
group $\sfG$. The field strength \eqref{eq:cKgroup} yields the standard $H$-flux
$$
H=-c_{mi}\,c_{nj}\,Q_k{}^{ij}\,
\theta^m\wedge\theta^n\wedge\theta^k
$$
for the reduced WZW model at level $1$ based on
$\sfD/\sfG\simeq\sfG$, where $\theta$ is the left-invariant
Maurer-Cartan 1-form on $\sfG$. The Riemannian submersion from the
Born manifold $\sfD$ to the quotient $\sfD/\sfG$ is given by
$$
{\mathit\Pi}:\sfD\longrightarrow\sfG \ , \quad (g,g')\longmapsto g^{-1}\,g'
\ ,
$$
where $g^{-1}$ and $g'$ become the left-moving and right-moving
closed string fields after imposing the self-duality constraint
resulting from the gauged Born sigma-model.
Other natural choices of left-invariant almost para-complex structures on the doubled group
$\sfD=\sfG\times\sfG$ correspond to subgroups of $\sfD$ in the same conjugacy
class as the diagonal subgroup
$\sfG\subset\sfD$~\cite{Schulz:2011ye}. Discrete quotients of this
doubled group in the case $\sfG={\sf SU}(2)$ are studied
in~\cite{Schulz:2011ye,Hassler:2016srl} in the context of the T-duals
of the 3-sphere $\mathbb{S}^3$, viewed as a circle bundle over the
2-sphere~$\mathbb{S}^2$.
\end{example}

\begin{example}[Drinfel'd Doubles]
\label{sec:Drinfeld}
The case of a Drinfel'd double
$\sfD$ which is a matrix Lie group corresponds to setting 
$$
H_{mnk}=0
$$ 
in \eqref{mplie}. The gaugeable Born
sigma-models in a Manin triple polarization, for which $f_{ij}{}^k=0$, single out the cotangent bundles
$$
\sfD=T^*\sfG=\sfG\ltimes\IR^d
$$ 
of semi-simple Lie groups $\sfG$, which
reduce to a sigma-model description of flat space $T^*\sfG/\sfG\simeq\IR^d$. In this case one has
$$
\varrho_m=0 \ , \quad b_{mn}=0 \qquad \mbox{and} \qquad \beta^{mn}=Q_p{}^{mn}\,x^p \ , 
$$
and the reduced sigma-model action $S'(g',b')$ can be expressed in terms of a
metric $g'$ and Kalb-Ramond field $b'$ defined through
$g'+b'=(\mathds{1}+\beta)^{-1}$, or equivalently
$$
g'=(\mathds{1}-\beta^2)^{-1} \qquad \mbox{and} \qquad
b'=-(\mathds{1}+\beta)^{-1}\, \beta\, (\mathds{1}-\beta)^{-1} \ . 
$$
Alternatively, considering the change of polarization which
interchanges the roles of the Lie bialgebras $(\frg,\IR^d)$ and
$(\IR^d,\frg)$ in the Manin triple, which is the simplest example of a
Poisson-Lie T-duality
transformation, one obtains a sigma-model with
$b_{nm}=\beta^{nm}=0$~\cite{Hull2009} coinciding locally with that of
Example~\ref{sec:doubledtorus} with vanishing Kalb-Ramond field, as
expected from the general considerations of Section~\ref{smCTQ}.
\end{example}

\medskip

\section{Born Sigma-Models for Doubled Nilmanifolds} \label{sec:BornDTT} ~\\[5pt]
A broad class of compactifications of string theory come in the form
of `twisted tori', which are torus bundles that arise as
Scherk-Schwarz reductions with twist in the group of large
diffeomorphisms of the torus fibers. Examples include nilmanifolds,
which are quotients of nilpotent Lie groups by a cocompact discrete
subgroup. A more general class of examples consists of the
solvmanifolds that are discrete quotients of almost abelian solvable
Lie groups, which can be realized as torus fibrations over a
circle. Discrete quotients of the cotangent bundles of the underlying
Lie groups, which are Drinfel'd doubles, yield doubled geometries that
contain the original twisted torus as well as the correspondence space
for its geometric T-dual backgrounds, and are commonly refered to as
`doubled twisted tori'. In this section we will consider the simplest and best
studied example which doubles the compactification on the
three-dimensional Heisenberg nilmanifold. 

\medskip

\subsection{The Doubled Twisted Torus} ~\\[5pt]
We shall first recall the
construction of the doubled twisted torus as a quotient space in the
case of interest here, following~\cite{Hull2009,SzMar}; see e.g.~\cite{Hull:2019iuy} for more general cases. The doubled twisted torus is obtained from the quotient of the Drinfel'd double $\sfD_\sfH=T^*\sfH$ of the three-dimensional Heisenberg group $\sfH$ with respect to a discrete cocompact subgroup $\sfD_\sfH(\IZ).$ The nilpotent Lie algebra of $T^*\sfH=\sfH\ltimes\IR^3$ has non-vanishing brackets
\be
[T_x, T_z]= m\, T_y \ , \quad [T_x, \tilde{T}^y]= m\, \tilde T^z \qquad
\mbox{and} \qquad [T_z, \tilde{T}^y]= -m\, \tilde{T}^x \ ,  \label{lieth}
\ee
where $m \in \mathbb{Z}$. Here the Heisenberg algebra $\mathfrak{h}$ and the abelian Lie algebra $\mathbb{R}^3$ are spanned, respectively, by $\{T_m\}=\{T_x,T_y,T_z\}$ and $\{ \tilde{T}^m \}=\{\tilde T^x,\tilde T^y,\tilde T^z\},$ and together with $\mathfrak{d}_{\mathfrak{h}}=\mathfrak{h}\ltimes \mathbb{R}^3$ they form a Manin triple. Despite the fact that $\sfH$ is not semi-simple, we can still give a matrix representation for the Lie algebra of the Drinfel'd double $T^*\sfH$. This will prove useful later on for explicitly writing down the coordinate identifications defining the global structure of the doubled twisted torus.  

In local coordinates, any element $\gamma\in T^*\sfH$ may be written as
\be \nonumber
\gamma=
\begin{pmatrix}
1 & m\,x & y & 0 & 0 & \tilde{z} \\
0 & 1 & z & 0 & 0 & -\tilde{y} \\
0 & 0 & 1 & 0 & 0 & 0 \\
0 & -m\,\tilde{y} & \tilde{x}-m\,z\,\tilde{y} & 1 & m\,x & y+\frac{1}{2}\,m\,\tilde{y}^2 \\
0 & 0 & 0 & 0 & 1 & z \\
0& 0 & 0 & 0 & 0 & 1
\end{pmatrix}
\ee
where $(x,y,z)$ are coordinates on the Heisenberg group $\sfH$ and $(\tilde x,\tilde y,\tilde z)$ are coordinates on the fiber $\IR^3$. Then the left-invariant 1-forms are given by the Lie algebra components of the corresponding Maurer-Cartan 1-form
$$
\Theta=\gamma^{-1}\, \de\gamma=\Theta^n\,T_n+\tilde\Theta_n\,\tilde T^n
$$ 
as
\begin{align}
\Theta^x &= \de x \ , \quad \Theta^y=\de y- m\,x\, \de z \qquad
           \mbox{and} \qquad \Theta^z=\de z \ , \nonumber \\[4pt]
\tilde{\Theta}_x &= \de \tilde{x}-m\,z \, \de \tilde{y} \ , \quad
                   \tilde{\Theta}_y = \de \tilde{y} \qquad \mbox{and}
                   \qquad \tilde{\Theta}_z= \de \tilde{z}+ m\,x\, \de \tilde{y} \ , \label{thdeltaleft}
\end{align}
with dual left-invariant vector fields
\begin{align}
 Z_x&=\frac\partial{\partial x} \ , \quad Z_y= \frac\partial{\partial y} \qquad \mbox{and} \qquad
      Z_z= \frac\partial{\partial z} + m\,x\,\frac\partial{\partial y} \ , \label{zdist} \\[4pt]
 \tilde{Z}^x&= \frac\partial{\partial\tilde x} \ , \quad \tilde{Z}^y=
              \frac\partial{\partial\tilde y} + m\,z \,
              \frac\partial{\partial\tilde x} -m\,x\,\frac\partial{\partial\tilde z} \qquad
              \mbox{and} \qquad \tilde{Z}^z= \frac\partial{\partial\tilde z} \ . \label{tilddist}
\end{align}
It follows from \eqn{zdist} that $\{ Z_n \}$ spans an involutive distribution $L_+,$ thus it defines a foliation whose leaves are given by the Heisenberg group $\sfH.$ Similarly \eqn{tilddist} tells us that $\{ \tilde{Z}^n \}$ spans an involutive distribution $L_-$ whose foliation has leaves given by  $\mathbb{R}^3$, the fiber of the cotangent bundle 
$$
\pi:T^*\sfH\longrightarrow\sfH \ .
$$ 

Since $T^*\sfH$ is a Drinfel'd double, it is naturally endowed with a left-invariant para-Hermitian structure defined by the para-complex structure 
$$
\underline{K^{\textrm{\tiny L}}}=Z_n\otimes\Theta^n - \tilde Z^n\otimes\tilde\Theta_n 
$$ 
for which $L_+$ is its $+1$-eigenbundle and $L_-$ is its $-1$-eigenbundle. The split signature metric is given by 
$$
\eta^{\textrm{\tiny L}}= \Theta^n \otimes \tilde{\Theta}_n + \tilde{\Theta}_n \otimes \Theta^n \ , 
$$
and the fundamental 2-form is 
\be \label{eq:omegadtt}
\omega^{\textrm{\tiny L}}=\Theta^n \wedge \tilde\Theta_n
\ee
with field strength 
$$
\cK^{\textrm{\tiny L}}=\de \omega^{\textrm{\tiny L}}= -m\, \de x \wedge \de z \wedge \de \tilde{y} \ .
$$ 
Comparing with \eqref{eq:cKgroup} shows that the only non-vanishing flux
in this polarization is the metric flux $f_{xz}{}^y=-f_{zx}{}^y=-m$.

There further exists a unique left-invariant Riemannian metric $\cH^{\textrm{\tiny L}}$ on $T^* \sfH$ induced by the horizontal lift $g_+=\pi^*g$ of the left-invariant Riemannian metric $g$ on the Heisenberg group $\sfH$ given by 
\be \label{eq:gsfH}
g=\delta_{ij}\, \Theta^i \otimes \Theta^j \ ,
\ee
which can be written as
\be \nonumber
g=
\begin{pmatrix}
1 & 0 & 0 \\
0 & 1 & -m\,x \\
0 & -m\,x & 1+ (m \, x)^2
\end{pmatrix}
\ee
in the holonomic coframe $\{ \de x, \de y, \de z\}$.\footnote{Here we slightly abuse notation as before and identify the coordinates on the Heisenberg group with the pulled back coordinates to $T^*\sfH.$ The left-invariant 1-forms on $\sfH$ are identified with the left-invariant 1-forms $\Theta^i$ in \eqref{thdeltaleft}.}
Then the basis $\{Z_n, \tilde{Z}^n \}$ of left-invariant vector fields on $T^*\sfH$ is orthonormal with respect to $\cH^{\textrm{\tiny L}}$. The fiberwise metric on $L_-$ is given by
$$
g_-(X_-,Y_-)=g_+^{-1}\big(\eta^{\textrm{\tiny L}}{}^\flat(X_-), \eta^{\textrm{\tiny L}}{}^\flat(Y_-)\big)
$$ 
for all $X_-,Y_-\in\mathsf{\Gamma}(L_-)$. Then the compatible generalized metric induced by the horizontal lift of $g$ can be written as 
\be
\cH^{\textrm{\tiny L}} = \delta_{mn}\, \Theta^m \otimes \Theta^n + \delta^{mn}\, \tilde{\Theta}_m \otimes \tilde{\Theta}_n \ , \label{gendtt}
\ee
and it is easy to show that, together with $(K^{\textrm{\tiny L}}, \eta^{\textrm{\tiny L}}, \omega^{\textrm{\tiny L}}),$ it defines a left-invariant Born geometry on $T^* \sfH.$

The coordinate identifications defining the global structure of the doubled twisted
torus are obtained via the left action of a discrete cocompact subgroup $\sfD_\sfH(\IZ)$ of $\sfD_\sfH=T^*\sfH.$ Hence the left-invariant para-Hermitian structure of $T^*\sfH$ remains well-defined on the doubled twisted torus 
$$
M_\sfH= T^*\sfH\, \big\slash\, \sfD_\sfH(\IZ) \ .
$$ 
A generic element $\xi \in \sfD_\sfH(\IZ)$ is given by
\be
\xi=
\begin{pmatrix}
1 & m\,\alpha & \beta & 0 & 0 & \tilde{\delta} \\
0 & 1 & \delta & 0 & 0 & -\tilde{\beta} \\
0 & 0 & 1 & 0 & 0 & 0 \\
0 & -m\,\tilde{\beta} & \tilde{\alpha}-m\,\delta\,\tilde{\beta} & 1 & m\,\alpha & \beta+\frac{1}{2}\,m\,\tilde{\beta}^2 \\
0 & 0 & 0 & 0 & 1 & \delta \\
0& 0 & 0 & 0 & 0 & 1
\end{pmatrix}
 \ , \nonumber
\ee
where $\alpha, \beta, \delta, \tilde{\alpha}, \tilde{\beta},
\tilde{\delta}\in \mathbb{Z}$. The group
action on coordinates induced by the equivalence relation $\gamma\sim
\xi\, \gamma$, which defines the quotient $M_\sfH=T^*\sfH / \sfD_\sfH(\IZ)$, leads to the simultaneous identifications
\begin{align}
x & \sim x+ \alpha \ , \quad y \sim y+m\,\alpha\, z + \beta \qquad
    \mbox{and} \qquad z \sim z+ \delta \ , \nonumber \\[4pt]
\tilde{x}&\sim \tilde{x} + m\,\delta\, \tilde{y} + \tilde{\alpha} \ , \quad
           \tilde{y} \sim \tilde{y}+ \tilde{\beta} \qquad \mbox{and}
           \qquad \tilde{z} \sim \tilde{z} -m\,\alpha\, \tilde{y}
           +\tilde{\delta} \ . \label{iden}
\end{align}
This identifies $M_\sfH$ as a $\IT^2\times \IT^2$-bundle over
$\IS^1\times \IS^1$, with base coordinates $(x,\tilde x)$. The left-invariant 1-forms \eqn{thdeltaleft}, together with the left-invariant vector fields \eqn{zdist} and \eqn{tilddist}, are invariant under the
identifications \eqn{iden}, hence they globally descend to the quotient $M_\sfH=T^*\sfH/\sfD_\sfH(\IZ).$ This also means that the left-invariant para-Hermitian structure on $T^*\sfH$ descends to a para-Hermitian structure on $M_\sfH$,  which we denote by $(K, \eta)$. Hence the corresponding eigenbundles $L_+^{\IZ}$ and $L_-^{\IZ}$ of $K$
are both integrable, since their local generators satisfy the Lie
bracket relations \eqn{lieth}; their integral foliations are
characterized, respectively, by the Heisenberg nilmanifold
$\IT_\sfH=\sfH/\sfH(\IZ)$ and the $3$-torus $\IT^3=\mathbb{R}^3/\IZ^3$
as leaves. This is an example of a transversely parallelizable
foliation~\cite{Mrcun2003}, which implies that the leaf holonomy is
trivial for all leaves, and hence $M_\sfH$ admits the structure of a
Riemannian foliation.

Thus the Drinfel'd double structure here, in the polarization given by a Manin triple, induces a para-Hermitian structure $(K, \eta)$ on $M_\sfH$. Furthermore, the left-invariant Riemannian metric on $T^* \sfH$ descends to a Riemannian metric on $M_\sfH$, that we denote by $\cH$, which is still compatible with the para-Hermitian structure induced by the Drinfel'd double. We call $(K, \eta, \cH)$ the \emph{induced} Born geometry on the doubled twisted torus $M_\sfH.$ In the following we demonstrate how to recover the well-known conventional sigma-model descriptions, in the framework of the Lie algebroid gaugings of Born sigma-models, from the various polarizations of the doubled twisted torus, thus reproducing the results of~\cite{Hull2009} from a different geometric perspective.

\medskip

\subsection{Nilmanifold Polarization} \label{nilmanifold} ~\\[5pt]
We will first describe the Born structure defining the polarization whose leaf space is the Heisenberg nilmanifold $\IT_\sfH$. For this, we note that the doubled twisted torus admits the structure of a principal torus bundle, which is inherited from the vector bundle structure of the cotangent bundle $\pi:T^*\sfH\to\sfH$: the typical fiber is $\IT^3$ acting freely on $M_\sfH,$ giving as base space $M_\sfH / \IT^3 \simeq \IT_\sfH$, i.e. there is a principal $\IT^3$-bundle~\cite{Hull2009}
$$
\bar{\pi}: M_\sfH \longrightarrow \IT_\sfH \ .
$$
The almost para-Hermitian structure induced by this $\IT^3$-action is given by an isotropic splitting 
$$
\bar s: \bar{\pi}^*(T\IT_\sfH) \longrightarrow TM_\sfH
$$ 
with respect to the split signature metric $\eta$ and the fundamental
2-form $\omega$ of the short exact sequence of vector bundles
$$
0 \longrightarrow L_{\tt v}(M_\sfH) \longrightarrow TM_\sfH \longrightarrow \bar{\pi}^*(T \IT_\sfH) \longrightarrow 0 \ . $$
A compatible generalized metric $\cH=g_++g_-$ is induced by the horizontal lift of the natural Riemannian metric $g$ on the Heisenberg nilmanifold $\IT_\sfH$ which descends from \eqref{eq:gsfH}.
Having introduced the Born structure on $M_\sfH$ associated with this
nilmanifold polarization, there is a straightforward definition of a
corresponding Born sigma-model, as discussed in Section~\ref{BSM}, by considering a harmonic map $\phi: \Sigma \rightarrow M_\sfH$ from a closed string worldsheet $\Sigma$ to the doubled twisted torus.
This sigma-model is characterized by the pair $(\cH, \omega)$ with $\cH$ given by \eqref{gendtt} and $\omega$ given by~\eqref{eq:omegadtt}.

The principal bundle structure of $M_{\sfH}$ is crucial for describing the generalized isometry for the gauging of the Born sigma-model $\cS(\cH, \omega)$. The principal bundle 
$\bar{\pi}: M_{\sfH}\rightarrow \IT_\sfH$ with fiber $\IT^3$ admits a bundle-like metric given by \eqref{gendtt}.
Therefore we choose the vertical distribution $L_-^{\IZ}=L_{\tt v}(M_\sfH)$ to be the Lie algebroid on $M_\sfH$ of generalized isometries of our sigma-model. With this choice, it is easy to see that the generalized isometry conditions for a Born sigma-model are satisfied:
$$ 
\pounds_{{X_-}}\, g_+=0 \qquad \mbox{and} \qquad
\pounds_{{X_-}}\, \omega =0 \ , 
$$ 
for all ${X_-}\in \mathsf{\Gamma}(L_-^{\IZ}).$ 

The Lie algebroid gauging discussed in Section \ref{BSMGau} can then be applied to this case. 
The Born sigma-model has a rigid symmetry under the left action of the doubled group $\sfD_\sfH$ on the coset $M_\sfH$. For this polarization we write it in the usual way as
$$
\cS[\phi]=\frac{1}{4}\, \int_\Sigma\, \Big(\delta_{ij}\, \bar\Theta^i
\wedge \star\, \bar\Theta^j + \delta^{ij}\, \bar{\tilde{\Theta}}_i
\wedge \star\, \bar{\tilde{\Theta}}_j\Big) +
\frac{1}{2}\,\int_{\Sigma}\, \bar\Theta^i \wedge \bar{\tilde{\Theta}}_i \ ,
$$
and we gauge it along the $\IT^3$ leaves of the foliation, i.e. we
introduce covariant derivatives on $\Sigma$ only for the coordinates $(\tilde{x}, \tilde{y}, \tilde{z})$
adapted to the leaves. The
procedure is formally identical to that described in
Section~\ref{smCTQ}: the self-duality constraint $\delta\cS[\phi, A]/\delta A_i=0$ leads to the reduced sigma-model 
$$
S[\phi]=\frac{1}{2}\,\int_\Sigma\, \delta_{ij}\, \bar\Theta^i \wedge \star\, \bar \Theta^j 
$$
for the nilmanifold $\IT_\sfH,$ with the image of $\phi$ projected to the leaf space.
In other words, the Riemannian submersion described by the gauging is given by the bundle projection 
$$
\bar{\pi}: (M_\sfH, \cH, \omega) \longrightarrow (M_\sfH / \IT^3, g, 0)
$$
where $M_\sfH / \IT^3 \simeq \IT_\sfH$, with background metric
\be 
g = \delta_{ij} \, \Theta^i \otimes \Theta^j= \de x \otimes \de x +
(\de y - m\,x\, \de z)\otimes  (\de y - m\,x\, \de z) + \de z \otimes
\de z \ , \label{metrlef}
\ee
and vanishing Kalb-Ramond field $b=0$; this reproduces the description
of~\cite{Hull2009} which was obtained using a different procedure (see
also~\cite{SzMar}). We have thereby obtained the natural background on
the Heisenberg nilmanifold from the gauging of a generalized isometry
induced by the foliation given by the fibers of the principal
$\IT^3$-bundle $M_{\sfH}.$ In our framework, we deal with globally defined sections of tensor bundles over $M_{\sfH}$ and the reduced metric $g$ on the quotient $\IT_\sfH$ is still a globally defined section.

\medskip

\subsection{Strongly T-Dual Sigma-Model with $H$-Flux}\label{sec:Hflux} ~\\[5pt]
We shall now show how the expected geometric T-dual background with NS--NS
$H$-flux emerges within our framework. Again we will closely follow
\cite{Hull2009, SzMar}. For this, we pull back the Born structure $(K,
\eta, \cH)$, characterizing the nilmanifold polarization, by an
$\sfO(3,3)(M_\sfH)$-transformation $\vartheta$ to get a polarization
which is defined as follows. We consider again a free $\IT^3$-action on $M_\sfH$, which now however gives a principal $\IT^3$-bundle 
$$
\bar{\pi}\rq{}: M_\sfH \longrightarrow \IT^3 \ .
$$ 
The short exact sequence of vector bundles induced by this principal bundle is
\be \label{eq:shortexHflux}
0 \longrightarrow L_{\tt v}'(M_\sfH) \longrightarrow TM_\sfH \longrightarrow \bar{\pi}'{}^*(T \IT^3) \longrightarrow 0
\ee
and we choose an isotropic splitting 
$$
\bar s':\bar{\pi}'{}^*(T \IT^3) \longrightarrow TM_\sfH 
$$ 
with respect to $\eta.$ This defines an almost para-Hermitian structure $(K',\eta)$ on $M_\sfH.$ The compatible generalized metric $\cH\rq{}$ is defined by the horizontal lift of the standard Euclidean metric on the torus $\IT^3.$ 

Locally we may describe this polarization by the Lie algebra 
\be
[Z'_x, Z'_z]= m\, \tilde{Z}^{\prime\,y} \ , \quad [Z'_x, Z'_y]= -m\, \tilde{Z}^{\prime\,z} \qquad \mbox{and} \qquad [Z'_z, Z'_y]= m\, \tilde{Z}^{\prime\,x} \label{t3lie}
\ee
of generators for the left-invariant vector fields on $M_\sfH$, with all other brackets vanishing; here we rearranged the
generators of the Drinfel'd double group $\sfD_\sfH=T^*\sfH$, so that $\{Z'_m, \tilde{Z}^{\prime\,m} \}$ defines a new frame for the sections of $TM_{\sfH}$. This defines a new Born structure $(K', \eta,\cH').$ It may be regarded as a different choice of horizontal sub-bundle of $TM_{\sfH},$ since the vertical distribution spanned by $\{ \tilde{Z}^m \}$ remains unchanged so the fibers are still 3-tori $\IT^3.$ 
The new eigenbundles are thus spanned by globally defined vector fields
\begin{align*}
Z'_x &= \frac{\partial}{\partial x} \ , \quad Z'_y= \frac{\partial}{\partial y} -m\,x\, \frac{\partial}{\partial \tilde{z}} \qquad \mbox{and} \qquad Z'_z=\frac{\partial}{\partial z} - m\,y\, \frac{\partial}{ \partial \tilde{ x}} + m\,x\, \frac{\partial}{\partial \tilde{y}} \ , \\[4pt]
\tilde{Z}^{\prime\,x} &= \frac\partial{\partial\tilde x} \ , \quad
  \tilde{Z}^{\prime\,y} = \frac\partial{\partial\tilde y} \qquad
  \mbox{and} \qquad \tilde{Z}^{\prime\,z}
  =\frac\partial{\partial\tilde z} \ .
\end{align*}
The dual 1-forms are given explicitly by
\begin{align*}
\Theta^{\prime\,x} &= \de x \ , \quad \Theta^{\prime\,y} =\de y \qquad
  \mbox{and} \qquad \Theta^{\prime\,z} =\de z \ , \\[4pt]
\tilde{\Theta}'_x &= \de \tilde{x} +  m\,y\, \de z \ ,
                    \quad
\tilde{\Theta}'_y =\de \tilde{y} + m\,z\, \de x \qquad \mbox{and}
                   \qquad
\tilde{\Theta}'_z =\de \tilde{z} +m\,x\, \de y 
                     \ .
\end{align*}
Analogously to the discussion in \cite{Hull2009}, the action of the Lie algebroid
represented by $\{\tilde{Z}^{\prime\,i} \}$ on $M_\sfH$ generates an action of
$\IT^3$ on $M_\sfH$; this highlights the existence of the structure of
$M_\sfH$ as a principal $\IT^3$-bundle given by
$\bar{\pi}': M_\sfH \rightarrow \IT^3,$ with $M_\sfH / \IT^3 \simeq
\IT^3.$ Such a polarization is given by an isotropic splitting of the
canonical short exact sequence of vector bundles associated with any
fiber bundle, so that 
$$
T M_\sfH=L'_{\tt h}(M_\sfH)\oplus L'_{\tt v}(M_\sfH) \ .
$$

There is a corresponding Born sigma-model $\cS'(\cH', \omega')$ defined
by the new Born structure on $M_\sfH$ obtained from the action of
$\vartheta \in {\sfO}(3,3)(M_\sfH)$ on the nilmanifold polarization, with
$$ 
\cH' = \delta_{ij}\,\Theta^{\prime\,i} \otimes \Theta^{\prime\,j} +
\delta^{ij}\,\tilde{\Theta}'_i \otimes \tilde{\Theta}'_j \qquad
\mbox{and} \qquad \omega'= \Theta^{\prime\,i} \wedge \tilde{\Theta}'_i\ .
$$
The metric $\cH'$ is a bundle-like metric obtained from the horizontal
lift $g'_+=\delta_{ij}\,\Theta^{\prime\,i} \otimes \Theta^{\prime\,j}$
of the standard Euclidean metric on $\IT^3.$ The fundamental 2-form
$\omega'$ has non-trivial monodromies under the identification $x\sim
x+1$, $y\sim y+1$ and $z\sim z+1$, under which it changes by $B_+$-transformations in the
group of large diffeomorphisms of the doubled twisted torus $M_\sfH$. Comparing the corresponding field strength
$$
\cK'=\de\omega'=3\,m\,\de x\wedge\de y\wedge\de z
$$
to \eqref{eq:cKgroup} shows that the only non-vanishing flux in this new
polarization is an $H$-flux $H'_{xyz}=m$.

The Born sigma-model then reads
$$
\cS'[\phi]=\frac{1}{4}\, \int_\Sigma\, \Big(\delta_{ij} \,
\bar\Theta^{\prime\,i} \wedge \star\, \bar\Theta^{\prime\,j} + \delta^{ij}\,
\bar{\tilde{\Theta}}'_i \wedge \star\, \bar{\tilde{\Theta}}'_j \Big) +
\frac{1}{2}\, \int_\Sigma\, \bar\Theta^{\prime\,i} \wedge \bar{\tilde{\Theta}}'_i \ .
$$ 
In this case we consider again the sub-bundle $L_{\tt v}'(M_\sfH)$ as Lie
algebroid for the gauging. It is easy to show that 
$$
\pounds_{{X}_{\tt v}'}\, g'_+=0 \qquad \mbox{and} \qquad
\pounds_{{X}_{\tt v}'}\,\omega'(Y_{\tt h}, Z_{\tt h})=0\ ,
$$
for all ${X}_{\tt v}'\in \mathsf{\Gamma}(L_{\tt v}'(M_\sfH))$ and $Y_{\tt h}, Z_{\tt h} \in \mathsf{\Gamma}(L'_{\tt h}(M_\sfH)).$
Thus the gauging can be implemented by covariantizing the pullback
differentials of the leaf coordinates $(\tilde{x}, \tilde{y},
\tilde{z})$ in the usual way and the reduced sigma-model on 
the leaf space $M_{\sfH} / \IT^3\simeq \IT^3$ is given by $S'(g',b')$
with the background
$$
g'= \de x \otimes \de x + \de y \otimes \de y + \de z \otimes \de z \
,
$$ 
and
$$ 
b'=-m\,(x\, \de y \wedge \de z + y\, \de z \wedge \de x + z\, \de x
\wedge \de y) \ .
$$
The Kalb-Ramond field $b'$ is only locally defined on $\IT^3$, but the
topological term of the sigma-model can be defined by a Wess-Zumino
extension using
the corresponding $H$-flux
$$
H'=\de b' = -3\,m\,\de x\wedge\de y\wedge\de z \ ,
$$ 
which is a globally defined integral 3-form on $\IT^3.$
In other words, we obtain a reduced sigma-model on the leaf space
described by the Riemannian submersion 
$$
\bar{\pi}': (M_{\sfH}, \cH', \omega') \longrightarrow (M_\sfH/\IT^3, g', b')\
,
$$ 
with $M_\sfH/\IT^3\simeq\IT^3$,
which is again the projection map of the principal $\IT^3$-bundle
associated with this polarization. This is an explicit example of how
standard T-dual geometric backgrounds, in this case $(\IT_\sfH, g,
b=0)$ and $(\IT^3, g', b')$, emerge from our formalism in the spirit
of Section~\ref{specT}.

\medskip

\subsection{T-Fold Polarization} ~\\[5pt]
We shall now describe the standard {T-fold} background which is T-dual
to the $\IT_\sfH$ and $\IT^3$ backgrounds within our formalism. So far
we considered the two natural $\IT^3$-actions on the doubled twisted
torus $M_\sfH.$ Let us now discuss what happens when we try to use the
leaf space of the $\IT_\sfH$ foliation for the Lie algebroid action. Here
$M_\sfH$ is again foliated by both $\IT^3$ and $\IT_\sfH,$ but the
distributions identified with their tangent bundles now have opposite
eigenvalues to those of the nilmanifold polarization from
Section~\ref{nilmanifold}. In this case we cannot work with any
fibration, since $M_\sfH$ does not admit the structure of a principal
$\IT_\sfH$-bundle, similarly to the discussion of~\cite{Hull2009}. 
We shall describe how to define the new para-Hermitian structure
starting from the natural left-invariant Riemannian metric $\cH$ in
this polarization. 

For instance, let us consider the global coframe
\begin{align}
\Theta^x & = \de x \ , \quad \Theta^y = \de y-m\,x\, \de \tilde{z} \qquad \mbox{and} \qquad \Theta^z = \de z+m\,x\, \de \tilde{y} \ , \nonumber \\[4pt]
\tilde{\Theta}_{x}&= \de \tilde{x}-m\,\tilde{z}\,\de \tilde{y} \ , \quad \tilde{\Theta}_y = \de \tilde{y} \qquad \mbox{and} \qquad \tilde{\Theta}_z =\de \tilde{z} \ , \nonumber
\end{align}
descending from the left-invariant 1-forms on $T^*\sfH,$ where
$(\tilde{x}, \tilde{y}, \tilde{z})$ are local coordinates adapted to
the $\IT_\sfH$ foliation. Then the Riemannian metric $\cH$ descending
from the left-invariant Riemannian metric on $T^*\sfH$ is given in
this local basis by
$$
\cH= \delta_{ij}\, \Theta^i \otimes \Theta^j +
\delta^{ij}\,\tilde{\Theta}_i \otimes \tilde{\Theta}_j \ .
$$ 
To define the new almost para-Hermitian structure associated with the
T-fold polarization, we consider the natural foliations of the doubled
twisted torus, just as we did in Section~\ref{nilmanifold}. In that
case, we saw that there is a natural almost para-Hermitian structure
arising from the principal $\IT^3$-bundle structure of $M_\sfH,$ since
there we were interested in the quotient with respect to the $\IT^3$ foliation. We now wish to discuss the outcome of the other possible quotient, given by the $\IT_\sfH$ foliation.

Since there is no other natural fiber bundle structure in this case,
we use the left-invariant metric $\cH$ to define the splitting we
need. For this, we consider the short exact sequence of vector bundles
associated with the foliation $\cF_\sfH$ by $\IT_\sfH$: 
$$
0 \longrightarrow T \cF_\sfH \longrightarrow TM_\sfH \longrightarrow N
\IT_\sfH \longrightarrow 0 \ .
$$
We then choose an isotropic splitting 
$$
s: N \IT_\sfH \longrightarrow TM_\sfH
$$ 
whose image ${\rm Im}(s)= T\cF_\sfH^\perp $ is the orthogonal
complement of the tangent bundle $T\cF_\sfH$ with respect to $\cH$; it
is maximally isotropic with respect to the split signature metric $\eta$
inherited from the Drinfel\rq{}d  double structure on $T^*\sfH$.

Let $\{\tilde Z^{\prime\prime\,i}\}=\{ \tilde{Z}^{\prime\prime\,x} , \tilde{Z}^{\prime\prime\,y},
\tilde{Z}^{\prime\prime\,z} \}$ be a basis of left-invariant vector
fields on $\IT_\sfH$ with dual 1-forms $\{\tilde\Theta''_i\}=\{
\tilde{\Theta}''_x, \tilde{\Theta}''_y,
\tilde{\Theta}''_z \}.$ Then ${\rm Im}(s)$ is locally generated
by the vector fields
$$ 
Z''_i = \frac{\partial}{\partial x^i } +
\frac12\,N''_{ij}\, \tilde{Z}^{\prime\prime\,j} \ ,
$$
where 
\be
N\rq{}\rq{}=\frac{2\,m\,x}{1+(m\,x)^2}\,
\begin{pmatrix}
0 & 0 & 0 \\
0 & 0 & -1 \\
0 & 1 & 0
\end{pmatrix}
\ . \nonumber
\ee
This basis is completed by the vector fields $\{
\tilde{Z}^{\prime\prime\,j} \}$ to form a local frame for $\mathsf{\Gamma}(TM_\sfH).$ Then the dual coframe is given by
$$
\Theta^{\prime\prime\,i} = \de x^i \qquad \mbox{and} \qquad
\tilde{\Theta}''_i= \tilde{\Theta}_i - \tfrac12\,N''_{ji}\,\de x^j
$$ 
and the Riemannian metric $\cH''$ in this coframe has the form
\be \nonumber
\cH'' =
\begin{pmatrix}
g''_+  & 0 \\
0 & g''_-
\end{pmatrix}
\ ,
\ee
where the local expressions for $g''_+$ and $g''_-$ are
\begin{equation} \nonumber
g''_+ =\frac{1}{1+(m\,x)^2}\,
\begin{pmatrix}
1+(m\,x)^2 & 0 & 0 \\
0 & 1 & 0 \\
0 & 0 & 1 \\
\end{pmatrix} \qquad \mbox{and} \qquad
g''_-=
\begin{pmatrix}
1 & 0 & 0 \\
0 & 1+(m\,x)^2 & 0 \\
0 & 0 & 1+(m\,x)^2
\end{pmatrix}
\ , 
\nonumber
\end{equation}
with $g''_+$ written in the coframe $\{\Theta^{\prime\prime\,i}
\}=\{\de x,\de y,\de z\}$,
and $g''_-$ written in the coframe $\{ \tilde{\Theta}''_i \}$
which is globally defined on the Heisenberg nilmanifold
$\IT_\sfH$. The fiberwise Riemannian metric $g''_+$ on $T\cF_\sfH^\perp=
{\rm Im}(s)$ is transverse invariant,
i.e. $\pounds_{{X}_-}\,g''_+=0,$ for all ${X}_-\in \mathsf{\Gamma}(T\cF_\sfH)$, and  ${\rm Ker}(g''_+)={\rm Im}(s)$ by definition of the splitting $s.$ Therefore $(M_\sfH,g''_+, \IT_\sfH)$ is a Riemannian foliation.

In this local frame, we can use the functions $N''_{ij}$ to also write down the local
decomposition of the fundamental 2-form
$\omega''$, as
discussed in Section~\ref{BSMGau}. Comparing its field strength
$\cK''=\de\omega''$ with \eqref{eq:cKgroup} identifies both non-vanishing
$Q$-flux and $H$-flux in this polarization:
\be \label{eq:QHTfold}
Q^{\prime\prime\,yz}{}_x=-m \qquad \mbox{and} \qquad H''_{xyz} =
\frac{m\,\big(1-(m\,x)^2\big)}{\big(1+(m\,x)^2\big)^2} \ .
\ee
The local forms of both the compatible generalized metric $\cH''$ and
the fundamental 2-form
$\omega''$ should be understood in a continuation of $x\in[0,1)$ to a
covering space $\IR_x$ of the $x$-circle $ \mathbb{S}_x^1$. Under the
identification $x\sim x+1$ describing the covering map $\IR_x\to \mathbb{S}_x^1$, they change by an
$\sfO(3,3)(M_\sfH)$-transformation in the group of large
diffeomorphisms of the doubled twisted torus $M_\sfH$.

The Lie algebroid to be used for the gauging is $T\cF_\sfH.$ It is
easy to show that the Born sigma-model $\cS''(\cH'', \omega'')$
satisfies the generalized isometry conditions with respect to $\{
\tilde{Z}^{\prime\prime\, m}  \},$ i.e. the Lie derivative of
$\omega\rq{}\rq{}$ also has vanishing orthogonal component, with
respect to the $\IT_\sfH$ foliation $\cF_\sfH$, along any vector field on $\IT_\sfH$. 
This yields the Riemannian submersion 
$$
{\mathit\Pi}'': (M_{\sfH}, \cH'', \omega'') \longrightarrow (M_{\sfH}/
\IT_\sfH, g'', b'')
$$
with 
\begin{equation}
g'' =\frac{1}{1+(m\,x)^2}\,
\begin{pmatrix}
1+(m\,x)^2 & 0 & 0 \\
0 & 1 & 0 \\
0 & 0 & 1 \\
\end{pmatrix} \qquad \mbox{and} \qquad
b'' =\frac{m\,x}{1+(m\,x)^2}\,
\begin{pmatrix}
0 & 0 & 0 \\
0 & 0 & -1 \\
0 & 1 & 0
\end{pmatrix}
\nonumber
\end{equation}
in the holonomic basis for 1-forms $\{ \de x, \de y, \de z\}.$ In this
case we recover the standard non-geometric background $(g'', b'')$
T-dual to the previous $\IT_\sfH$ and $\IT^3$ backgrounds, with
$H$-flux $H''=\de b''$ as in \eqref{eq:QHTfold} (up to a sign). Here the quotient space
$M_{\sfH}/ \IT_\sfH$ is an example of a leaf space which does
not have the structure of a smooth manifold. In particular, since
$(M_\sfH, g''_+, \IT_\sfH)$ is a Riemannian foliation with compact
leaves, the leaf space $M_\sfH / \IT_\sfH$ is an orbifold
\cite{Boyer2007, Molino, Mrcun2003}. Thus the quotient map ${\mathit\Pi}\rq{}\rq{}: M_\sfH \rightarrow M_{\sfH}/ \IT_\sfH$ is an orbifold submersion.

\medskip

\subsection{Essentially Doubled Polarization} ~\\[5pt]
Similarly to the T-fold polarization, in the essentially doubled
polarization we consider the principal $\IT^3$-bundle $\bar{\pi}\rq{}:
M_\sfH \rightarrow \IT^3$ and try to describe the (now disallowed) gauging with respect to the distribution ${\rm Im}(\bar s')$ given by the isotropic splitting of the short exact sequence of vector bundles \eqref{eq:shortexHflux} induced by the principal fibration with respect to the split signature metric $\eta.$ In other words, in this polarization there is only one integrable distribution on $M_{\sfH}$ whose induced foliation has leaves $\IT^3,$ while ${\rm Im}(\bar s')$ is a non-involutive sub-bundle of $TM_\sfH$. Then the allowed gauging along the foliation would simply give the naive T-dual of the polarization with NS--NS $H$-flux from Section~\ref{sec:Hflux}, obtained locally by interchanging the coordinates $(x,y,z)$ and $(\tilde x,\tilde y,\tilde z)$; indeed, it is only from this perspective that the essentially doubled background satisfies the strong constraint of double field theory, as discussed by~\cite{Jonke2018}.

We may follow the same steps of Section \ref{Rfluxcot} and discuss why
it is not possible to recover a conventional reduced sigma-model even
locally. Again this relies on the fact that the vector sub-bundle
$L_-={\rm Im}(\bar s')$ is non-integrable in this polarization, so the
only foliation present is given by $L_+= L'_{\tt v}(M_\sfH),$ which gives the
naive T-dual of the sigma-model for the spacetime $\IT^3$ with $H$-flux.
This polarization is characterized by globally defined coframes 
\begin{align*}
\Theta^{R \,x} =\de x - m\, \tilde{z}\, \de \tilde{y} \ , \quad \Theta^{R\,y} =\de y  - m \, \tilde{x} \, \de \tilde{z}\qquad
  \mbox{and} \qquad \Theta^{R\,z} =\de z + m \, \tilde{x}\, \de \tilde{y}
\end{align*}
for $L_+^*,$ and
\be \nonumber
\tilde{\Theta}^R_x = \de \tilde{x} \ ,
                    \quad
\tilde{\Theta}^R_y =\de \tilde{y}  
                   \qquad \mbox{and} \qquad
\tilde{\Theta}^R_z =\de \tilde{z} 
\ee
for $L_-^*.$
The compatible generalized metric is given
as usual by 
$$
\cH^R= \delta_{ij}\, \Theta^{R\, i} \otimes \Theta^{R\, j}+
\delta^{ij}\,\tilde{\Theta}^R_i \otimes \tilde{\Theta}^R_j
$$
and the fundamental 2-form is
$$
\omega^R= \Theta^{R\, i}\wedge \tilde{\Theta}^R_i \ , 
$$
with field strength
$$
\cK^R=\de\omega^R = 3\,m\,\de\tilde x\wedge\de\tilde y\wedge\de\tilde
z \ .
$$
Comparing with \eqref{eq:cKgroup} thus shows that this polarization is
characterized by a single non-vanishing $R$-flux
\be \label{eq:Rflux}
R^{xyz} = m \ .
\ee

The corresponding Born sigma-model then reads
$$
\cS^R[\phi]= \frac{1}{4}\, \int_\Sigma\, \Big( \delta_{ij}\,
\bar\Theta^{R\, i} \wedge \star\, \bar\Theta^{R\, j} + \delta^{ij}\,
\bar{\tilde{\Theta}}^R_i \wedge \star\, \bar{\tilde{\Theta}}^R_j\Big)
+ \frac12\,\int_\Sigma\, \bar\Theta^{R\, i} \wedge
\bar{\tilde{\Theta}}^R_i \ .
$$
As discussed in Section \ref{Rfluxcot},  there is no foliation in this
case whose adapted local coordinates are $(\tilde{x}, \tilde{y},
\tilde{z}),$ and therefore we can only force the gauging of these
naive dual coordinates (with no geometric interpretation) by introducing ``covariantized'' maps
$$
\De^\cA \bar{\tilde{x}}_i= \de \bar{\tilde{x}}_i - \bar\cA_i \ ,
$$
where $\cA_i$ are the components of the tensor induced by the vector
bundle morphism $\bar{\cA}: T\Sigma \rightarrow \phi^*L_-$ covering
the identity. The ``gauged'' Born sigma-model is written in the usual way
and eliminating the auxiliary fields $\cA_i$ through their equations
of motion $\delta\cS^R[\phi, \cA]/\delta\cA_i=0$ gives the
self-duality constraints
$$
\delta_{ij}\, R^{ikl}\, \bar{\tilde{x}}_k\, \star\, \de\bar x^j +
\delta_{ij}\, R^{ikl}\, \bar{\tilde{x}}_k\, R^{jmn}\,
\bar{\tilde{x}}_m\, \star\, \De^\cA \bar{\tilde{x}}_n +
\de\bar x^l + R^{ikl}\, \bar{\tilde{x}}_k\, \De^\cA
\bar{\tilde{x}}_i=0\ ,
$$
where the components of the antisymmetric $R$-flux structure constants
are given by \eqref{eq:Rflux} and in this equation there is no sum
over the indices $k,m.$ Similarly to~\cite{Hull2009}, solving this constraint for
$\De^\cA\bar{\tilde x}_i$ eliminates all dependence on $\de\bar{\tilde x}_i$, but this does not give a
local reduced sigma-model which is independent of the naive dual
coordinates $\tilde x_i$. Thus the only possible gauging leads to the
naive T-dual of the sigma-model for the 3-torus $\IT^3$ with $H$-flux, i.e. the reduced sigma-model $S^R(g^R,b^R)$ obtained from writing the reduced
sigma-model $S'(g',b')$ of Section~\ref{sec:Hflux} locally in the
coordinates $(\tilde{x}, \tilde{y}, \tilde{z})$ instead of $(x,y,z)$,
so that the background is now given by
$$
g^R= \de\tilde x \otimes \de\tilde x + \de\tilde y \otimes \de\tilde y + \de\tilde z \otimes \de\tilde z \
,
$$ 
and
$$ 
b^R=-m\,(\tilde x\, \de\tilde y \wedge \de\tilde z + \tilde y\,
\de\tilde z \wedge \de\tilde x + \tilde z\, \de\tilde x
\wedge \de\tilde y) \ .
$$

\bigskip

\bibliographystyle{ieeetr}

\end{document}